\renewcommand\footnotetextcopyrightpermission[1]{} % removes footnote with conference information in first column
\theoremstyle{plain}
\newtheorem{heuristic}{Heuristic}
\theoremstyle{definition}
\newtheorem{proof}{Proof}
\algnewcommand\algorithmicinput{\textbf{INPUT:}}
\algnewcommand\INPUT{\item[\algorithmicinput]}
\algnewcommand\algorithmicoutput{\textbf{OUTPUT:}}
\algnewcommand\OUTPUT{\item[\algorithmicoutput]}
\newcommand{\pluseq}{\mathrel{+}=}
\begin{document}

\title{The Flexible Group Spatial Keyword Query}

\author{Sabbir Ahmad}
\affiliation{%
  \institution{Dept of Computer Science \& Eng\\Bangladesh Univ of Eng \& Tech}
  \city{Dhaka} 
  \state{Bangladesh} 
}
\email{ahmadsabbir@cse.buet.ac.bd}

\author{Rafi Kamal}
\affiliation{%
  \institution{Dept of Computer Science \& Eng\\Bangladesh Univ of Eng \& Tech}
  \city{Dhaka} 
  \state{Bangladesh} 
}
\email{rafikamalb@gmail.com}

\author{Mohammed Eunus Ali}
\affiliation{%
  \institution{Dept of Computer Science \& Eng\\Bangladesh Univ of Eng \& Tech}
  \city{Dhaka} 
  \state{Bangladesh} 
}
\email{eunus@cse.buet.ac.bd}

\author{Jianzhong Qi}
\affiliation{%
  \institution{University of Melbourne}
  \city{Melbourne} 
  \state{Australia} 
}
\email{jianzhong.qi@unimelb.edu.au}

\author{Peter Scheuermann}
\affiliation{%
  \institution{Northwestern University}
  \city{Illinois} 
  \state{USA} 
}
\email{peters@eecs.northwestern.edu}

\author{Egemen Tanin}
\affiliation{%
  \institution{University of Melbourne}
  \city{Melbourne} 
  \state{Australia} 
}
\email{etanin@unimelb.edu.au}

\renewcommand{\shortauthors}{S. Ahmad et al.}

\begin{abstract}
We present a new class of service for location based social networks, called the Flexible Group Spatial Keyword Query, which enables a group of users to collectively find a point of interest (POI) that optimizes an aggregate cost function combining both spatial distances and keyword  similarities. In addition, our query service allows users to consider the trade-offs between obtaining a sub-optimal solution for the entire group and obtaining an optimimized solution but only for a subgroup.

We propose algorithms to process three variants of the query: (i)  the group nearest neighbor with keywords query, which finds a POI that optimizes the aggregate cost function for the whole group of size $n$, (ii) the subgroup nearest neighbor with keywords query, which finds the optimal subgroup and a POI that optimizes the aggregate cost function for a given subgroup size $m$ ($m \leq n$), and (iii) the multiple subgroup nearest neighbor with keywords query, 
which finds optimal subgroups and corresponding POIs for each of the subgroup sizes in the range [$m$, $n$]. We design query processing algorithms based on branch-and-bound and best-first paradigms. 
Finally, we provide theoretical bounds and conduct extensive experiments with two real datasets which verify the effectiveness and efficiency of the proposed algorithms.
\end{abstract}

%\keywords{ACM proceedings, \LaTeX, text tagging}

\maketitle

\section{Introduction}\label{sec:intro}

%Considering only the spatial dimension, the work that most closely resembles ours is the Consensus Query~\cite{ali2015consensus}. Given a minimum subgroup size $n^\prime$ and a set of $n$ query points, the consensus query finds pairs of the form $(o_m, sg_m)$ where $o_m$ is the POI that minimizes an aggregate function, based solely on distance, for the subgroups of size $m$ and $sg_m$ is the corresponding subgroup. The consensus query returns these pairs for values of $m$ such that $n^\prime \leq m \leq n$.

The \textit{group nearest neighbor} (GNN) query~\cite{papadias2005aggregate} and its variants, the flexible aggregate nearest neighbor (FANN)~\cite{li2011flexible} query and the consensus query~\cite{ali2015consensus} have been previously studied in the spatial database domain. Given a set $Q$ of $n$ queries and a dataset $D$, a GNN query finds the data object that minimizes the aggregate distance (e.g., sum or max) for the group, whereas an FANN query finds the optimal subgroup of query points and the data object that minimizes the aggregate distance for a subgroup of size $m$, and a consensus query finds optimal subgroups and the data objects for each of the subgroup sizes in the range [$n^\prime$, $n$] for $n^\prime<n$. In all these studies, the aggregate similarity is computed based on only spatial (or Euclidean) distances between a data point and a group of query points. In this paper, we address variants of the above queries in the context of the \emph{spatial textual} domain, where both spatial proximity and keyword similarity for a \textit{group or subgroups of users} to data points need to be considered. We call this class of query the \emph{flexible group spatial keyword query}.

%However, there is no existing solution for GNN query in the spatio-textual domain. 
%The existing geo-textual work that best matches with our settings is the spatial \textit{group keyword} query \cite{cao2011collective, cao2015efficient} that finds a \textit{group of POIs} that cover the keywords of \textit{a single user} and minimize a cost function that is dependent on the distance of the POIs to the query object and the inter-object distances between the identified POIs. In contrast, the queries introduced in this paper find POIs that optimize the spatial proximity and keyword similarity for a \textit{group or subgroups of users}. 
%
%
%
%
%
%
%An increasing number of \emph{location-based services} (LBS) are utilizing geo-tagged data to provide local information to users. An important class of such services is known as the \textit{spatial keyword query}. {\color{red}This query takes a location and a set of keywords as the input, and returns a set of \emph{points of interest} (POI), e.g., restaurants or function venues, that best matches the query location and the keywords. The set may contain a single or multiple POIs depending on the query.}

%\vspace{0pt}
\begin{figure}[!ht]
\centering
\includegraphics[width=1.0\columnwidth]{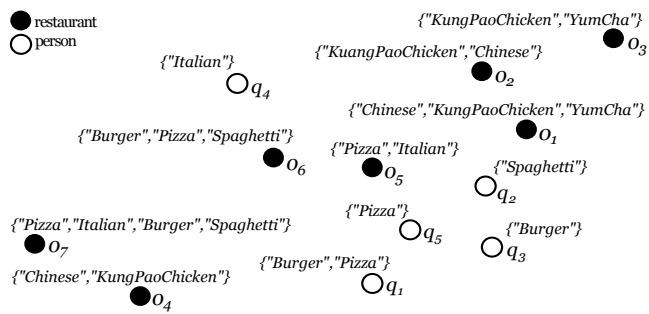}
\caption{A set of user locations $\{q_1,q_2,q_3,q_4,q_5\}$ and a set of restaurants $\{o_1,o_2,...,o_7\}$. Restaurant $o_7$ suits the whole group the best. If size-4 subgroups are considered, then $\{q_1,q_2,q_3,q_5\}$ is optimal with $o_6$ being the best restaurant.}
\label{fig:example}
\end{figure}

The flexible spatial keyword query has many applications in the spatial and multimedia database domains. For example, in a \emph{location-based social networks} (e.g., Foursquare), a group of users residing at their homes or offices can share their locations as spatial coordinates and their preferences as sets of keywords to find a Point of Interest (POI), e.g., restaurant or function venue, that optimizes a cost function composed of aggregate spatial distances and keyword similarities for the group. Since finding a POI that suits all group members might be difficult due to the diverse nature of choices, the group might prefer a result that is not optimal for the entire group, but is optimal for a subset of it. In such cases, we need to find optimal a \textit{subgroup} of users and a POI that minimizes the cost function for the subgroup.

Figure~\ref{fig:example} illustrates the query, where a group of five friends $\{q_1,q_2,q_3,q_4,q_5\}$ is trying to decide on a restaurant for a Sunday brunch. Each person has a location and a preferred type of food, represented by a set of keywords such as \{``Burger'', ``Pizza''\} or \{``Italian''\}, etc. There is a set of restaurants $\{o_1,o_2,...,o_7\}$ to be selected from. Each restaurant also has a location and specializes in a certain type of cuisine which is represented by a set of keywords, e.g., \{``Pizza'', ``Italian''\}. Assume that a cost function $f()$ is used, which considers distance only and aggregates the total travel distance of all the query users in the group to a selected data object.  As can be seen in Figure 1, $o_5$ is the data object closest to the group of query users overall and should be returned by the query. On a different occasion, the group of friends would like to maximize the number of keywords in common between the group query and the POI returned by the query. If we modify the cost function now to stand for the dissimilarity between the respective keyword sets, to be denoted as $g()$, then it turns out that $o_7$ is the one that minimizes this function because it fully covers the keywords of the query users. Both \textit{f} and \textit{g} are extreme cases. In general, it is preferred to find an answer that optimizes both spatial distance and keyword set dissimilarity at the same time, which is the problem studied in this paper. Under such case, neither $o_5$ nor $o_7$ is a good query answer, as they are either not satisfying the query keywords or too far away. However, if we allow leaving out a user, say $q_4$, then more answer candidates become available. In particular, $o_6$ will become the best choice of the subgroup $\{q_1, q_2, q_3, q_5\}$, as it covers all the keywords, and is closer to the group. In fact, leaving any other query user out (e.g., $q_2$) would not obtain a better cost function value. Therefore, $\{q_1, q_2, q_3, q_5\}$ is the optimal subgroup of size 4 and $o_6$ is the corresponding optimal data point.
%}

We observe that in many practical applications relaxing the requirement, i.e., not including all the query objects, has potential benefits in finding good quality answer. Consider a company that wants to find a suitable hotel where to hold the annual shareholder meeting. Each shareholder is identified by his location and a set of keywords describing the type of environment he would like the hotel to be located, like ``metropolitan area'', ``resort'' , ``high altitude'', ``low altitude'', etc. If the cost function to be optimized is an aggregate of the maximum distance traveled and text similarity the hotel selected maybe too far some of the shareholders. On the other hand, by omitting some travelers, the company could accommodate the rest with a shorter travel time. Similarly, in a ride-sharing service, the scheduler may want to find a car for multiple ride-sharers with certain service constraints formulated as keywords. As a third example, in a multimedia domain, one may want to find an image that matches with a subgroup of query images, where an object or query image is represented as a point (in a high-dimensional space) and a set of tag-words. Generally, one may prefer the subgroup size to be maximized, and hence, it benefits to explore the optimal solutions for different subgroup sizes.

The key challenge in processing the flexible group spatial keyword queries is how to utilize both the spatial and keyword preferences  and to efficiently prune the search space. Another major challenge is how to find the optimal subgroups of various sizes in one pass over the data set. We design pruning methods based on branch and bound algorithms to process the queries. We further optimize the algorithms with the best-first search paradigm to minimize the number of data objects visited. Our contributions are as follows:

%In this paper, we first formulate an aggregate cost function that considers both the spatial distance and textual similarity based on the edit distance.  Based on the cost function, we design pruning methods that enable branch and bound algorithms to process the queries. We further optimize the algorithms with the best-first search paradigm to minimize the number of data objects to be visited.

\begin{itemize}[noitemsep,topsep=0pt]

\item We propose a new class of group queries in the spatial textual domain: (i) the group nearest neighbor with keywords (GNNK) query that finds the best data with respect to our cost function for the whole group, (ii)  the flexible subgroup nearest neighbor with keywords (FSNNK) that finds the optimal subgroup and the corresponding best POI for a given subgroup size of size $m$ (with $m \leq n$, the group size)and (iii) the multiple flexible subgroup nearest neighbor with keywords (MFSNNK) that returns in one pass the optimal subgroups and corresponding POIs for all subgroups of size $m$, where $n^\prime \leq m \leq n$ and $n^\prime$ being the minimum subgroup size. 

\item We propose pruning strategies based on branch and bound as well as best-first strategies for these three queries. The resultant algorithms can process the queries in a single pass over the dataset. 

\item We provide theoretical bounds for our algorithms, and evaluate them through an extensive experimental evaluation on real datasets. The results demonstrate the effectiveness and efficiency of the proposed algorithms. 

\end{itemize}

The rest of the paper is organized as follows. Section~\ref{sec:problemStatement} formulates the queries studied. Section~\ref{sec:related} reviews related work. Section~\ref{sec:processing} describes the proposed algorithms. Sections~\ref{sec:complexitySummary} gives the cost analysis of algorithms. Section~\ref{sec:experiment} reports the experimental results. Section~\ref{sec:conclusion} concludes the paper with a discussion of future work.

\section{Problem Statement}\label{sec:problemStatement}

Let $D$ be a geo-textual dataset. Each object $o \in D$ is defined as a pair $(o.\lambda, o.\psi)$, where $o.\lambda$ is a location point and $o.\psi$ is a set of keywords. A query object $q$ is similarly defined as a pair $(q.\lambda, q.\psi)$. Let $dist(q.\lambda, o.\lambda)$ be the spatial distance between $q$ and $o$, and $similarity\_key(q.\psi, o.\psi)$ be the similarity between their keyword sets. We normalize both $dist(q.\lambda, o.\lambda)$ and $similarity\_key(q.\psi, o.\psi)$ so that their value lie between $0$ and $1$ (inclusive). The cost of $o$ with respect to $q$ is expressed in terms of their spatial distance and keyword set distance:
\begin{equation*}
\begin{split}
cost(q, o) = &\ \alpha \cdot dist(q.\lambda, o.\lambda) \\
			+ &\ (1 - \alpha) \cdot (1 - similarity\_key(q.\psi, o.\psi))
\end{split}
\end{equation*} 

Here, $\alpha$ is a user-defined parameter to control the preference of spatial proximity over keyword set similarity. Using $dist\_key(q.\psi, o.\psi) = 1 - similarity\_key(q.\psi, o.\psi)$, the cost function can be rewritten as:
\begin{equation*}
\begin{split}
cost(q, o) = \ \alpha \cdot dist(q.\lambda, o.\lambda) + (1 - \alpha) \cdot dist\_key(q.\psi, o.\psi)
\end{split}
\end{equation*} 

We formulate the \textit{GNNK}, \textit{FSNNK} and \textit{MFSNNK} queries based on $cost(q, o)$ as follows.

\begin{definition}
$(GNNK)$. Given a set $D$ of spatio-textual objects, a set $Q$ of query objects
$\{q_1 , q_2, ..., q_n\}$, and an aggregate function $f$, the GNNK query finds an object
$o_i \in D$ such that for any $o^\prime \in D \setminus \{o_i\},$ 
$$ f(cost(q_j , o_i) : q_j \in Q) \leq f(cost(q_j , o^\prime) : q_j \in Q) $$
\end{definition}

\begin{definition}
$(FSNNK)$. Given a set $D$ of spatio-textual objects, a set $Q$ of query objects $\{q_1, q_2, ..., q_n\}$, an aggregate function $f$, a subgroup size $m$ $(m \leq n)$, and the set $SG_m$ of all possible subgroups of size $m$, the FSNNK query finds a subgroup $sg_m \in SG_m$ and an object $o_i \in D$ such that for any $o^\prime \in D \setminus \{o_i\}$,
$$f(cost(q_j, o_i) : q_j \in sg_m) \leq f(cost(q_j, o^\prime) : q_j \in sg_m)$$
and for any subgroup $sg_{m}^{\prime} \in SG_m \setminus \{sg_m\}$,
$$f(cost(q_j, o_i) : q_j \in sg_m) \leq f(cost(q^\prime, o^\prime) : q^\prime \in sg_{m}^{\prime})$$
\end{definition}

\begin{definition}
$(MFSNNK)$. Given a set $D$ of spatio-textual objects, a set $Q$ of query objects $\{q_1, q_2, ..., q_n\}$, an aggregate function $f$, and minimum subgroup size $n^\prime$ $(n^\prime \leq n)$, the MFSNNK query returns a set $S$ of $(n-n^\prime+1)$ $\langle subgroup, data\ object \rangle$ pairs such that, each pair $\langle sg_m,o_m \rangle$ is the result of the $FSNNK$ query with subgroup size $m$ ($n^\prime \leq m \leq n$). 
\end{definition}

If the users are interested in the $k$-best POIs then the queries can be generalized as \textit{$k$-GNNK}, \textit{$k$-FSNNK} and \textit{$k$-MFSNNK} queries. These queries are straightforward extensions and the definitions are omitted. In this paper, we focus providing efficient solutions for the above queries for aggregate functions SUM ($\textstyle \sum\nolimits_{q_j \in Q} cost(q_j, o)$) and MAX ($\textstyle \max\nolimits_{q_j \in Q} cost(q_j, o)$). Without loss of generality our solutions work for any aggregate function that is monotonic (e.g., MIN). In our context, a monotonic function means, if we add more elements to the query set $Q$, the aggregate cost will either increase or remain the same.

\section{Related Work} \label{sec:related}

\textbf{Nearest Neighbor Queries.} Nearest neighbor (NN) queries have been well studied in the spatial database community~\cite{hjaltason1995ranking, berchtold1997cost}. The generalization of the nearest neighbor query is known as the $k$NN query. 
The depth-first (DF)~\cite{roussopoulos1995nearest} and the best-first (BF)~\cite{hjaltason1999distance} algorithms are commonly used to process the $k$NN queries. 
They assume the data objects to be indexed in a tree structure, e.g., the R-tree~\cite{guttman1984r}. 
In the DF algorithm, child nodes are recursively visited according to their $min\_dist$ from the query point. Here the $min\_dist$ of a node is defined as the minimum Euclidean distance between its minimum bounding rectangle (MBR) and the query point. It gives a lower bound over the distances of the child nodes, and hence the algorithm can safely prune the nodes with $min\_dist$ greater than the distance of the nearest neighbor already retrieved.
The BF algorithm maintains a priority queue of nodes to be visited. The nodes in the queue are ordered based on the $min\_dist$. Initially the children of the root node are inserted into the priority queue. At each step, the node with the lowest $min\_dist$ is popped from the queue and its children are inserted. The algorithm returns the first $k$ data objects popped from the queue as the $k$NN query answer. 

%Variants of NN query have also been studied. Two well studied variants are (i) NN queries in road network~\cite{kolahdouzan2004voronoi}, where the cost between two points is defined as the length of the shortest path connecting those two points and (ii) NN queries in spatio-temporal database~\cite{tao2003tpr, ali2015spatio}, where the query and data objects can move.

\textbf{Group Nearest Neighbor Queries.} The group nearest neighbor (GNN) query~\cite{papadias2004group} finds a data point that minimizes the aggregate distance for a group of query locations. SUM, MAX and MIN are commonly used aggregate functions. The generalization of the GNN query is the $k$GNN query, where $k$ best group nearest neighbors are to be found. Several methods for processing GNN queries have been presented in~\cite{papadias2005aggregate}. Among those, the MBM algorithm is the state of the art. It visits the R-tree nodes in the order of their aggregate distance from the set of query points. The distance of the best data object retrieved so far is used as the pruning bound while visiting the nodes.

The flexible aggregate nearest neighbor (FANN) query~\cite{li2011flexible} is a generalization of the GNN query. It returns the data object that minimizes the aggregate distance to any subset of $\phi n$ query points, where $n$ is the size of the query group and $0 < \phi \leq 1$. The query also returns the corresponding subset of query points. Two exact algorithms to process the FANN query have been proposed in~\cite{li2011flexible}. The first uses a branch and bound method to restrict the search space, assuming that the data 
objects are indexed in an R-tree. The second uses the \textit{threshold algorithm}~\cite{fagin2003optimal} to find the answer. 

A query similar to the FANN query called the \textit{consensus query}~\cite{ali2015consensus} is the main motivation of our paper. 
Given a minimum subgroup size $m$ and a set of $n$ query points, the consensus query finds objects that minimize the aggregate distance for all subgroups with sizes in the range $[m, n]$. A BF algorithm was proposed to process the consensus query. 

The above group queries~\cite{papadias2005aggregate, li2011flexible, ali2015consensus} only consider spatial proximity while a selecting data object, whereas, we consider both spatial proximity and textual similarity.

\textbf{Spatial Keyword Queries.} The spatial keyword query consists of a query location and a set of query keywords. A spatio-textual data object is returned based on its spatial proximity to the query location and textual similarity with the query keywords. A number of indexing structures for processing the spatial keyword query have been proposed~\cite{cong2009efficient, li2011ir, zhang2009keyword, cao2010retrieving, rocha2011efficient, ZhengBST15}. Among them, the IR-tree~\cite{cong2009efficient, li2011ir} has been shown to be a highly efficient one. The IR-tree augments each node of the R-tree with an inverted file corresponding to the keyword sets of the child nodes. 
The WAND method~\cite{broder2003efficient,ding2011faster, chakrabarti2011interval} is proposed for document queries. 
This method is mainly designed for document retrieval and uses TF-IDF measures for document ranking. In our study, we consider both spatial and textual similarity, and 
 use the IR-tree to index the data objects, although other spatial keyword indexes may be used as well. The WAND method in particular can be applied 
 in the leaf level of the IR-tree to help compute the textual similarity.

A variant of the spatial keyword query, called \textit{spatial group keyword query} has been introduced \cite{cao2011collective, cao2015efficient}. It finds a group of objects that cover the keywords of a \emph{single query} such that both the aggregate distance of the objects from the query location and the inter-object distances within the group are also minimized. Exact and approximate algorithms for three types of aggregate functions (SUM, MAX and MIN) have been presented in \cite{cao2015efficient}. \cite{chen2012aggregate} studies the aggregate keyword routing (AKR) query (AKR). For a given set of users, an AKR query finds a route through a set of objects $K$ that covers all users' keywords and minimizes the maximum distance travelled by any user to a meeting point $p$ through $K$.

%{\color{red}
In a study parallel to ours, the group top-$k$ spatial keyword query has been proposed recently~\cite{yao2016efficient}. This paper presents a branch-and-bound technique to retrieve the top-$k$ spatial keyword objects for only one group of queries. This technique is essentially our branch-and-bound method described in Section~\ref{subsection:baseline} for the GNNK queries. As we show in our experimental evaluation (Section~\ref{sec:experiment}), our best-first technique always outperforms the branch-and-bound method substantially even for a single group query.
%}

None of the existing work in the geo-textual domain addresses the problem of finding optimal subgroups and data objects in terms of spatial proximity and textual similarity, which is our main focus in this paper.

\section{Our Approach} \label{sec:processing}
This section presents our algorithms to process the GNNK, FSNNK and MFSNNK queries. The key challenge is to utilize the spatial distance and keyword preference together to 
constrain the search space as much as possible, since the performance of the algorithms is directly proportional to the search space (in both running time and I/O). Another challenge in the FSNNK and MFSNNK queries is to find the optimal subgroup from all possible subgroups.

%Section~\ref{subsection:ir-tree} briefly reviews the IR-tree, the indexing structure used in our algorithms. 
%Section~\ref{subsection:cost-model} presents the cost function we aim to minimize in query processing. 
%Section~\ref{subsection:baseline} presents two algorithms for processing the \textit{GNNK} and \textit{FSNNK} queries, which use the branch and bound method to restrict the search space. 
%Section~\ref{subsection:pq} presents two best-first algorithms for processing the \textit{GNNK} and \textit{FSNNK} queries. 
%Section~\ref{subsection:fsnnk-extended} presents the algorithm to process the \textit{MFSNNK} query.
%Section~\ref{subsection:k-best-extension} extends the above algorithms to process the $k$-best queries. 
%Section~\ref{subsection:different-user-priority-extension} extends the above algorithms for queries that gives different priorities to different users.
%Section~\ref{subsection:keyword-dropping} discusses a different variation of the GNNK query problem where some keywords might be dropped. 

\subsection{Preliminaries} \label{subsection:ir-tree}
We use the IR-tree \cite{cong2009efficient} to index our geo-textual dataset $D$. Other extensions of the IR-tree, such as the CIR-tree, the DIR-tree or the CDIR-tree~\cite{cong2009efficient} can be used as well. 

The IR-tree is essentially an inverted file augmented R-tree~\cite{guttman1984r}. The leaf nodes of the IR-tree contain references to the objects from dataset $D$. Each leaf node has also a pointer to an inverted file index corresponding to the keyword sets of the objects stored in that node. The inverted file index stores a mapping from the keywords to the objects where the keywords appear. Each node $N$ of the IR-tree has the form ($N.\Lambda$, $N.\Psi$), where $N.\Lambda$ is the minimum bounding rectangle (MBR) that bounds the child node entries, and $N.\Psi$ is the union of the keyword sets in the child node entries. 

\begin{example} \label{example:ir-tree}
Figure~\ref{fig:mbr} shows the locations of seven spatial objects $o_1, o_2, ..., o_7$. Figure~\ref{table:object-keywords} shows their keyword sets. The corresponding IR-tree and inverted files are not shown for space limitation. \qed

%The corresponding IR-tree is illustrated in Figure~\ref{fig:rtree}. The contents of the inverted files in IR-tree are shown in Table~\ref{table:invertedFile}. \qed
\end{example}

\begin{figure}[h!]
    \centering
	\setlength{\belowcaptionskip}{-1pt}
    \begin{subfigure}[b]{0.5\linewidth}
	    %\centering
        \includegraphics[width=\linewidth]{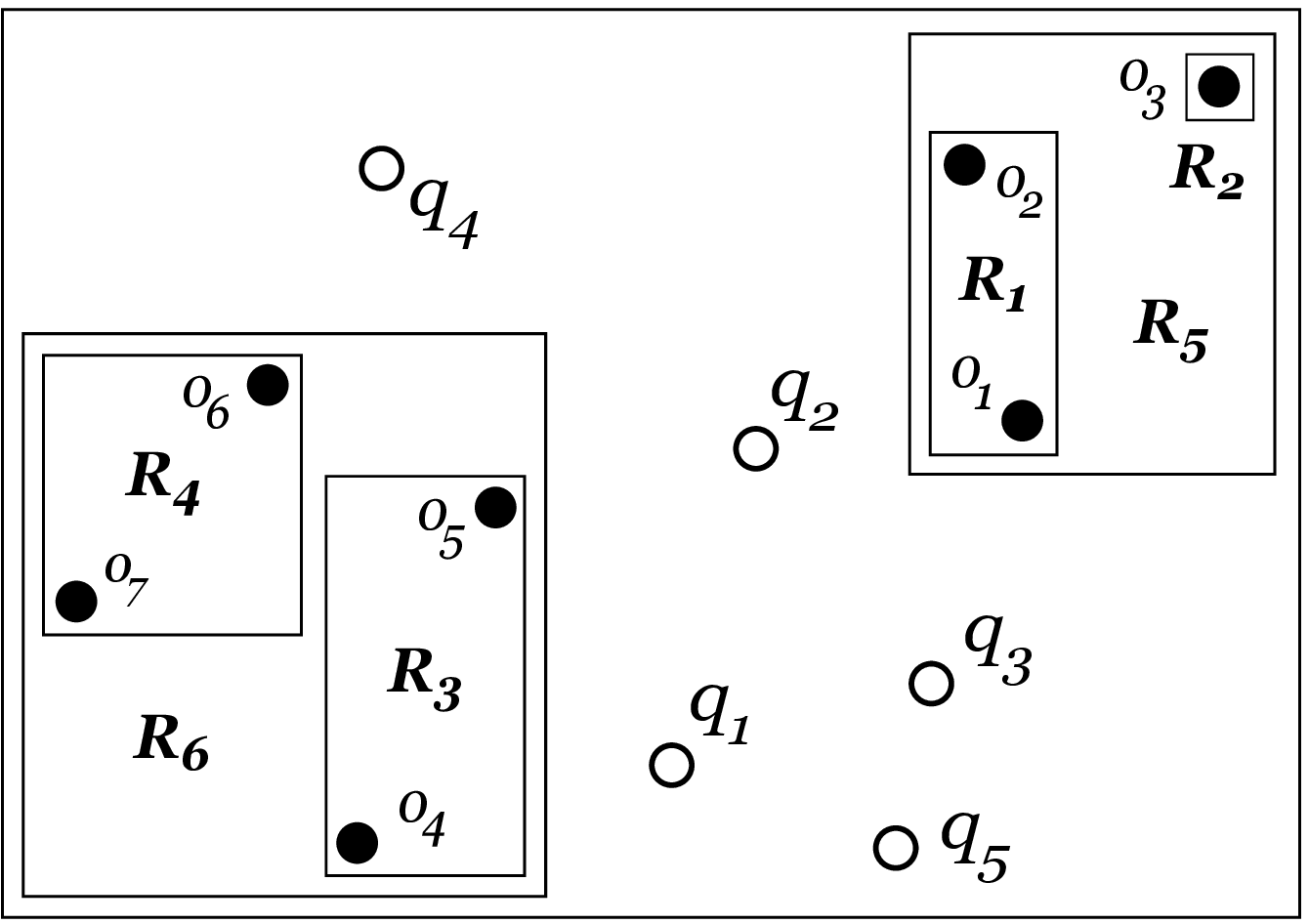}
	    \caption{Object locations with MBR}
        \label{fig:mbr}
	\end{subfigure}\quad
    \begin{subfigure}[b]{0.4\linewidth}
        \begin{tabular}{|c|l|}
    		\hline
            Objects & Keywords             \\ \hline
            $o_1.\psi$   & $t_1$,$t_2$,$t_7$ \\
            $o_2.\psi$   & $t_2$,$t_5$       \\
            $o_3.\psi$   & $t_2$,$t_7$		 \\
            $o_4.\psi$   & $t_1$,$t_2$,$t_3$ \\
            $o_5.\psi$   & $t_5$,$t_6$		 \\
            $o_6.\psi$   & $t_1$,$t_3$,$t_4$ \\
            $o_7.\psi$   & $t_1$,$t_3$,$t_4$,$t_6$ \\ \hline
		\end{tabular}
		\caption{Keywords of objects}
        \label{table:object-keywords}
	\end{subfigure}
    \caption{Locations and keywords of objects and queries}
    \label{fig:mbr-keyword}
\end{figure}

\subsection{Cost Function} \label{subsection:cost-model}
This subsection elaborates the cost function to be optimized. As defined in Section~\ref{sec:problemStatement}, the cost of an object is a combination of spatial distance and keyword dissimilarity:
\begin{equation*}
\begin{split}
cost(q, o) = &\ \alpha \cdot dist(q.\lambda, o.\lambda) \\
			+ &\ (1 - \alpha) \cdot (1 - similarity\_key(q.\psi, o.\psi))
\end{split}
\end{equation*} 

We use the Euclidean distance as the spatial distance metric. The spatial distance is normalized by the maximum spatial distance between any pair of objects in the dataset, $d_{max}$. Thus, $$dist(q.\lambda, o.\lambda) = euclidean\_distance(q.\lambda, o.\lambda) / d_{max}$$

Each keyword in the dataset is associated with a weight. Following a previous study on spatial keyword search~\cite{cong2009efficient}, we use the Language Model~\cite{ponte1998language} to generate the keyword weights. The weight of each keyword is normalized by the maximum keyword weight $w_{max}$ present in the dataset. 
Let $y.w$ be the weight of keyword $y$. Then the text relevance between $q$ and $o$ is the normalized sum of the weights of the keywords shared by $q$ and $o$:
$$similarity\_key(q.\psi, o.\psi) = \frac{1}{|q.\psi|} \sum\limits_{y \in q.\psi \cap o.\psi} \frac{y.w}{w_{max}}$$

Various alternative measures for textual data have been proposed, such as cosine similarity~\cite{rocha2011efficient}, the Extended Jaccard~\cite{lu2014efficient}, etc, but extensive experiments~\cite{lu2014efficient} have shown that not one similarity measure outperforms the others in all cases.

\begin{example}
We continue with the example shown in Figure~\ref{fig:mbr-keyword}. Let the keywords of the query points be: $q_1.\psi=\{t_1,t_2\}$, $q_2.\psi=\{t_4\}$, $q_3.\psi=\{t_3,t_6\}$, $q_4.\psi=\{t_1\}$, and $q_5.\psi=\{t_4,t_6\}$. Let us assume $\alpha=0.5$, the weight of any keyword $w = 1$ ($w_{max}=1$), and $f= $SUM. 

We show the aggregate cost computation for $o_6$. 
Let the distances from $q_1, q_2, q_3, q_4$, and $q_5$ to $o_6$ be 3.5, 5.5, 6.5, 1, and 9.5 units, and $d_{max}$ be 10 units. Then $dist(q_3.\lambda,o_6.\lambda)=\frac{6.5}{10}=0.65$. Meanwhile, $q_3.\psi \cap o_6.\psi=\{t_3\}$. Thus,\\ $similarity\_key(q_3.\psi,o_6.\psi)=\frac{t3.w}{|q_3.\psi|}=0.5$, and overall, 
\begin{equation*}
\begin{split}
cost(q_3, o_6) = &\ \alpha \cdot dist(q_3.\lambda, o_6.\lambda) \\
			+ &\ (1 - \alpha) \cdot (1 - similarity\_key(q_3.\psi, o_6.\psi))\\
			= &\ 0.5*0.65 + (1-0.5)*(1-0.5) = 0.575\\
\end{split}
\end{equation*}
Similarly, we compute the costs for $q_1,q_2,q_4$, and $q_5$, which are 0.175, 0.535, 0.05, and 0.725, respectively. Thus, the aggregate cost is $f(cost(Q, o_6)) =   \sum\limits_{q_j \in Q} cost(q_j, o_6) = 2.05.$
%\begin{equation*}
%\begin{split}
%f(cost(Q, o_6)) = &  \sum\limits_{q_j \in Q} cost(q_j, o_6) = 2.05
%\end{split}
%\end{equation*}  
%\vspace{-5pt}
\qed
\label{example:cost-model} 
\end{example}
%\vspace{-5pt}

The cost of an IR-tree node is defined similarly to the cost of a data object:
\begin{equation*} \label{eq:cost_query_to_node}
\begin{split}
cost(q, N) &= \alpha\ min\_dist(q.\lambda, N.\Lambda) \\
	   & + (1 - \alpha)\ (1 - similarity\_key(q.\psi, N.\Psi))
\end{split}
\end{equation*}

Here, $min\_dist(q.\lambda, N.\Lambda)$ is the minimum spatial distance between the query location and the MBR of $N$; $similarity\_key(q.\psi, N.\Psi)$ is the textual similarity between the query keywords and the keywords of the node. The cost of an IR-tree node gives a lower bound over the cost of its children, as formalized by the following lemma:

\begin{lemma} \label{lemma:cost}
Let $N$ be an IR-tree node and $q$ be a query object. If $N_c$ is a child of $N$, then $cost(q, N) \leq cost(q, N_c)$. 
\end{lemma}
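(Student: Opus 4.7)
The plan is to show that both components of the cost function are lower bounds when evaluated at the parent node, and then combine them via the convex combination. Specifically, I will prove separately that (i) $\alpha \cdot min\_dist(q.\lambda, N.\Lambda) \leq \alpha \cdot min\_dist(q.\lambda, N_c.\Lambda)$ and (ii) $(1-\alpha)(1 - similarity\_key(q.\psi, N.\Psi)) \leq (1-\alpha)(1 - similarity\_key(q.\psi, N_c.\Psi))$, and then add them. Since $\alpha \in [0,1]$, both coefficients are nonnegative, so monotonicity of addition gives the claim.

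For the spatial term, I would invoke the standard R-tree containment property: since $N_c$ is a child of $N$, the MBR $N_c.\Lambda$ lies inside $N.\Lambda$. Therefore, for any query point $q.\lambda$, the minimum Euclidean distance from $q.\lambda$ to a point in the larger rectangle is no greater than the minimum distance to a point in the smaller one, yielding $min\_dist(q.\lambda, N.\Lambda) \leq min\_dist(q.\lambda, N_c.\Lambda)$. This is a routine step and follows from the IR-tree construction described in Section~\ref{subsection:ir-tree}.

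For the textual term, I would use the IR-tree property that $N.\Psi$ is the union of the keyword sets of its children, so $N_c.\Psi \subseteq N.\Psi$. Intersecting both sides with $q.\psi$ gives $q.\psi \cap N_c.\Psi \subseteq q.\psi \cap N.\Psi$. Since all keyword weights $y.w$ are nonnegative, the sum in the definition of $similarity\_key$ over the larger set is at least as large:
\begin{equation*}
similarity\_key(q.\psi, N_c.\Psi) = \frac{1}{|q.\psi|}\!\!\sum_{y \in q.\psi \cap N_c.\Psi}\!\!\frac{y.w}{w_{max}} \leq \frac{1}{|q.\psi|}\!\!\sum_{y \in q.\psi \cap N.\Psi}\!\!\frac{y.w}{w_{max}} = similarity\_key(q.\psi, N.\Psi).
\end{equation*}
Taking $1 - (\cdot)$ flips the inequality, giving the desired bound on the keyword dissimilarity component.

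I do not foresee a substantive obstacle here: the lemma is essentially a sanity check that the IR-tree's geometric and textual summaries combine to give a valid lower bound for branch-and-bound pruning. The only subtlety to flag is the assumption $\alpha \in [0,1]$, which ensures both coefficients are nonnegative so that the two bounds can be added without reversing signs; this is implicit in the problem setup of Section~\ref{sec:problemStatement}. Once both componentwise bounds are established, summing them yields $cost(q,N) \leq cost(q,N_c)$, completing the proof.
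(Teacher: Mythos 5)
Your proof is correct and follows essentially the same route as the paper's: bound the spatial term via MBR containment, bound the textual term via the child's keyword set being a subset of the parent's, and add the two. You are merely more explicit about the nonnegativity of the weights and of $\alpha$ and $1-\alpha$, which the paper leaves implicit.
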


\begin{proof}
The child $N_c$ can either be a data object or an IR-tree node. In either case $min\_dist(q.\lambda, N.\Lambda)$ is smaller than or equal to that of $N_c$ according 
to the R-tree structure. Meanwhile, the keyword set of $N_c$ is a subset of the keyword set of $N$. Thus, $N$ will have a higher (or equal) textual similarity value (and hence lower keyword set distance) with the query keywords. 
Overall, we have $cost(q, N) \leq cost(q, N_c)$.
%The proof is straightforward and hence omitted.
\end{proof}
%\vspace{-5pt}

%This lemma is used to derive the pruning bounds in the following sections.

\subsection{Branch and Bound Algorithms for GNNK and FSNNK} \label{subsection:baseline}
Traditional nearest neighbor algorithms access the data indexed in a spatial index (e.g., R-tree) and restricts its search space by pruning bounds~\cite{roussopoulos1995nearest}. We extend this idea to design two branch and bound algorithms for the GNNK and FSNNK queries. These two algorithms will work as the baseline algorithms 
in the experiments. 

\begin{algorithm} [ht!]
	\caption{GNNK-BB ($R, Q, f$)}
	\label{algo:gnnk-baseline}
	\begin{algorithmic}[1]
	\INPUT IR-tree index $R$ of all data objects, $n$ query points $Q = \{q_1, q_2, ..., q_n\}$, monotonic cost function $f$.
	\OUTPUT A data object $o$ that minimizes the aggregate cost with respect to the query set $Q$
	%\newline
	\State $min\_cost \gets \infty$
	\State $stack \gets \emptyset $
	\State $stack.push(root)$
	%\\
	\Repeat
		\State $N \gets stack.pop()$
		\If {$N$ is an intermediate node}
			\ForAll {$N_c$ in $N.children$}
				\If {$f(cost(Q, N_c)) < min\_cost$}
					\State $stack.push(N_c)$
				\EndIf
			\EndFor
		\ElsIf{$N$ is a leaf node}
			\ForAll {$o$ in $N.children$}
				\If {$f(cost(Q, o)) < min\_cost$}
					\State $min\_cost \gets f(cost(Q, o))$
					\State $best\_object \gets o$
				\EndIf
			\EndFor
		\EndIf
	\Until{$stack$ is empty}
	\State return $best\_object$
	\end{algorithmic}
\end{algorithm}

\textbf{Branch and Bound Algorithm for \textit{GNNK}.}
We use the following heuristic to prune the unnecessary nodes while searching the IR-tree for the best object with the minimum aggregate cost.
\begin{heuristic}\label{heuristic:gnnk}
A node $N$ can be safely pruned if its aggregate cost with respect to the query set $Q$ is greater than or equal to the smallest cost of any object retrieved so far.
\end{heuristic}

This heuristic is derived from Lemma~\ref{lemma:cost}. As $f$ is a monotonic function and $cost(q, N) \leq cost(q, N_c)$ for any child $N_c$ of $N$, $f(cost(Q, N))$ will be less than or equal to $f(cost(Q, N_c))$. Let $min\_cost$ be the smallest cost of any data object retrieved so far. Then $f(cost(Q, N)) \geq min\_cost$ implies that the cost of any descendant of $N$ is greater than or equal to $min\_cost$, and we can safely prune $N$.

Algorithm~\ref{algo:gnnk-baseline} shows the pseudo-code of the branch and bound algorithm based on the heuristic, denoted by GNNK-BB. 
The algorithm maintains a stack of nodes/objects to be visited. 
The lowest cost object visited so far as well as the lowest cost are maintained in the variables $best\_object$ and $min\_cost$, respectively.
The algorithm starts with inserting the root node of the IR-tree into the stack (Line 3).  At each step, it gets the next node/object from the stack (Line 5) and computes the  aggregate query cost for each of the child nodes (if any) (Lines 7-8 and 11-12). If the child is a data object and its cost is lower than $min\_cost$, then we update $min\_cost$ with the aggregate cost of that child (Lines 12-14). Otherwise the child is an IR-tree node and if its cost is lower than $min\_cost$, we insert it into the stack so that we can visit its children later (Lines 8-9). At the end when the stack becomes empty, the algorithm returns the object corresponding to $min\_cost$ as the result (Line 16).

\begin{table}[ht!]
\centering
%\small
%\setlength{\belowcaptionskip}{0pt}
\caption{Example of the GNNK-BB algorithm}
\begin{adjustbox}{max width=1.0\columnwidth}
\begin{tabular}{|l|l|l|l|l|l|l|}
\hline
\textbf{Step} & \textbf{S} & \textbf{Elm} & \textbf{$f(cost)$} & \textbf{$best\_obj$}& \textbf{$min\_cost$} & \textbf{S} (updated)\\
\hline
\multirow{2}{*}{1} & \multirow{2}{*}{$root$} & \multirow{2}{*}{$root$} & $R_5$ : 2.475 & $\emptyset$ & $\infty$ & \multirow{2}{*}{$R_5,R_6$} \\
& & & $R_6 : 0.725$ & $\emptyset$ & $\infty$ &\\ \hline
\multirow{2}{*}{2} & \multirow{2}{*}{$R_6,R_5$} & \multirow{2}{*}{$R_6$} & $R_3$ : 1.75 & $\emptyset$ & $\infty$ & \multirow{2}{*}{$R_5,R_3,R_4$} \\
& & & $R_4 : 1.1$ & $\emptyset$ & $\infty$ &\\ \hline
\multirow{2}{*}{3} & \multirow{2}{*}{$R_5,R_3,R_4$} & \multirow{2}{*}{$R_4$} & $o_6$ : 2.05 & $o_6$ & 2.05 & \multirow{2}{*}{$R_5,R_3$} \\
& & & $o_7 : 1.625$ & $o_7$ & $1.625$ &\\ \hline
\multirow{2}{*}{4} & \multirow{2}{*}{$R_5,R_3$} & \multirow{2}{*}{$R_3$} & $o_4$ : 2.75 & $o_7$ & 1.625 & \multirow{2}{*}{$R_5$} \\
& & & $o_5 : 3.0$ & $o_7$ & $1.625$ &\\ \hline

5 & $R_5$ & $R_5$ & \multicolumn{4}{l|}{$f(cost(Q,R_5))>min\_cost\Rightarrow$ prune $R_5$; $S={\emptyset}$, $return$ $o_7$} \\ \hline

%\multirow{2}{*}{3} & \multirow{2}{*}{$R_5$} & \multirow{2}{*}{$R_5$} & $o_4$ : 2.75 & $o_7$ & 1.625 & \multirow{2}{*}{$R_5$} \\
%& & & $o_5 : 3.0$ & $o_7$ & $1.625$ &\\ \hline

%4 & $o_7,R_3,o_6,R_5$ & $o_7$ & \multicolumn{2}{l|}{$return \quad o_7$} \\ \hline

\end{tabular}
\end{adjustbox}
\label{table:gnnkbb-calc}
\end{table}

\begin{example}(\textbf{GNNK-BB}).
We continue with Example~\ref{example:cost-model}. Table~\ref{table:gnnkbb-calc} summarizes the MFSNNK-BF steps using aggregate function SUM.
%First three columns show the stack, popped element, and aggregate costs of the child nodes of the popped element, respectively. Current best object and minimum cost are shown with updated stack in the following columns.
Column $S$ shows the stack; column $Elm$ shows the element popped out; $f(cost)$ shows the aggregate costs of the child nodes; column $best\_obj$ and $min\_cost$ shows current best object and minimum cost, respectively; column $S\ (updated)$ shows the updated stack after processing the popped element. 

At start, the tree root is popped out. The aggregate cost for each children $R_5$ and $R_6$ is less than the initialized cost $\infty$ and so, they are pushed into the stack. In step 3, leaf node $R_4$ is popped. So $best\_obj$ and $min\_cost$ are updated. In Step 4, the cost of each object $o_4$ and $o_5$ is greater than $min\_cost$, so no update occurs. In step 5 cost of $R_5$ is greater than $min\_cost$. So, $R_5$ is pruned, and stack $S$ becomes empty. Then algorithm terminates, and $o_7$ is returned, which is the current best object. \qed 

\label{example:gnnk}
\end{example}

\textbf{Branch and Bound Algorithm for \textit{FSNNK}.}
We design a similar branch and bound algorithm named FSNNK-BB for the FSNNK query. The following heuristic is used for pruning.
\begin{heuristic} \label{heuristic:fsnnk}
Let $N$ be an IR-tree node and $m$ be the required subgroup size. If $sg_m$ is the best subgroup of size $m$, and $min\_cost$ is the smallest cost of any size-$m$ subgroup retrieved so far, we can safely prune $N$ if $f(cost(sg_m, N)) \geq min\_cost$.
\end{heuristic}

This heuristic is derived from Lemma~\ref{lemma:cost}. Let $N_c$ be a child of $N$ and $sg'_m$ be the best subgroup corresponding to $N_c$. 
Then we have $$ f(cost(sg'_m, N)) \leq f(cost(sg'_m, N_c)) $$
Meanwhile $sg_m$ is the best subgroup for $N$ among all possible subgroups of size $m$. Thus,  
$$ f(cost(sg_m, N)) \leq f(cost(sg'_m, N)) $$
The two inequalities imply that $ f(cost(sg_m, N)) \leq f(cost(sg'_m, N_c)) $, i.e., the aggregate cost for the best size-$m$ subgroup of $N$ is lower than or 
equal to that of the best size-$m$ subgroup of any of its children. Therefore, if $f(cost(sg_m, N)) \geq min\_cost$, $f(cost(sg_m, N_c))$ will also be greater than or equal to $min\_cost$, and we should prune $N$. 

The overall tree traversal procedure is similar to that of the GNNK-BB algorithm. The difference is in the calculation of the optimization function, where the optimization function value is computed based on the the top-$m$ queries with the lowest costs. For an intermediate node $N$, we compute the best subgroup and the aggregate cost (bound) in a similar way for all of its child nodes. First, the costs from all the query points to a node are calculated. Then $m$ query points with lowest costs are taken to get the best subgroup $sg_m$. If the aggregate cost for $sg_m$ is lower than $min\_cost$, then we insert the child node into the stack. Otherwise, it is pruned. We omit the details due to space constraints.

%Algorithm~\ref{algo:fsnnk-baseline} shows the pseudo-code. The algorithm also maintains a stack of nodes/objects to be visited, 
%and starts the traversal on the IR-tree from the root (Line 3). At each step, the algorithm removes the next entry from the stack (Line 6). If the entry is a leaf node, for each child $o$ we compute the individual query costs (Line 17). We sort the query points $\{q_{1}, q_{2}, ..., q_{n}\}$ according to their costs in 
%the ascending order. As the aggregate function $f$ is a monotonic function, the subgroup $sg_m$ that minimizes the aggregate cost will be the one that consists of the first $m$ query points with the lowest costs. Thus, if $\{q_{1}^*, q_{2}^*, ..., q_{n}^*\}$ are the sorted query points, then $sg_m = \{q_{1}^*, q_{2}^*, ..., q_{m}^*\}$ (Line 18). We compute the aggregate cost $f(cost(sg_m, o))$ and update $min\_cost$, $best\_object$, and $best\_subgroup$ if necessary (Lines 19-23).
%For an intermediate node $N$, we compute the best subgroup and the aggregate cost (bound) in a similar way for all of its child nodes (Lines 8-10). If the aggregate cost is lower than $min\_cost$, then we insert the child node into the stack (Lines 11-13). Otherwise it is pruned. 
%When the stack becomes empty, the algorithm terminates and returns $best\_object$ and $best\_subgroup$ as the answer (Line 27).

\subsection{Best-first Algorithms for GNNK and FSNNK} \label{subsection:pq}

Branch and bound techniques may access unnecessary nodes during query processing. To improve the query efficiency by reducing disk accesses, we propose in this section best-first search techniques that only access the necessary nodes.

\textbf{Best-first algorithm for GNNK.}
The best-first procedure for the GNNK query, denoted by GNNK-BF, is shown in Algorithm~\ref{algorithm:gnnk-pq}. This algorithm uses a minimum priority queue $P$ to maintain the nodes/objects to be visited according to their aggregate costs. At start, the queue $P$ is initialized with the root of the IR-tree (Lines 1-2). At each iteration of the main loop (Lines 3-13), the element with the minimum aggregate cost is popped out from $P$. There are three cases to be considered for a popped element: \textit{(i)} If it is an intermediate node, then all child nodes are pushed into $P$ according to their aggregate costs (Lines 5-7). \textit{(ii)} If it is a leaf node, then all child objects are pushed into $P$ according to their aggregate costs (Lines 8-10). \textit{(iii)} If it is an object, then it is returned as the query result (Lines 11-12), and the algorithm terminates (Line 14). 

\begin{algorithm}[t!]
	\caption{GNNK-BF ($R, Q, f$)}
	\begin{algorithmic}[1]
	\INPUT IR-tree index $R$ of all data objects, $n$ query points $Q = \{q_1, q_2, ..., q_n\}$, monotonic cost function $f$.
	\OUTPUT A data object $o$ that minimizes the aggregate cost with respect to the query set $Q$
	%\newline
	\State Initialize a new min priority queue $P$
	\State $ P.push(root, 0) $
	%\\
    \Repeat
        \State $E \gets P.pop()$
		\If {$E$ is an intermediate node $N$}
			\ForAll {$N_c$ in $N.children$}
				\State $ P.push(N_c, f(cost(Q, N_c))) $
			\EndFor
		\ElsIf{$E$ is a leaf node $N$}
			\ForAll {$o$ in $N.children$}
				\State $ P.push(o, f(cost(Q, o))) $
			\EndFor
		\ElsIf{$E$ is a data object $o$}
			\State return $o$
		\EndIf
	\Until{$P$ is empty}
	\State return $null$
	\end{algorithmic}
	\label{algorithm:gnnk-pq}
\end{algorithm}

%
%%explanation of GNNK
\begin{example}(\textbf{GNNK-BF}).
We continue with Example~\ref{example:cost-model}. The algorithm steps are summarized in Table~\ref{table:gnnk-calc}, where SUM is used as the aggregate function. 
Column $P$ shows the current elements in the queue; column $Element$ shows the element popped out in the current step; column $f(cost)$ shows the aggregate costs of the child nodes of the popped element; column $P\ (updated)$ shows the updated queue after processing the popped element. 

%\vspace{-5pt}
\begin{table}[ht!]
\centering
%\small
\caption{Example of the GNNK-BF algorithm}
\begin{adjustbox}{max width=.98\columnwidth}
\begin{tabular}{|l|l|l|l|l|}
\hline
\textbf{Step} & \textbf{P} & \textbf{Element} & \textbf{$f(cost)$} & \textbf{P} (updated)\\
\hline
\multirow{2}{*}{1} & \multirow{2}{*}{$root$} & \multirow{2}{*}{$root$} & $R_5$ : 2.475 & \multirow{2}{*}{$R_6,R_5$} \\
& & & $R_6 : 0.725$ & \\ \hline
\multirow{2}{*}{2} & \multirow{2}{*}{$R_6,R_5$} & \multirow{2}{*}{$R_6$} & $R_3$ : 1.75 & \multirow{2}{*}{$R_4,R_3,R_5$} \\
& & & $R_4 : 1.1$ & \\ \hline
\multirow{2}{*}{3} & \multirow{2}{*}{$R_4,R_3,R_5$} & \multirow{2}{*}{$R_4$} & $o_6$ : 2.05 & \multirow{2}{*}{$o_7,R_3,o_6,R_5$} \\
& & & $o_7 : 1.625$ & \\ \hline
%4 & $<o_7,R_3,o_6,R_5>$ & $o_7$ & $return \quad o_7$ & \\ \hline
4 & $o_7,R_3,o_6,R_5$ & $o_7$ & \multicolumn{2}{l|}{$return \quad o_7$} \\ \hline

\end{tabular}
\end{adjustbox}
\label{table:gnnk-calc}
\end{table}
%\vspace{-5pt}

At start, the tree root is popped out. The aggregate costs for the children $R_5$ and $R_6$ are computed and they are pushed into the queue. The node $R_6$ has the lowest aggregate cost, and hence it is at the front of the queue. In the next step, $R_6$ is popped out and the aggregate costs for its children $R_3$ and $R_4$ are computed. This procedure repeats until Step 4 where $o_7$ is popped out. This is the first data object popped out. According to the algorithm, this object is the best object for the query, and hence it is returned as the query answer. \qed 

\label{example:gnnk}
\end{example}

%\vspace{1.5em}
%\textbf{Proof of Correctness}.
\begin{lemma} \label{lemma:gnnk}
\textbf{\textit{(Proof of Correctness)}} GNNK-BF returns the object with the minimum aggregate cost w.r.t. the query set $Q$.
\end{lemma}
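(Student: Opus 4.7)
The plan is a standard best-first correctness argument by contradiction, leveraging Lemma~\ref{lemma:cost} (the node-cost lower bound) and the monotonicity of the aggregate function $f$. I would assume GNNK-BF returns an object $o^{\ast}$ that is not optimal, pick some $o_{opt}$ with $f(cost(Q, o_{opt})) < f(cost(Q, o^{\ast}))$, and derive a contradiction by comparing priorities in $P$.

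First I would establish the following loop invariant: at the start of every iteration of the main loop, for every data object $o$ that has not yet been popped from $P$, either $o$ itself is currently in $P$ or some IR-tree ancestor of $o$ is currently in $P$. This is proved by induction on the iteration count. Initially, $P$ contains only the IR-tree root, which is an ancestor of every data object, so the invariant holds. At an inductive step, when an intermediate node $N$ is popped, the algorithm immediately pushes all children of $N$ into $P$, so any object that used $N$ as its ``witness'' in the invariant now has either its parent leaf or another ancestor in $P$; popping a leaf likewise pushes all its child objects. Popping a data object ends the loop, so the invariant need not be preserved further.

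Second, I would apply the invariant at the moment $o^{\ast}$ is popped. Because GNNK-BF returns the very first data object popped (Lines 11--12 of Algorithm~2), $o_{opt}$ has not yet been popped, so there is an element $E \in P$ that is either $o_{opt}$ itself or an IR-tree ancestor of $o_{opt}$. Iterating Lemma~\ref{lemma:cost} along the ancestor chain from $E$ down to $o_{opt}$ gives $cost(q_j, E) \leq cost(q_j, o_{opt})$ for every $q_j \in Q$. Invoking the pointwise monotonicity of $f$ (which holds for SUM, MAX and MIN, and is the sense in which the paper uses ``monotonic''), I then get
\[
f(cost(Q, E)) \;\leq\; f(cost(Q, o_{opt})).
\]
Since $P$ is a min-priority queue keyed by $f(cost(Q, \cdot))$ and $o^{\ast}$ was popped while $E$ was still in $P$, we have $f(cost(Q, o^{\ast})) \leq f(cost(Q, E))$. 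Chaining these yields $f(cost(Q, o^{\ast})) \leq f(cost(Q, o_{opt}))$, contradicting the choice of $o_{opt}$.

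The main obstacle is stating and proving the loop invariant cleanly; once it is in place, the rest is a three-line inequality chase. A subtle secondary point is that the paper phrases monotonicity of $f$ in terms of enlarging the query set, whereas the proof actually needs pointwise monotonicity of $f$ in its numeric arguments. I would note explicitly that for the aggregate functions considered (SUM, MAX, MIN) both forms hold, so this distinction is benign, but flagging it avoids an apparent gap in the argument.
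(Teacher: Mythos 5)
Your proposal is correct and follows essentially the same route as the paper's proof: a contradiction argument combining Lemma~\ref{lemma:cost}, the monotonicity of $f$, and the min-priority-queue ordering to show that a hypothetically better object (or one of its ancestors) would still be in the queue with a smaller key when the returned object is popped. Your explicit loop invariant and your remark distinguishing pointwise monotonicity of $f$ from the paper's set-enlargement definition make the argument tighter than the paper's informal ancestor-chain phrasing, but the underlying idea is identical.
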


\begin{proof}
%The proof is straightforward and hence omitted.
Let $o$ be the data object returned by GNNK-BF, i.e, $o$ is the first data object visited by the algorithm. Assume that a different object $o\prime$ is the data object with 
the minimum aggregate cost. 
Then $f(cost(Q, o\prime)) \leq f(cost(Q, o))$. Let $N$ be the first common ancestor of $o$ and $o\prime$ in the IR-tree. We know from Lemma~\ref{lemma:cost} that the cost of an IR-tree node gives a lower bound over the costs of its children. Thus, any node in the path from $N$ to the parent of $o\prime$ will have a lower aggregate cost than 
that of $o\prime$. This implies that these nodes have lower aggregate costs than that of $o$, and should be visited before $o$. Therefore, when $o$ is visited, $o\prime$ must be in the priority queue as its parent has already been visited. Because $o\prime$ has a lower cost than $o$ has, it should be visited first, which means that $o\prime$ must be the data object returned by GNNK-BF rather than $o$. This is conflict to our assumption, and hence $o\prime$ should not have been existed. Therefore, $o$ must be the data object with the minimum aggregate cost. 
\end{proof}

%\begin{algorithm}[ht!]
%    \caption{FSNNK-BF ($R, Q, m, f$)}
%    \begin{algorithmic}[1]
%    \INPUT IR-tree index $R$ of all data objects, $n$ query points $Q = \{q_1, q_2, ..., q_n\}$, subgroup size $m (m \leq n)$, monotonic cost function $f$.
%    \OUTPUT A data object $o$ and a set of $m$ query points $sg_m$ that minimize $f(cost(sg_m, o))$ 
%    %\newline
%    \State Initialize a new min priority queue $P$
%	\State $ P.push(root, 0) $
%    %\\
%    \Repeat
%        \State $E \gets P.pop()$
%        \If {$E$ is an intermediate node $N$}
%    	    \ForAll {$N_c$ in $N.children$}
%            	\State Compute $cost(q_1, N_c), ... , cost(q_n, N_c)$
%    			\State $sg_m \gets $ first $m$ query points with the lowest cost values
%    		    \State $ P.push(N_c, f(cost(sg_m, N_c))) $
%    	    \EndFor
%        \ElsIf{$E$ is a leaf node $N$}
%            \ForAll {$o$ in $N.children$}
%            	\State Compute $cost(q_1, o), ... , cost(q_n, o)$
%    			\State $sg_m \gets $ first $m$ query points with the lowest cost values
%    			\State $ o.best\_subgroup = sg_m $
%    		    \State $ P.push(o, f(cost(sg_m, o)) $
%    	    \EndFor
%	    \ElsIf{$E$ is a data object $o$}
%	  	    \State return $(o, o.best\_subgroup)$
%        \EndIf
%    \Until{$P$ is empty}
%    \State return $null$
%    \end{algorithmic}
%    \label{algorithm:fsnnk-pq}
%\end{algorithm}

\begin{algorithm}[ht!]
    \caption{FSNNK-BF ($R, Q, m, f$) [partial]}
    \begin{algorithmic}[1]
    \State ...
        \If {$E$ is an intermediate node $N$}
    	    \ForAll {$N_c$ in $N.children$}
            	\State Compute $cost(q_1, N_c), ... , cost(q_n, N_c)$
    			\State $sg_m \gets $ first $m$ query points with the lowest costs
    		    \State $ P.push(N_c, f(cost(sg_m, N_c))) $
    	    \EndFor
        \ElsIf{$E$ is a leaf node $N$}
          \State ...
	    \ElsIf{$E$ is a data object $o$}
	  	    \State return $(o, o.best\_subgroup)$
        \EndIf
    \State ...
    \end{algorithmic}
    \label{algorithm:fsnnk-pq}
\end{algorithm}

\textbf{Best-first Algorithm for FSNNK.}
The best-first algorithm for the FSNNK query, denoted by FSNNK-BF, is similar to GNNK-BF algorithm. This algorithm also maintains a minimum priority queue to manage the nodes/objects to be visited from the IR-tree, and traverses the tree from the root. Here, optimization function is computed for top-$m$ queries. Best subgroup is chosen from the lowest $m$ query points, and pushed into the priority queue. For an intermediate node, aggregate costs and best subgroup are calculated for all the child nodes of the node. For a leaf node, it is done for all the children objects, and then pushed into the priority queue. When an object is first popped, it is returned as the result. The partial pseudo-code is shown in Algorithm~\ref{algorithm:fsnnk-pq}.
%(Lines 1-2). 
%The main loop then begins (Lines 4-22). At each iteration, the element with the minimum cost is popped out from $P$. There are three cases to process: \textit{(i)} 
%If the element is an intermediate node, then all the child nodes are pushed into $P$ according to their aggregate costs over the $m$ lowest cost query points 
%(Lines 6-11). 
%\textit{(ii)} If the element is a leaf node, then all the child objects are pushed into $P$ according to their aggregate costs over the $m$ lowest cost query points 
%(Lines 12-18). 
%\textit{(iii)} If the element is an object, then the object and its associated query point subgroup are returned as the query result (Lines 19-21). The algorithm also terminates in this case (Line 20).

%explanation for FSNNK

\begin{example}(\textbf{FSNNK-BF}).
%We continue with Example~\ref{example:cost-model}. Let the subgroup size $m=3$. The algorithm steps are summarized in Table~\ref{table:fsnnk-calc}. Column $Element$ shows the popped element; column $sg_m$ shows the best subgroup of size $m$ corresponding to the current node or data object; $f_m(cost)$ is the aggregate cost over the query points in $sg_m$; $P\ (updated)$ denotes the updated queue.
We continue with Example~\ref{example:cost-model} for the FSNNK query. Let the subgroup size $m=3$. The algorithm steps for FSNNK-BF are summarized in Table~\ref{table:fsnnk-calc}. Column $Element$ shows the elements popped out from $P$ at every step; column $sg_m$ shows the best subgroup of size $m$ corresponding to the current node or data object, which is also the set of $m$ lowest cost query points corresponding to the current node or data object; column $f_m(cost)$ is the aggregate cost over the query points in $sg_m$; column $P\ (updated)$ shows the updated queue after processing the popped element.

%\vspace{0 pt}
\begin{table}[ht!]
\renewcommand{\arraystretch}{0.8}
\centering
%\small
%\setlength{\abovecaptionskip}{0pt}
%\setlength{\belowcaptionskip}{15pt}
\caption{Example of the FSNNK-BF algorithm}
\begin{adjustbox}{max width=.98\columnwidth}
\begin{tabular}{|l|l|l|l|l|}
\hline
\textbf{Step} & \textbf{Element} & \textbf{$f_m(cost)$} & \textbf{$sg_m$} & \textbf{P }(updated)\\
\hline
\multirow{2}{*}{1} & \multirow{2}{*}{$root$} & $R_5$ : 1.2 & $q_4,q_1,q_2$ &\multirow{2}{*}{$R_6,R_5$} \\
& & $R_6 : 0.225$ & $q_4,q_1,q_2$ & \\ \hline
\multirow{2}{*}{2} & \multirow{2}{*}{$R_6$} & $R_3$ : 0.45 & $q_1,q_4,q_3$ & \multirow{2}{*}{$R_3,R_4,R_5$} \\
& & $R_4 : 0.475$ & $q_4,q_1,q_2$ & \\ \hline
\multirow{2}{*}{3} & \multirow{2}{*}{$R_3$} & $o_4$ : 1.15 & $q_1,q_4,q_3$ &\multirow{2}{*}{$R_4,o_4,R_5,o_5$} \\
& & $o_5 : 1.7$ & $q_3,q_1,q_5$ & \\ \hline
\multirow{2}{*}{4} & \multirow{2}{*}{$R_4$} & $o_6$ : 0.75 & $q_4,q_1,q_2$ &\multirow{2}{*}{$o_6,o_7,o_4,R_5,o_5$} \\
& & $o_7 : 0.8$ & $q_1,q_4,q_3$ & \\ \hline
%5 & $<o_6,o_7,o_4,R_5,o_5>$ & $o_6$ & $return \quad o_6,(q_1,q_3,q_4)$ & & \\ \hline
5 & $o_6$ & \multicolumn{3}{l|}{$return \quad (o_6,\{q_1,q_4,q_3\})$}\\ \hline
\end{tabular}
\end{adjustbox}
\label{table:fsnnk-calc}
\end{table}
%\vspace{0pt}

%At start, the tree root is popped out. The aggregate costs for the corresponding best subgroups for $R_5$ and $R_6$ are computed from their individual costs. Both nodes are then pushed into $P$. In the next step, $R_6$ is popped out, as it has the minimum cost. The computation for the children of $R_6$ is carried out in the same way. This procedure repeats, and at Step 5, $o_6$ is popped out. It is the first object popped out, which gives the minimum aggregate cost among all data objects. FSNNK-BF returns $o_6$ and the corresponding best subgroup $\{q_1,q_4,q_3\}$. \qed
At start, the tree root is popped out. The individual  costs for children $R_5$ and $R_6$ are computed. The best subgroups for $R_5$ and $R_6$ are shown in the $sg_m$ column. The aggregate costs for $R_5$ and $R_6$ are also computed. Both nodes are then pushed into $P$. In the next step, $R_6$ is popped out, as it has the minimum cost. The computation for the children of $R_6$ is carried out in the same way. This procedure repeats, and at Step 5, $o_6$ is popped out. It is the first object popped out, which gives the minimum aggregate cost among all data objects. FSNNK-BF returns $o_6$ and the corresponding best subgroup $\{q_1,q_4,q_3\}$ as the query answer. \qed
\end{example}

%The proof of correctness of FSNNK-BF can be derived similar to that of GNNK-BF. We omit it due to space limit.

\subsection{Algorithms for MFSNNK} \label{subsection:fsnnk-extended}

\begin{algorithm}[ht!]
    \caption{MFSNNK-BF ($R, Q, m, f$)}
    \begin{algorithmic}[1]
    \INPUT IR-tree index $R$ of all data objects, $n$ query points $Q = \{q_1, q_2, ..., q_n\}$, minimum subgroup size $m ( m \leq n)$, monotonic cost function $f$.
    \OUTPUT A set of $\langle data\_object, subgroup \rangle$ pairs $\langle o_k^*, sg_k^* \rangle$ for all subgroup sizes between $m$ and $n$ (inclusive), where $ \langle o_k^*, sg_k^* \rangle$ minimizes $f(cost(sg_k, o))$.
    %\newline
    \State Initialize a new min priority queue $P$
    \State $ min\_costs[i] \gets \infty $ for $ m \leq i \leq n $
    \State $ root.query\_costs[i] \gets 0 $ for $ m \leq i \leq n $
    \State $ P.push(root, 0) $
    %\\
    \Repeat
        \State $E \gets P.pop()$
        \If {$\exists i \in [m, n]$: $E.query\_costs[i]<min\_costs[i]$}
            \If {$E$ is an intermediate node}
    	        \ForAll {$N_c$ in $E.children$}
    		    \State Compute $cost(q_1, N_c), ... , cost(q_n, N_c)$
    		    \State $total\_cost \gets 0$
    		  %  \State $n.queryCosts \gets \emptyset$
    		    \For {$i = m \to n$}
    		    	\State $sg_i \gets$ top $i$ lowest cost query points
    		    	\State $total\_cost \pluseq f(cost(sg_i, N_c)) $
    		    	\State $N_c.query\_costs[i] = f(cost(sg_i, N_c)) $
    		    \EndFor
    		    %\If {exists $i \in [m, n]$ s.t. $f(cost(sg_i, N_c))<min\_costs[i]$}
   		    \If {$f(cost(sg_i, N_c))<min\_costs[i]$ for any\par
        			\hskip\algorithmicindent \hspace{1.2cm} subgroup size $i \in [m, n]$}
    				\State $P.push(N_c, total\_cost)$
    			\EndIf
    		  \EndFor
            \ElsIf{$E$ is a leaf node}
    	  	    \ForAll {$o$ in $N.children$}
    		    \State Compute $cost(q_1, o), ... , cost(q_n, o)$
    		    \For {$i = m \to n$}
    		    	\State $sg_i \gets$ top $i$ lowest cost query points
    		    	\If {$f(cost(sg_i, o)) < min\_costs[i]$}
    		    		\State $min\_costs[i] \gets f(cost(sg_i, o)) $
    		    		\State $best\_objects[i] \gets o$
    		    		\State $best\_subgroups[i] \gets sg_i$
    		    	\EndIf
    		    \EndFor
    		  \EndFor
          \EndIf
        \EndIf
    \Until{$P$ is empty}
    \State return $best\_objects, best\_subgroups$
  \end{algorithmic}
  \label{algorithm:fsnnk-ex}
\end{algorithm}

To process the MFSNNK query with a minimum subgroup size $m$, we can run FSNNK-BF $n-m+1$ times (for subgroup sizes $m, m+1, ..., n$) and return the combined results. We call this the MFSNNK-N algorithm. However, MFSNNK-N requires accessing the dataset $n-m+1$ times, which is too expensive. To avoid this repeated data access, we design an algorithm based on best-first method that can find the best data objects for all subgroup sizes between $m$ and $n$ in a single pass over the dataset. 
The algorithm is based on the following heuristic.

\begin{heuristic} \label{heuristic:mfsnnk}
Let $N$ be an IR-tree node and $m$ be the minimum subgroup size. Let $sg_i$ be the best subgroup of size $i$ ($m \leq i \leq n$), and $min\_cost_i$ be the smallest cost for subgroup size $i$ from any object retrieved so far. We can safely prune $N$ if $f(cost(sg_i, N)) \geq min\_cost_i$ for any $i$.
\end{heuristic}

The proof of correctness is straightforward based on Heuristic~\ref{heuristic:gnnk} and Heuristic~\ref{heuristic:fsnnk}, 
and is omitted due to space limit. 

Algorithm~\ref{algorithm:fsnnk-ex} summarizes the proposed procedure, denoted as MFSNNK-BF. The algorithm maintains a minimum priority queue $P$ 
to manage the nodes/objects to be visited from the IR-tree (Line 1). The minimum costs for all subgroup sizes in the range [$m, n$] are set to $\infty$  at the beginning (Line 2). 
Each tree node to be visited is associated with an array $query\_costs$ that keeps track of the aggregate costs for all subgroup sizes in the range [$m$, $n$] (Line 3). 
The algorithm pushes the tree root into the queue $P$ and then the main loop begins (Lines 6-36). 
At each iteration, an element is popped out from $P$. The associated $query\_costs$ (already computed at a previous iteration) is compared with $min\_costs$. 
If $query\_cost$ for any subgroup size is lower than the $min\_cost$ of that subgroup size, the element needs to be considered further. Otherwise the element is pruned according to Heuristic~\ref{heuristic:mfsnnk} (Lines 7-8). There are two cases to be further considered: \textit{(i)} If the element is an intermediate node, then we compute the costs for each child node (Lines 10-11).  We compute the aggregate cost for each subgroup size in the range [$m$, $n$] (Lines 13-17), and store the corresponding best query subgroup in $sg_i$. If the aggregate cost is larger than $min\_cost$ for all subgroup sizes, the child node can be safely pruned. Otherwise we insert the child node into $P$ according to its total cost. (Lines 18-20) \textit{(ii)} If the element is a leaf node, then a similar computation is performed for each child object (Lines 23-26). If the aggregate cost is less than $min\_cost$ for a subgroup size $i$, then $min\_costs[i], best\_objects[i]$, and $best\_subgroups[i]$ are updated (Lines 27-31). 

%The main loop terminates when $P$ becomes empty (Line 36). The algorithm then terminates by returning the best objects and the corresponding subgroups.

\begin{table}[ht!]
\renewcommand{\arraystretch}{1.2}
\centering
%\small
%\setlength{\abovecaptionskip}{0pt}
\caption{Example of the MFSNNK-BF algorithm}
\begin{adjustbox}{max width=\columnwidth}
\begin{tabular}{|l|l|l|l|l|l|l|}
\hline 
\textbf{Step} & \textbf{Elm} & $m=3$, $m=4$, $m=5$ & $best\_obj$ &$min\_costs$ &$total\_cost$&\textbf{P} (updated) \\ \hline
\multirow{2}{*}{1} & \multirow{2}{*}{$root$} & $R_5$ : 1.2, 1.775, 2.475 & \{$\emptyset,\emptyset,\emptyset$\}&\{$\infty, \infty, \infty$\} &5.45&\multirow{2}{*}{$R_6,R_5$} \\
& & $R_6$ : 0.225, 0.45, 0.725 & \{$\emptyset,\emptyset,\emptyset$\}&\{$\infty, \infty, \infty$\} &1.4& \\ \hline

\multirow{2}{*}{2} & \multirow{2}{*}{$R_6$} & $R_3$ : 1.2, 1.075, 1.75 & \{$\emptyset,\emptyset,\emptyset$\}&\{$\infty, \infty, \infty$\} &4.025&\multirow{2}{*}{$R_4,R_3,R_5$} \\
& & $R_4$ : 0.225, 0.775, 1.1 & \{$\emptyset,\emptyset,\emptyset$\}&\{$\infty, \infty, \infty$\} &2.13& \\ \hline

\multirow{2}{*}{3} & \multirow{2}{*}{$R_4$} & $o_6$ : 0.75, 1.325, 2.05 & \{$o_6,o_6,o_6$\}&\{$0.75, 1.325, 2.05$\} &$\emptyset$&\multirow{2}{*}{$R_3,R_5$} \\
& & $o_7$ : 0.8, 1.125, 1.625 & \{$o_6,o_7,o_7$\}&\{$0.75, 1.125, 1.625$\} &$\emptyset$& \\ \hline

\multirow{2}{*}{4} & \multirow{2}{*}{$R_3$} & $o_4$ : 1.15, 1.9, 2.75 & \{$o_6,o_6,o_7$\}&\{$0.75, 1.125, 1.625$\} &$\emptyset$&\multirow{2}{*}{$R_5$} \\
& & $o_5$ : 1.7, 2.325, 3.0 & \{$o_6,o_6,o_7$\}&\{$0.75, 1.125, 1.625$\} &$\emptyset$& \\ \hline

5 & $R_5$ & \multicolumn{5}{l|}{$R_5.query\_costs[i] > min\_costs[i]$ for all $m=i$, and hence $R_5$ can be pruned}\\ \hline

\end{tabular}
\end{adjustbox}
\label{table:fsnnk-ext-calc}
\end{table}

%\vspace{-10pt}

\begin{example}(\textbf{MFSNNK-BF}).
%We continue with Example~\ref{example:cost-model}. Let the minimum subgroup size be 3. Then we need to find the best objects and the corresponding subgroups for $m=3$, $m=4$, and $m=5$. The algorithm steps are shown in Table~\ref{table:fsnnk-ext-calc}. Column $Elm$ shows the elements popped out from the queue $P$. The following column shows the aggregate costs ($f_m(cost)$) for different subgroup sizes. $total\_cost$ is the sum of $f_m(cost)$ for all subgroup sizes.
We continue with Example~\ref{example:cost-model} for MFSNNK-BF. Let the minimum subgroup size be 3. Then we need to find the best objects and the corresponding subgroups for $m=3$, $m=4$, and $m=5$. The algorithm steps are shown in Table~\ref{table:fsnnk-ext-calc}. Column $Elm$ shows the elements popped out from the queue $P$. The following column show the aggregate costs ($f_m(cost)$) for different subgroup sizes. $total\_cost$ is the sum of $f_m(cost)$ for all subgroup sizes.

At start, $min\_costs$ is initialized with value $\infty$, and $root$ is pushed into $P$. Then nodes are popped and calculations are performed as shown in Step 1 and Step 2. At Step 3, $R_4$ is popped out, which is a leaf node. The algorithm updates $min\_costs$, $best\_objects$ and the subgroup set $best\_subgroups$ as $R_4$ has objects as children. When a object is popped, $best\_objects$ are updated according to $min\_costs$. When $P$ becomes empty after Step 5, the algorithm returns $(o_6,\{q_4,q_1,q_2\})$, $(o_7,\{q_4,q_1,q_3,q_2\})$, $(o_7,\{q_4,q_1,q_3,q_2,q_5\})$. \qed

\label{example:fsnnk-extended}
\end{example}

\textbf{A relaxed pruning bound.}
Heuristic~\ref{heuristic:mfsnnk} states  that, for an IR-tree node $N$, if $min\_cost_i$ is the smallest cost for subgroup size $i$ found so far, then we can prune $N$ if $f(cost(sg_i, N)) \geq min\_cost_i$ for any $i \in [m..n]$. Here, $sg_i$ denotes the best subgroup of size $i$ corresponding to $N$. The MFSNNK-BF algorithm based on this heuristic has a for-loop to compute  $f(cost(sg_i, N))$ and test if $f(cost(sg_i, N)) \geq min\_cost_i$ holds for any $i$ (Lines 12 to 17 in Algorithm~\ref{algorithm:fsnnk-ex}).

A possible simplification is to only test whether $f(cost\-(sg_m, N)) \geq min\_cost_n$, i.e., whether the best subgroup of size $m$ corresponding to $N$ has a cost lower than the $min\_cost$ for the whole group of size $n$ found so far. If this holds, then $N$ can be safely pruned, as formalized by the following heuristic. 

\begin{heuristic} \label{heuristic:mfsnnk2}
Let $N$ be an IR-tree node and $m$ be the minimum subgroup size. Let $sg_m$ be the best subgroup of size $m$ corresponding to $N$, and $min\_cost_n$ be the smallest cost for the whole group of size $n$ from any object retrieved so far. We can safely prune $N$ if $f(cost(sg_m, N)) \geq min\_cost_n$.
\end{heuristic}

The proof is straightforward. Since we consider a monotonic aggregate cost function, we have:
$$f(cost(sg_m, N)) \le f(cost(sg_{m+1}, N)) \le ... \le f(cost(sg_n, N)).$$
 If 
 $$f(cost\-(sg_m, N)) \geq min\_cost_n,$$
then 
$$min\_cost_n \le f(cost(sg_m, N)) \le ... \le f(cost(sg_n, N)).$$
Thus, we can safely prune $N$. Applying this heuristic, Lines 12 to 17 of Algorithm~\ref{algorithm:fsnnk-ex}
can be replaced by: 

%If $f(cost(sg_m, N_c))<min\_costs[n]$ then \newline
%\hspace*{1.0cm}$P.push(N_c, f(cost(sg_m, N_c))).$

\begin{equation*}
\begin{split}
\text{If }f(cost(sg_m, N_c))<min\_costs[n]\text{ then} \\ P.push(N_c, f(cost(sg_m, N_c)))
\end{split}
\end{equation*}

Note that, while this heuristic simplifies the node pruning computation, it also relaxes the pruning bound, which may cause more nodes to be processed. We will use experiments to study the effectiveness of this heuristic. 

\subsection{Discussion}
All the algorithms presented in the previous subsections can be straightforwardly extended to find the $k$ best objects. Both the GNNK-BF and FSNNK-BF algorithms incrementally output the best objects. The first $k$ objects accessed by these algorithms are the $k$ best objects. Particularly, in the case of FSNNK-BF, we can use a queue to store the 
$k$ best objects and the corresponding best subgroups. For the GNNK-BB, FSNNK-BB and MFSNNK-BF algorithms, we can use a heap of size $k$ to hold $k$ currently found best objects. When the algorithms terminate, the heap contains the $k$ best objects. Same as in FSNNK-BF, for the subgroup queries we can store the 
best objects and the corresponding best subgroups together, so that when the algorithms terminate, we not only obtain the best objects but also the corresponding best subgroups.

Though in our problem formulation, we assume that all users in the group have equal priorities, our proposed cost function can be adapted for users with different priorities. Assume that each individual query $q_i$ has a priority $p_i$ associated with it, where for a group of $n$ queries $p_1 + p_2 + ... + p_n = n$. To incorporate user priorities, we need to modify our definition of aggregate cost function as follows: $ f(cost(Q, o)) = f(\frac{cost(q_j, o)}{p_i} : q_j \in Q) $. Thus, users with higher priorities (i.e., larger priority values $p_i$) would have lower costs, and hence the algorithms will tend to converge more to the objects that are spatially closer and textually more similar to the users with higher priority.

%\subsection{Keyword Dropping} \label{subsection:keyword-dropping}
%Assume that, the users can assign different weights to the keywords. In this case, we consider a variation of dropping some keywords with relatively lower weights. Finding a solution with all the keywords can be difficult, whereas dropping keywords with lower weights can supply better results. Reducing keyword set can have a better impact in finding a object for all the members for the group. Instead of finding a subgroup for a better solution, keyword dropping may have suitable solutions for the whole group.   
%\begin{example}
%We continue Example \ref{example:cost-model} assuming that, the last keyword has the lowest weight in each multiple keyword set. If we drop the keywords of the lowest weights in each set, then in $GNNK$ query, $o_6$ would be chosen, whereas $o_7$ was chosen with no keyword dropping. The reason behind is that, $o_6$ is spatially closer to all the query points and $o_6$ contains all keywords after dropping the lowest weighted keywords.
%\end{example}

\section{Cost Analysis}\label{sec:complexitySummary}

\begin{table*}[ht!]
\renewcommand{\arraystretch}{1.3}
\centering
\caption{Summary of Costs}
\begin{adjustbox}{max width=0.9\textwidth}

%\begin{table}[htp]
%\centering
%\small
%\caption{Summary of Costs}
%\label{tab:complexitySummary}
    \begin{tabular}{l|l|l}
    \hline
    Algorithm &  I/O  & CPU \\
    \hline
    GNNK-BB     & $io_i + (1-w_{gb}) (\frac{|D|}{C_e-1}+ \frac{|D|}{C_e} \cdot  io_l)$ &$cpu_i + (1-w_{gb}) (\frac{|D|}{C_e-1} \cdot cpu_g+ \frac{|D|}{C_e} \cdot  cpu_l)$ \\
    \hline
    GNNK-BF    & $io_i + (1-w_{gf}) (\frac{|D|}{C_e-1}+ \frac{|D|}{C_e} \cdot  io_l) $ & $cpu_i + (1-w_{gf}) (\frac{|D|}{C_e-1}\cdot cpu_g+ \frac{|D|}{C_e} \cdot  cpu_l)$ \\
    \hline
    FSNNK-BB    & $io_i + (1-w_{sb}) (\frac{|D|}{C_e-1}+ \frac{|D|}{C_e} \cdot  io_l)$ & $cpu_i +  (1-w_{sb}) (\frac{|D|}{C_e-1}\cdot cpu_s+ \frac{|D|}{C_e} \cdot  cpu_l)$      \\
    \hline
    FSNNK-BF    & $io_i + (1-w_{sf}) (\frac{|D|}{C_e-1}+ \frac{|D|}{C_e} \cdot  io_l)$ & $cpu_i + (1-w_{sf}) (\frac{|D|}{C_e-1}\cdot cpu_s+ \frac{|D|}{C_e} \cdot  cpu_l)$ \\
    \hline
    MFSNNK-N    & $io_i + (n-m+1)(1-w_{sf}) (\frac{|D|}{C_e-1}+ \frac{|D|}{C_e} \cdot  io_l)$ & $cpu_i + (n-m+1)(1-w_{sf}) (\frac{|D|}{C_e-1}\cdot cpu_s+ \frac{|D|}{C_e} \cdot  cpu_l)$     \\
    \hline
    MFSNNK-BF    & $io_i + (1-w_{mb}) (\frac{|D|}{C_e-1}+ \frac{|D|}{C_e} \cdot  io_l)$ & $cpu_i +(1-w_{mb}) (\frac{|D|}{C_e-1}\cdot cpu_m+ \frac{|D|}{C_e} \cdot  cpu_l)$  \\
    \hline
%    \end{tabular}
%\end{table*}

\end{tabular}
\end{adjustbox}
\label{tab:complexitySummary}
\end{table*}

We analytically compare the I/O cost and CPU cost of 
the algorithms including GNNK-BB, GNNK-BF, FSNNK-BB, FSNNK-BF, MFSNNK-N, and MFSNNK-BF.
Table~\ref{tab:complexitySummary} summarizes the analytical results. Note that 
MFSNNK-N calls FSNNK-BF for $n-m+1$ times. Its costs are just a multiplication of those of FSNNK-BF. We will omit it in the discussion and simply list 
its costs in the table.

We use the following notation in the analysis. 
Let $C_m$ be the maximum number of entries in a disk block:
$$C_m = \text{block size} / \text{size of a data entry}$$
%$\left(i.e., C_m = \text{block size / size of a data entry}\right)$. 
%$(i.e., C_m =$ $\text{block size / size of a data entry})$. 
Let $C_{e}$ be the effective capacity of the 
IR-tree used to index the dataset $D$, i.e., the average number of entries in an IR-tree node. 
Let $|D|$ be the size of $D$. 
The average height of an IR-tree is
$h=\left\lceil\log_{C_e}{|D|}\right\rceil$. The
expected number of nodes in an IR-tree is the total number of nodes
in all tree levels (leaf nodes being level 1 and the root node
being level $h$), which is:  
\begin{equation*}
\resizebox{.9\hsize}{!}{
$\displaystyle \sum_{i=1}^{h}{\frac{|D|}{C_e^i}} =
|D|\left(\frac{1}{C_e} + \frac{1}{C_e^2}+\cdots+\frac{1}{C_e^h}\right) =
\frac{|D|}{C_e-1}(1-\frac{1}{C_e^h})\approx \frac{|D|}{C_e-1}.$
}
\end{equation*}

We assume that an IR-tree node size equals a disk block.
 
According to the structure of the IR-tree, an inverted index that maps keywords to the inner nodes of the tree
is stored separately from the tree structure.
When a group spatial keyword query is issued this inverted index is preloaded 
for all the query keywords, which will be used to guide the search to tree nodes that contain the query keywords.
The cost of this preloading, which is proportional to the number of keywords in both the data points and 
the queries, is the same for every algorithm studied. We denote the I/O cost and CPU cost 
of the preloading by $io_i$ and $cpu_i$, respectively.

\subsection{I/O Cost} \label{sec:ioCost}
For all the algorithms studied, the I/O costs depend on the number
of IR-tree nodes accessed. Further, when a leaf node is accessed, its corresponding 
inverted index that maps the keywords to the data points in the node is accessed as well. 
Analyzing the I/O cost of accessing an inverted index is beyond the scope of this paper. 
For simplicity, we denote this I/O cost by $io_l$, and the associated CPU cost by $cpu_l$.

GNNK-BB, GNNK-BF, FSNNK-BB, FSNNK-BF, and MFSNNK-BF all traverses the IR-tree 
for only once. In the worst case, all the tree nodes plus the inverted index of all the leaf nodes are accessed. 
Thus, the worst-case I/O costs for these methods are the same:
$\displaystyle \frac{|D|}{C_e-1} + \frac{|D|}{C_e} \cdot  io_l $. 

In the average case, some of the IR-tree nodes 
are pruned during the traversal. We quantify the percentage of
pruned nodes in the tree traversal as the pruning power, denoted by $w$; the number of nodes accessed is
then $\displaystyle (1-w) (\frac{|D|}{C_e-1}+ \frac{|D|}{C_e} \cdot  io_l) $ for all the algorithms except FSNNK-BF, where $w$ should be
replaced by $w_{gb}$, $w_{gf}$, $w_{sb}$,  $w_{sf}$, and $w_{mb}$ for GNNK-BB, GNNK-BF, FSNNK-BB, FSNNK-BF, and MFSNNK-BF, respectively.
Note that we use $w$ to represent the pruning power on both inner nodes and leaf nodes, which might be different in reality. We argue that 
this is still a reasonable simplification since the number of leaf nodes pruned will be proportional to the number of inner nodes pruned. Also we aim to compare 
the costs of the different algorithms, not to compute the exact costs. 

The pruning power of the different algorithms is associated with the metrics 
used to determine whether a tree node needs to be accessed. In the algorithms studied, the same pruning metric (e.g., $min\_cost$)  
is used for different algorithms of the same query variant (GNNK-BB and GNNK-BF for the GNNK query). However, the order that the tree nodes are accessed in the different algorithms of the same query variant (e.g., GNNK-BB and GNNK-BF) are different. This leads to different shrinking rates of the value of the pruning metric. In particular, 
the BF algorithms and MFSNNK-BF use best-first traversals, which always access the node with the smallest (estimated) optimization function value first. 
In comparison, the BB algorithms simply push the tree nodes into a stack, and access the node at the top of the stack regardless of the optimization function value. 
Heuristically, the BF algorithms'  pruning metric values should shrink faster. Additionally, the BF algorithms terminates early once a data entry is popped out from the queue, 
while the BB algorithms need to access every node in the stack anyway. Intuitively, the BF algorithms should have better pruning power than those of the corresponding BB algorithms, i.e, 
 $w_{gf} > w_{gb}$ and $w_{sf} > w_{sb}$. MFSNNK-BF only traverses the tree once, and it has a similar pruning strategy to that of FSNNK-BF. Its I/O cost is smaller than that of MFSNNK-N that calls FSNNK-BF multiple times.

\subsection{CPU Cost}
The CPU cost can be considered as the product of the CPU cost per
block (node) multiplied by the number of blocks (nodes) accessed.
The I/O cost analysis provides the number of nodes accessed. The
CPU cost per block, denoted by $cpu$, involves optimization function computation.

Both GNNK algorithms computes $f(cost(Q,N_c))$ for every child node $N_C$ when an inner node $N$ accessed, and 
$f(cost(Q,o))$ for every data point $o$ if $N$ is a leaf node. The CPU cost is proportional to the size of the query group ($n$), the size of the node $N$ ($C_e$), and 
the size of the keywords involved. Sine this per node CPU cost of both GNNK-BB and GNNK-BF is the same, we simply denote it by $cpu_{g}$.
Note that GNNK-BF still has a lower overall CPU cost as it accesses a smaller number of nodes. 

Similarly, we denote the per node CPU cost of FSNNK-BB and FSNNK-BF by $cpu_{s}$.
This CPU cost involves computing $cost(q_i,N_c)$ $(cost(q_i,o))$ for every query user $q_i$, finding to top-$m$ users, and computing 
$f()$ on the cost of these $m$ users. FSNNK-BF also has a lower overall CPU cost as it accesses a smaller number of nodes. 

MFSNNK-N has the same per node CPU cost $cpu_{s}$. Let the per node CPU cost of  MFSNNK-BF be $cpu_{m}$. This cost will be higher than $cpu_{s}$ 
as MFSNNK-N only computes the optimization function value of a given subgroup size each time it access a node, while MFSNNK-BF computes for $n-m+1$ subgroup sizes 
together. However, $cpu_{m} < (n-m+1) cpu_{s}$. This is because, as shown in lines 11 to 17 of the MFSNNK-BF algorithm, the functions $cost(q_i,N_c)$ are computed for 
only once rather than $n-m+1$ times, and the function $f()$ for the different sub-group size are computed progressively instead of repeatedly. As a result, the overall CPU cost 
of MFSNNK-BF will be lower than that of MFSNNK-N.

\section{Experimental Evaluation} \label{sec:experiment}

\subsection{Experimental Settings}
We evaluate the performance of our algorithms for all three types of queries GNNK, FSNNK, and MFSNNK. The branch and bound algorithms presented in Section~\ref{subsection:baseline} (GNNK-BB and FSNNK-BB) are used as the baseline for the GNNK and FSNNK queries. We compare our BF algorithms, GNNK-BF and FSNNK-BF (Section~\ref{subsection:pq}) with baselines. 
We use the MFSNNK-N algorithm as the baseline algorithm for the MFSNNK queries, and compare it with the MFSNNK-BF algorithm proposed in 
Section~\ref{subsection:fsnnk-extended}.

\begin{table}[ht!]
\renewcommand{\arraystretch}{1.0}
\caption{Dataset properties}
\label{table:dataset-properties}
\centering
%\small
%\tiny
\begin{adjustbox}{max width=\columnwidth}
\begin{tabular}{|c|c|c|}
\hline
Parameter               & \textbf{Flickr}   & \textbf{Yelp} \\ \hline
Dataset size                & 1,500,000     & 60,667    \\
Number of unique keywords   & 566,432       & 783       \\
Total number of keywords    & 11,579,622    & 176,697   \\
Avg. number of keywords per object & 7.72             & 2.91         \\  \hline
\end{tabular}
\end{adjustbox}
\end{table}

\begin{table}[ht!]
\caption{Query parameters}
\centering
%\small
%\tabcolsep=0.11cm
\begin{adjustbox}{max width=\columnwidth}
\begin{tabular}{|c|c|c|}
\hline
 Parameter name                         		& Values     						& Default Value \\ 
 \hline
 Number of queried data points ($k$)        	& 1, 10, 20, 30, 40, 50       			& 10     \\
 Query group size ($n$)                       	& 10, 20, 40, 60, 80           	& 10    \\
 Subgroup size ($m$, \%$n$)    		& 40\%, 50\%, 60\%, 70\%, 80\%    	& 60\%  \\
 Number of query keywords               		& 1, 2, 4, 6, 8, 10               	& 4     \\
 Size of the query space                		& .001\%, .01\%, .02\%, .03\%, .04\%, .05\%   	& 0.01\%   \\
 Size of the query keyword set              	& 1\%, 2\%, 3\%, 4\%, 5\%  			& 3\%     \\
 Spatial vs. textual preference ($\alpha$)    	& 0.1, 0.3, 0.5, 0.7, 1.0   		& 0.5 \\
 Dataset Size (Flickr)							& 1M, 1.5M, 2M, 2.5M					& 1.5M \\
 \hline
\end{tabular}
\end{adjustbox}
\label{table:query-parameters}
\end{table}

% Varying Required Data Points
\begin {figure}[!ht]
\setlength{\belowcaptionskip}{4pt}
\setlength{\abovecaptionskip}{4pt}
	\centering
	\begin{subfigure}[b]{0.5\linewidth}
		\centering
		\setlength{\abovecaptionskip}{2pt}
		\setlength{\belowcaptionskip}{2pt}
		\resizebox{\textwidth}{!}{
			% GNUPLOT: LaTeX picture with Postscript
\begingroup
  \makeatletter
  \providecommand\color[2][]{%
    \GenericError{(gnuplot) \space\space\space\@spaces}{%
      Package color not loaded in conjunction with
      terminal option `colourtext'%
    }{See the gnuplot documentation for explanation.%
    }{Either use 'blacktext' in gnuplot or load the package
      color.sty in LaTeX.}%
    \renewcommand\color[2][]{}%
  }%
  \providecommand\includegraphics[2][]{%
    \GenericError{(gnuplot) \space\space\space\@spaces}{%
      Package graphicx or graphics not loaded%
    }{See the gnuplot documentation for explanation.%
    }{The gnuplot epslatex terminal needs graphicx.sty or graphics.sty.}%
    \renewcommand\includegraphics[2][]{}%
  }%
  \providecommand\rotatebox[2]{#2}%
  \@ifundefined{ifGPcolor}{%
    \newif\ifGPcolor
    \GPcolorfalse
  }{}%
  \@ifundefined{ifGPblacktext}{%
    \newif\ifGPblacktext
    \GPblacktexttrue
  }{}%
  % define a \g@addto@macro without @ in the name:
  \let\gplgaddtomacro\g@addto@macro
  % define empty templates for all commands taking text:
  \gdef\gplbacktext{}%
  \gdef\gplfronttext{}%
  \makeatother
  \ifGPblacktext
    % no textcolor at all
    \def\colorrgb#1{}%
    \def\colorgray#1{}%
  \else
    % gray or color?
    \ifGPcolor
      \def\colorrgb#1{\color[rgb]{#1}}%
      \def\colorgray#1{\color[gray]{#1}}%
      \expandafter\def\csname LTw\endcsname{\color{white}}%
      \expandafter\def\csname LTb\endcsname{\color{black}}%
      \expandafter\def\csname LTa\endcsname{\color{black}}%
      \expandafter\def\csname LT0\endcsname{\color[rgb]{1,0,0}}%
      \expandafter\def\csname LT1\endcsname{\color[rgb]{0,1,0}}%
      \expandafter\def\csname LT2\endcsname{\color[rgb]{0,0,1}}%
      \expandafter\def\csname LT3\endcsname{\color[rgb]{1,0,1}}%
      \expandafter\def\csname LT4\endcsname{\color[rgb]{0,1,1}}%
      \expandafter\def\csname LT5\endcsname{\color[rgb]{1,1,0}}%
      \expandafter\def\csname LT6\endcsname{\color[rgb]{0,0,0}}%
      \expandafter\def\csname LT7\endcsname{\color[rgb]{1,0.3,0}}%
      \expandafter\def\csname LT8\endcsname{\color[rgb]{0.5,0.5,0.5}}%
    \else
      % gray
      \def\colorrgb#1{\color{black}}%
      \def\colorgray#1{\color[gray]{#1}}%
      \expandafter\def\csname LTw\endcsname{\color{white}}%
      \expandafter\def\csname LTb\endcsname{\color{black}}%
      \expandafter\def\csname LTa\endcsname{\color{black}}%
      \expandafter\def\csname LT0\endcsname{\color{black}}%
      \expandafter\def\csname LT1\endcsname{\color{black}}%
      \expandafter\def\csname LT2\endcsname{\color{black}}%
      \expandafter\def\csname LT3\endcsname{\color{black}}%
      \expandafter\def\csname LT4\endcsname{\color{black}}%
      \expandafter\def\csname LT5\endcsname{\color{black}}%
      \expandafter\def\csname LT6\endcsname{\color{black}}%
      \expandafter\def\csname LT7\endcsname{\color{black}}%
      \expandafter\def\csname LT8\endcsname{\color{black}}%
    \fi
  \fi
    \setlength{\unitlength}{0.0500bp}%
    \ifx\gptboxheight\undefined%
      \newlength{\gptboxheight}%
      \newlength{\gptboxwidth}%
      \newsavebox{\gptboxtext}%
    \fi%
    \setlength{\fboxrule}{0.5pt}%
    \setlength{\fboxsep}{1pt}%
\begin{picture}(4320.00,2880.00)%
    \gplgaddtomacro\gplbacktext{%
      \csname LTb\endcsname%
      \put(688,512){\makebox(0,0)[r]{\strut{}$0$}}%
      \put(688,730){\makebox(0,0)[r]{\strut{}$500$}}%
      \put(688,947){\makebox(0,0)[r]{\strut{}$1000$}}%
      \put(688,1165){\makebox(0,0)[r]{\strut{}$1500$}}%
      \put(688,1382){\makebox(0,0)[r]{\strut{}$2000$}}%
      \put(688,1600){\makebox(0,0)[r]{\strut{}$2500$}}%
      \put(688,1817){\makebox(0,0)[r]{\strut{}$3000$}}%
      \put(688,2035){\makebox(0,0)[r]{\strut{}$3500$}}%
      \put(688,2252){\makebox(0,0)[r]{\strut{}$4000$}}%
      \put(688,2470){\makebox(0,0)[r]{\strut{}$4500$}}%
      \put(688,2687){\makebox(0,0)[r]{\strut{}$5000$}}%
      \put(784,352){\makebox(0,0){\strut{}1}}%
      \put(1380,352){\makebox(0,0){\strut{}10}}%
      \put(2043,352){\makebox(0,0){\strut{}20}}%
      \put(2706,352){\makebox(0,0){\strut{}30}}%
      \put(3368,352){\makebox(0,0){\strut{}40}}%
      \put(4031,352){\makebox(0,0){\strut{}50}}%
    }%
    \gplgaddtomacro\gplfronttext{%
      \csname LTb\endcsname%
      \put(128,1599){\rotatebox{-270}{\makebox(0,0){\strut{}running time (ms)}}}%
      \put(2407,112){\makebox(0,0){\strut{}$k$}}%
      \csname LTb\endcsname%
      \put(3296,2544){\makebox(0,0)[r]{\strut{}GNNK-BB (SUM)}}%
      \csname LTb\endcsname%
      \put(3296,2384){\makebox(0,0)[r]{\strut{}GNNK-BF (SUM)}}%
      \csname LTb\endcsname%
      \put(3296,2224){\makebox(0,0)[r]{\strut{}GNNK-BB (MAX)}}%
      \csname LTb\endcsname%
      \put(3296,2064){\makebox(0,0)[r]{\strut{}GNNK-BF (MAX)}}%
    }%
    \gplbacktext
    \put(0,0){\includegraphics{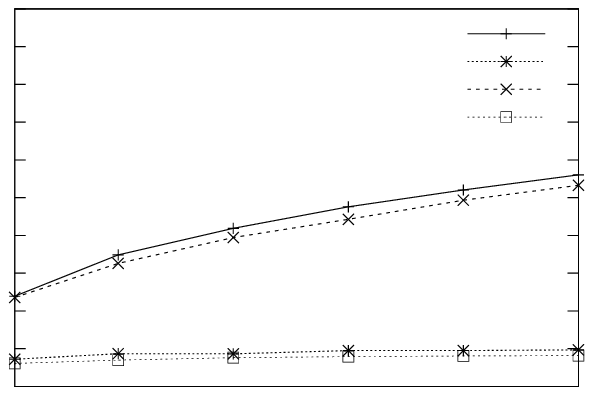}}%
    \gplfronttext
  \end{picture}%
\endgroup

		}
		%\caption{Running time}
		\caption{}
		\label{graph:k-run}
	\end{subfigure}%
	\begin{subfigure}[b]{0.5\linewidth}
		\centering
		\setlength{\abovecaptionskip}{2pt}
		\setlength{\belowcaptionskip}{2pt}
		\resizebox{\textwidth}{!}{
			% GNUPLOT: LaTeX picture with Postscript
\begingroup
  \makeatletter
  \providecommand\color[2][]{%
    \GenericError{(gnuplot) \space\space\space\@spaces}{%
      Package color not loaded in conjunction with
      terminal option `colourtext'%
    }{See the gnuplot documentation for explanation.%
    }{Either use 'blacktext' in gnuplot or load the package
      color.sty in LaTeX.}%
    \renewcommand\color[2][]{}%
  }%
  \providecommand\includegraphics[2][]{%
    \GenericError{(gnuplot) \space\space\space\@spaces}{%
      Package graphicx or graphics not loaded%
    }{See the gnuplot documentation for explanation.%
    }{The gnuplot epslatex terminal needs graphicx.sty or graphics.sty.}%
    \renewcommand\includegraphics[2][]{}%
  }%
  \providecommand\rotatebox[2]{#2}%
  \@ifundefined{ifGPcolor}{%
    \newif\ifGPcolor
    \GPcolorfalse
  }{}%
  \@ifundefined{ifGPblacktext}{%
    \newif\ifGPblacktext
    \GPblacktexttrue
  }{}%
  % define a \g@addto@macro without @ in the name:
  \let\gplgaddtomacro\g@addto@macro
  % define empty templates for all commands taking text:
  \gdef\gplbacktext{}%
  \gdef\gplfronttext{}%
  \makeatother
  \ifGPblacktext
    % no textcolor at all
    \def\colorrgb#1{}%
    \def\colorgray#1{}%
  \else
    % gray or color?
    \ifGPcolor
      \def\colorrgb#1{\color[rgb]{#1}}%
      \def\colorgray#1{\color[gray]{#1}}%
      \expandafter\def\csname LTw\endcsname{\color{white}}%
      \expandafter\def\csname LTb\endcsname{\color{black}}%
      \expandafter\def\csname LTa\endcsname{\color{black}}%
      \expandafter\def\csname LT0\endcsname{\color[rgb]{1,0,0}}%
      \expandafter\def\csname LT1\endcsname{\color[rgb]{0,1,0}}%
      \expandafter\def\csname LT2\endcsname{\color[rgb]{0,0,1}}%
      \expandafter\def\csname LT3\endcsname{\color[rgb]{1,0,1}}%
      \expandafter\def\csname LT4\endcsname{\color[rgb]{0,1,1}}%
      \expandafter\def\csname LT5\endcsname{\color[rgb]{1,1,0}}%
      \expandafter\def\csname LT6\endcsname{\color[rgb]{0,0,0}}%
      \expandafter\def\csname LT7\endcsname{\color[rgb]{1,0.3,0}}%
      \expandafter\def\csname LT8\endcsname{\color[rgb]{0.5,0.5,0.5}}%
    \else
      % gray
      \def\colorrgb#1{\color{black}}%
      \def\colorgray#1{\color[gray]{#1}}%
      \expandafter\def\csname LTw\endcsname{\color{white}}%
      \expandafter\def\csname LTb\endcsname{\color{black}}%
      \expandafter\def\csname LTa\endcsname{\color{black}}%
      \expandafter\def\csname LT0\endcsname{\color{black}}%
      \expandafter\def\csname LT1\endcsname{\color{black}}%
      \expandafter\def\csname LT2\endcsname{\color{black}}%
      \expandafter\def\csname LT3\endcsname{\color{black}}%
      \expandafter\def\csname LT4\endcsname{\color{black}}%
      \expandafter\def\csname LT5\endcsname{\color{black}}%
      \expandafter\def\csname LT6\endcsname{\color{black}}%
      \expandafter\def\csname LT7\endcsname{\color{black}}%
      \expandafter\def\csname LT8\endcsname{\color{black}}%
    \fi
  \fi
    \setlength{\unitlength}{0.0500bp}%
    \ifx\gptboxheight\undefined%
      \newlength{\gptboxheight}%
      \newlength{\gptboxwidth}%
      \newsavebox{\gptboxtext}%
    \fi%
    \setlength{\fboxrule}{0.5pt}%
    \setlength{\fboxsep}{1pt}%
\begin{picture}(4320.00,2880.00)%
    \gplgaddtomacro\gplbacktext{%
      \csname LTb\endcsname%
      \put(592,512){\makebox(0,0)[r]{\strut{}$0$}}%
      \put(592,784){\makebox(0,0)[r]{\strut{}$100$}}%
      \put(592,1056){\makebox(0,0)[r]{\strut{}$200$}}%
      \put(592,1328){\makebox(0,0)[r]{\strut{}$300$}}%
      \put(592,1600){\makebox(0,0)[r]{\strut{}$400$}}%
      \put(592,1871){\makebox(0,0)[r]{\strut{}$500$}}%
      \put(592,2143){\makebox(0,0)[r]{\strut{}$600$}}%
      \put(592,2415){\makebox(0,0)[r]{\strut{}$700$}}%
      \put(592,2687){\makebox(0,0)[r]{\strut{}$800$}}%
      \put(688,352){\makebox(0,0){\strut{}1}}%
      \put(1302,352){\makebox(0,0){\strut{}10}}%
      \put(1984,352){\makebox(0,0){\strut{}20}}%
      \put(2667,352){\makebox(0,0){\strut{}30}}%
      \put(3349,352){\makebox(0,0){\strut{}40}}%
      \put(4031,352){\makebox(0,0){\strut{}50}}%
    }%
    \gplgaddtomacro\gplfronttext{%
      \csname LTb\endcsname%
      \put(128,1599){\rotatebox{-270}{\makebox(0,0){\strut{}\# page accesses}}}%
      \put(2359,112){\makebox(0,0){\strut{}$k$}}%
      \csname LTb\endcsname%
      \put(3296,2544){\makebox(0,0)[r]{\strut{}GNNK-BB (SUM)}}%
      \csname LTb\endcsname%
      \put(3296,2384){\makebox(0,0)[r]{\strut{}GNNK-BF (SUM)}}%
      \csname LTb\endcsname%
      \put(3296,2224){\makebox(0,0)[r]{\strut{}GNNK-BB (MAX)}}%
      \csname LTb\endcsname%
      \put(3296,2064){\makebox(0,0)[r]{\strut{}GNNK-BF (MAX)}}%
    }%
    \gplbacktext
    \put(0,0){\includegraphics{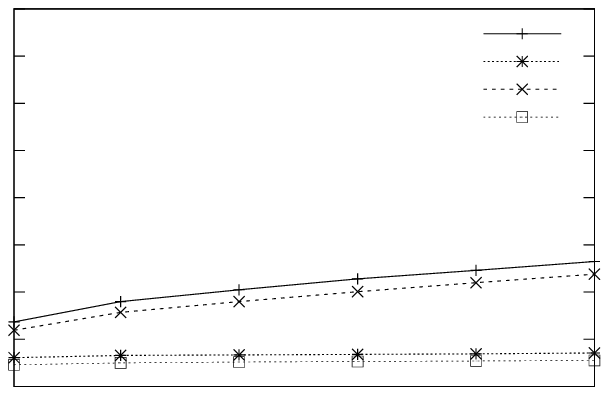}}%
    \gplfronttext
  \end{picture}%
\endgroup

		}
		%\caption{I/O}
		\caption{}
		\label{graph:k-io}
	\end{subfigure}%
	%\caption{Varying $k$}
	%\label{graph:k}
%\end {figure}

% Varying Group Size
%\begin {figure}[!ht]
%\setlength{\belowcaptionskip}{-1pt}
%\setlength{\abovecaptionskip}{-1pt}
	%\centering
	\begin{subfigure}[b]{0.5\linewidth}
		\centering
		\setlength{\abovecaptionskip}{2pt}
		\setlength{\belowcaptionskip}{2pt}
		\resizebox{\textwidth}{!}{
			% GNUPLOT: LaTeX picture with Postscript
\begingroup
  \makeatletter
  \providecommand\color[2][]{%
    \GenericError{(gnuplot) \space\space\space\@spaces}{%
      Package color not loaded in conjunction with
      terminal option `colourtext'%
    }{See the gnuplot documentation for explanation.%
    }{Either use 'blacktext' in gnuplot or load the package
      color.sty in LaTeX.}%
    \renewcommand\color[2][]{}%
  }%
  \providecommand\includegraphics[2][]{%
    \GenericError{(gnuplot) \space\space\space\@spaces}{%
      Package graphicx or graphics not loaded%
    }{See the gnuplot documentation for explanation.%
    }{The gnuplot epslatex terminal needs graphicx.sty or graphics.sty.}%
    \renewcommand\includegraphics[2][]{}%
  }%
  \providecommand\rotatebox[2]{#2}%
  \@ifundefined{ifGPcolor}{%
    \newif\ifGPcolor
    \GPcolorfalse
  }{}%
  \@ifundefined{ifGPblacktext}{%
    \newif\ifGPblacktext
    \GPblacktexttrue
  }{}%
  % define a \g@addto@macro without @ in the name:
  \let\gplgaddtomacro\g@addto@macro
  % define empty templates for all commands taking text:
  \gdef\gplbacktext{}%
  \gdef\gplfronttext{}%
  \makeatother
  \ifGPblacktext
    % no textcolor at all
    \def\colorrgb#1{}%
    \def\colorgray#1{}%
  \else
    % gray or color?
    \ifGPcolor
      \def\colorrgb#1{\color[rgb]{#1}}%
      \def\colorgray#1{\color[gray]{#1}}%
      \expandafter\def\csname LTw\endcsname{\color{white}}%
      \expandafter\def\csname LTb\endcsname{\color{black}}%
      \expandafter\def\csname LTa\endcsname{\color{black}}%
      \expandafter\def\csname LT0\endcsname{\color[rgb]{1,0,0}}%
      \expandafter\def\csname LT1\endcsname{\color[rgb]{0,1,0}}%
      \expandafter\def\csname LT2\endcsname{\color[rgb]{0,0,1}}%
      \expandafter\def\csname LT3\endcsname{\color[rgb]{1,0,1}}%
      \expandafter\def\csname LT4\endcsname{\color[rgb]{0,1,1}}%
      \expandafter\def\csname LT5\endcsname{\color[rgb]{1,1,0}}%
      \expandafter\def\csname LT6\endcsname{\color[rgb]{0,0,0}}%
      \expandafter\def\csname LT7\endcsname{\color[rgb]{1,0.3,0}}%
      \expandafter\def\csname LT8\endcsname{\color[rgb]{0.5,0.5,0.5}}%
    \else
      % gray
      \def\colorrgb#1{\color{black}}%
      \def\colorgray#1{\color[gray]{#1}}%
      \expandafter\def\csname LTw\endcsname{\color{white}}%
      \expandafter\def\csname LTb\endcsname{\color{black}}%
      \expandafter\def\csname LTa\endcsname{\color{black}}%
      \expandafter\def\csname LT0\endcsname{\color{black}}%
      \expandafter\def\csname LT1\endcsname{\color{black}}%
      \expandafter\def\csname LT2\endcsname{\color{black}}%
      \expandafter\def\csname LT3\endcsname{\color{black}}%
      \expandafter\def\csname LT4\endcsname{\color{black}}%
      \expandafter\def\csname LT5\endcsname{\color{black}}%
      \expandafter\def\csname LT6\endcsname{\color{black}}%
      \expandafter\def\csname LT7\endcsname{\color{black}}%
      \expandafter\def\csname LT8\endcsname{\color{black}}%
    \fi
  \fi
    \setlength{\unitlength}{0.0500bp}%
    \ifx\gptboxheight\undefined%
      \newlength{\gptboxheight}%
      \newlength{\gptboxwidth}%
      \newsavebox{\gptboxtext}%
    \fi%
    \setlength{\fboxrule}{0.5pt}%
    \setlength{\fboxsep}{1pt}%
\begin{picture}(4320.00,2880.00)%
    \gplgaddtomacro\gplbacktext{%
      \csname LTb\endcsname%
      \put(688,512){\makebox(0,0)[r]{\strut{}$0$}}%
      \put(688,754){\makebox(0,0)[r]{\strut{}$500$}}%
      \put(688,995){\makebox(0,0)[r]{\strut{}$1000$}}%
      \put(688,1237){\makebox(0,0)[r]{\strut{}$1500$}}%
      \put(688,1479){\makebox(0,0)[r]{\strut{}$2000$}}%
      \put(688,1720){\makebox(0,0)[r]{\strut{}$2500$}}%
      \put(688,1962){\makebox(0,0)[r]{\strut{}$3000$}}%
      \put(688,2204){\makebox(0,0)[r]{\strut{}$3500$}}%
      \put(688,2445){\makebox(0,0)[r]{\strut{}$4000$}}%
      \put(688,2687){\makebox(0,0)[r]{\strut{}$4500$}}%
      \put(784,352){\makebox(0,0){\strut{}10}}%
      \put(1248,352){\makebox(0,0){\strut{}20}}%
      \put(2176,352){\makebox(0,0){\strut{}40}}%
      \put(3103,352){\makebox(0,0){\strut{}60}}%
      \put(4031,352){\makebox(0,0){\strut{}80}}%
    }%
    \gplgaddtomacro\gplfronttext{%
      \csname LTb\endcsname%
      \put(128,1599){\rotatebox{-270}{\makebox(0,0){\strut{}running time (ms)}}}%
      \put(2407,112){\makebox(0,0){\strut{}Group Size}}%
      \csname LTb\endcsname%
      \put(3296,2544){\makebox(0,0)[r]{\strut{}GNNK-BB (SUM)}}%
      \csname LTb\endcsname%
      \put(3296,2384){\makebox(0,0)[r]{\strut{}GNNK-BF (SUM)}}%
      \csname LTb\endcsname%
      \put(3296,2224){\makebox(0,0)[r]{\strut{}GNNK-BB (MAX)}}%
      \csname LTb\endcsname%
      \put(3296,2064){\makebox(0,0)[r]{\strut{}GNNK-BF (MAX)}}%
    }%
    \gplbacktext
    \put(0,0){\includegraphics{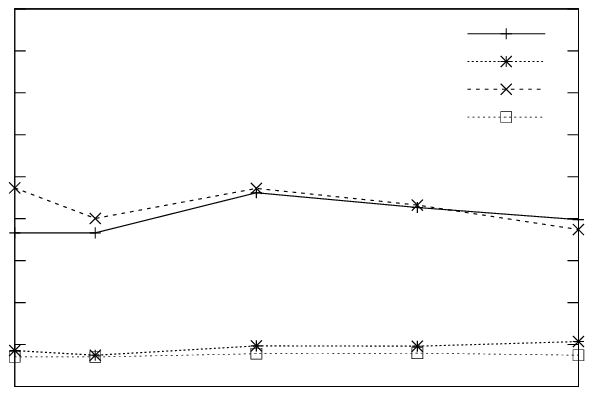}}%
    \gplfronttext
  \end{picture}%
\endgroup

		}
		%\caption{Running time}
		\caption{}
		\label{graph:group-run}
	\end{subfigure}%
	\begin{subfigure}[b]{0.5\linewidth}
		\centering
		\setlength{\abovecaptionskip}{2pt}
		\setlength{\belowcaptionskip}{2pt}
		\resizebox{\textwidth}{!}{
			% GNUPLOT: LaTeX picture with Postscript
\begingroup
  \makeatletter
  \providecommand\color[2][]{%
    \GenericError{(gnuplot) \space\space\space\@spaces}{%
      Package color not loaded in conjunction with
      terminal option `colourtext'%
    }{See the gnuplot documentation for explanation.%
    }{Either use 'blacktext' in gnuplot or load the package
      color.sty in LaTeX.}%
    \renewcommand\color[2][]{}%
  }%
  \providecommand\includegraphics[2][]{%
    \GenericError{(gnuplot) \space\space\space\@spaces}{%
      Package graphicx or graphics not loaded%
    }{See the gnuplot documentation for explanation.%
    }{The gnuplot epslatex terminal needs graphicx.sty or graphics.sty.}%
    \renewcommand\includegraphics[2][]{}%
  }%
  \providecommand\rotatebox[2]{#2}%
  \@ifundefined{ifGPcolor}{%
    \newif\ifGPcolor
    \GPcolorfalse
  }{}%
  \@ifundefined{ifGPblacktext}{%
    \newif\ifGPblacktext
    \GPblacktexttrue
  }{}%
  % define a \g@addto@macro without @ in the name:
  \let\gplgaddtomacro\g@addto@macro
  % define empty templates for all commands taking text:
  \gdef\gplbacktext{}%
  \gdef\gplfronttext{}%
  \makeatother
  \ifGPblacktext
    % no textcolor at all
    \def\colorrgb#1{}%
    \def\colorgray#1{}%
  \else
    % gray or color?
    \ifGPcolor
      \def\colorrgb#1{\color[rgb]{#1}}%
      \def\colorgray#1{\color[gray]{#1}}%
      \expandafter\def\csname LTw\endcsname{\color{white}}%
      \expandafter\def\csname LTb\endcsname{\color{black}}%
      \expandafter\def\csname LTa\endcsname{\color{black}}%
      \expandafter\def\csname LT0\endcsname{\color[rgb]{1,0,0}}%
      \expandafter\def\csname LT1\endcsname{\color[rgb]{0,1,0}}%
      \expandafter\def\csname LT2\endcsname{\color[rgb]{0,0,1}}%
      \expandafter\def\csname LT3\endcsname{\color[rgb]{1,0,1}}%
      \expandafter\def\csname LT4\endcsname{\color[rgb]{0,1,1}}%
      \expandafter\def\csname LT5\endcsname{\color[rgb]{1,1,0}}%
      \expandafter\def\csname LT6\endcsname{\color[rgb]{0,0,0}}%
      \expandafter\def\csname LT7\endcsname{\color[rgb]{1,0.3,0}}%
      \expandafter\def\csname LT8\endcsname{\color[rgb]{0.5,0.5,0.5}}%
    \else
      % gray
      \def\colorrgb#1{\color{black}}%
      \def\colorgray#1{\color[gray]{#1}}%
      \expandafter\def\csname LTw\endcsname{\color{white}}%
      \expandafter\def\csname LTb\endcsname{\color{black}}%
      \expandafter\def\csname LTa\endcsname{\color{black}}%
      \expandafter\def\csname LT0\endcsname{\color{black}}%
      \expandafter\def\csname LT1\endcsname{\color{black}}%
      \expandafter\def\csname LT2\endcsname{\color{black}}%
      \expandafter\def\csname LT3\endcsname{\color{black}}%
      \expandafter\def\csname LT4\endcsname{\color{black}}%
      \expandafter\def\csname LT5\endcsname{\color{black}}%
      \expandafter\def\csname LT6\endcsname{\color{black}}%
      \expandafter\def\csname LT7\endcsname{\color{black}}%
      \expandafter\def\csname LT8\endcsname{\color{black}}%
    \fi
  \fi
    \setlength{\unitlength}{0.0500bp}%
    \ifx\gptboxheight\undefined%
      \newlength{\gptboxheight}%
      \newlength{\gptboxwidth}%
      \newsavebox{\gptboxtext}%
    \fi%
    \setlength{\fboxrule}{0.5pt}%
    \setlength{\fboxsep}{1pt}%
\begin{picture}(4320.00,2880.00)%
    \gplgaddtomacro\gplbacktext{%
      \csname LTb\endcsname%
      \put(688,512){\makebox(0,0)[r]{\strut{}$0$}}%
      \put(688,823){\makebox(0,0)[r]{\strut{}$200$}}%
      \put(688,1133){\makebox(0,0)[r]{\strut{}$400$}}%
      \put(688,1444){\makebox(0,0)[r]{\strut{}$600$}}%
      \put(688,1755){\makebox(0,0)[r]{\strut{}$800$}}%
      \put(688,2066){\makebox(0,0)[r]{\strut{}$1000$}}%
      \put(688,2376){\makebox(0,0)[r]{\strut{}$1200$}}%
      \put(688,2687){\makebox(0,0)[r]{\strut{}$1400$}}%
      \put(784,352){\makebox(0,0){\strut{}10}}%
      \put(1248,352){\makebox(0,0){\strut{}20}}%
      \put(2176,352){\makebox(0,0){\strut{}40}}%
      \put(3103,352){\makebox(0,0){\strut{}60}}%
      \put(4031,352){\makebox(0,0){\strut{}80}}%
    }%
    \gplgaddtomacro\gplfronttext{%
      \csname LTb\endcsname%
      \put(128,1599){\rotatebox{-270}{\makebox(0,0){\strut{}\# page accesses}}}%
      \put(2407,112){\makebox(0,0){\strut{}Group Size}}%
      \csname LTb\endcsname%
      \put(3296,2544){\makebox(0,0)[r]{\strut{}GNNK-BB (SUM)}}%
      \csname LTb\endcsname%
      \put(3296,2384){\makebox(0,0)[r]{\strut{}GNNK-BF (SUM)}}%
      \csname LTb\endcsname%
      \put(3296,2224){\makebox(0,0)[r]{\strut{}GNNK-BB (MAX)}}%
      \csname LTb\endcsname%
      \put(3296,2064){\makebox(0,0)[r]{\strut{}GNNK-BF (MAX)}}%
    }%
    \gplbacktext
    \put(0,0){\includegraphics{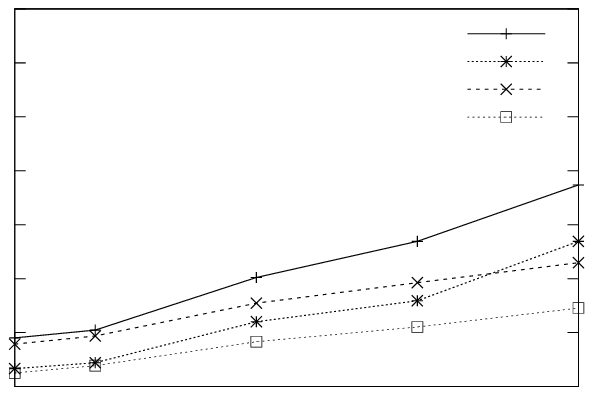}}%
    \gplfronttext
  \end{picture}%
\endgroup

		}
		%\caption{I/O}
		\caption{}
		\label{graph:group-io}
	\end{subfigure}%
	%\caption{Varying query group size}
	%\label{graph:group}
%\end {figure}

% Varying Number of Keywords
%\begin {figure}[!ht]
%\setlength{\belowcaptionskip}{-1pt}
%\setlength{\abovecaptionskip}{-1pt}
%	\centering
	\begin{subfigure}[b]{0.5\linewidth}
		\centering
		\setlength{\abovecaptionskip}{2pt}
		\setlength{\belowcaptionskip}{2pt}
		\resizebox{\textwidth}{!}{
			% GNUPLOT: LaTeX picture with Postscript
\begingroup
  \makeatletter
  \providecommand\color[2][]{%
    \GenericError{(gnuplot) \space\space\space\@spaces}{%
      Package color not loaded in conjunction with
      terminal option `colourtext'%
    }{See the gnuplot documentation for explanation.%
    }{Either use 'blacktext' in gnuplot or load the package
      color.sty in LaTeX.}%
    \renewcommand\color[2][]{}%
  }%
  \providecommand\includegraphics[2][]{%
    \GenericError{(gnuplot) \space\space\space\@spaces}{%
      Package graphicx or graphics not loaded%
    }{See the gnuplot documentation for explanation.%
    }{The gnuplot epslatex terminal needs graphicx.sty or graphics.sty.}%
    \renewcommand\includegraphics[2][]{}%
  }%
  \providecommand\rotatebox[2]{#2}%
  \@ifundefined{ifGPcolor}{%
    \newif\ifGPcolor
    \GPcolorfalse
  }{}%
  \@ifundefined{ifGPblacktext}{%
    \newif\ifGPblacktext
    \GPblacktexttrue
  }{}%
  % define a \g@addto@macro without @ in the name:
  \let\gplgaddtomacro\g@addto@macro
  % define empty templates for all commands taking text:
  \gdef\gplbacktext{}%
  \gdef\gplfronttext{}%
  \makeatother
  \ifGPblacktext
    % no textcolor at all
    \def\colorrgb#1{}%
    \def\colorgray#1{}%
  \else
    % gray or color?
    \ifGPcolor
      \def\colorrgb#1{\color[rgb]{#1}}%
      \def\colorgray#1{\color[gray]{#1}}%
      \expandafter\def\csname LTw\endcsname{\color{white}}%
      \expandafter\def\csname LTb\endcsname{\color{black}}%
      \expandafter\def\csname LTa\endcsname{\color{black}}%
      \expandafter\def\csname LT0\endcsname{\color[rgb]{1,0,0}}%
      \expandafter\def\csname LT1\endcsname{\color[rgb]{0,1,0}}%
      \expandafter\def\csname LT2\endcsname{\color[rgb]{0,0,1}}%
      \expandafter\def\csname LT3\endcsname{\color[rgb]{1,0,1}}%
      \expandafter\def\csname LT4\endcsname{\color[rgb]{0,1,1}}%
      \expandafter\def\csname LT5\endcsname{\color[rgb]{1,1,0}}%
      \expandafter\def\csname LT6\endcsname{\color[rgb]{0,0,0}}%
      \expandafter\def\csname LT7\endcsname{\color[rgb]{1,0.3,0}}%
      \expandafter\def\csname LT8\endcsname{\color[rgb]{0.5,0.5,0.5}}%
    \else
      % gray
      \def\colorrgb#1{\color{black}}%
      \def\colorgray#1{\color[gray]{#1}}%
      \expandafter\def\csname LTw\endcsname{\color{white}}%
      \expandafter\def\csname LTb\endcsname{\color{black}}%
      \expandafter\def\csname LTa\endcsname{\color{black}}%
      \expandafter\def\csname LT0\endcsname{\color{black}}%
      \expandafter\def\csname LT1\endcsname{\color{black}}%
      \expandafter\def\csname LT2\endcsname{\color{black}}%
      \expandafter\def\csname LT3\endcsname{\color{black}}%
      \expandafter\def\csname LT4\endcsname{\color{black}}%
      \expandafter\def\csname LT5\endcsname{\color{black}}%
      \expandafter\def\csname LT6\endcsname{\color{black}}%
      \expandafter\def\csname LT7\endcsname{\color{black}}%
      \expandafter\def\csname LT8\endcsname{\color{black}}%
    \fi
  \fi
    \setlength{\unitlength}{0.0500bp}%
    \ifx\gptboxheight\undefined%
      \newlength{\gptboxheight}%
      \newlength{\gptboxwidth}%
      \newsavebox{\gptboxtext}%
    \fi%
    \setlength{\fboxrule}{0.5pt}%
    \setlength{\fboxsep}{1pt}%
\begin{picture}(4320.00,2880.00)%
    \gplgaddtomacro\gplbacktext{%
      \csname LTb\endcsname%
      \put(688,512){\makebox(0,0)[r]{\strut{}$0$}}%
      \put(688,754){\makebox(0,0)[r]{\strut{}$500$}}%
      \put(688,995){\makebox(0,0)[r]{\strut{}$1000$}}%
      \put(688,1237){\makebox(0,0)[r]{\strut{}$1500$}}%
      \put(688,1479){\makebox(0,0)[r]{\strut{}$2000$}}%
      \put(688,1720){\makebox(0,0)[r]{\strut{}$2500$}}%
      \put(688,1962){\makebox(0,0)[r]{\strut{}$3000$}}%
      \put(688,2204){\makebox(0,0)[r]{\strut{}$3500$}}%
      \put(688,2445){\makebox(0,0)[r]{\strut{}$4000$}}%
      \put(688,2687){\makebox(0,0)[r]{\strut{}$4500$}}%
      \put(784,352){\makebox(0,0){\strut{}1}}%
      \put(1145,352){\makebox(0,0){\strut{}2}}%
      \put(1866,352){\makebox(0,0){\strut{}4}}%
      \put(3309,352){\makebox(0,0){\strut{}8}}%
      \put(4031,352){\makebox(0,0){\strut{}10}}%
    }%
    \gplgaddtomacro\gplfronttext{%
      \csname LTb\endcsname%
      \put(128,1599){\rotatebox{-270}{\makebox(0,0){\strut{}running time (ms)}}}%
      \put(2407,112){\makebox(0,0){\strut{}Number of Keywords}}%
      \csname LTb\endcsname%
      \put(3296,2544){\makebox(0,0)[r]{\strut{}GNNK-BB (SUM)}}%
      \csname LTb\endcsname%
      \put(3296,2384){\makebox(0,0)[r]{\strut{}GNNK-BF (SUM)}}%
      \csname LTb\endcsname%
      \put(3296,2224){\makebox(0,0)[r]{\strut{}GNNK-BB (MAX)}}%
      \csname LTb\endcsname%
      \put(3296,2064){\makebox(0,0)[r]{\strut{}GNNK-BF (MAX)}}%
    }%
    \gplbacktext
    \put(0,0){\includegraphics{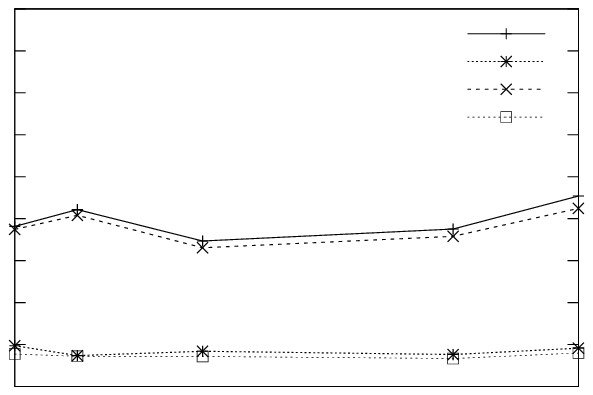}}%
    \gplfronttext
  \end{picture}%
\endgroup

		}
		%\caption{Running time}
		\caption{}
		\label{graph:keyword-run}
	\end{subfigure}%
	\begin{subfigure}[b]{0.5\linewidth}
		\centering
		\setlength{\abovecaptionskip}{2pt}
		\setlength{\belowcaptionskip}{2pt}
		\resizebox{\textwidth}{!}{
			% GNUPLOT: LaTeX picture with Postscript
\begingroup
  \makeatletter
  \providecommand\color[2][]{%
    \GenericError{(gnuplot) \space\space\space\@spaces}{%
      Package color not loaded in conjunction with
      terminal option `colourtext'%
    }{See the gnuplot documentation for explanation.%
    }{Either use 'blacktext' in gnuplot or load the package
      color.sty in LaTeX.}%
    \renewcommand\color[2][]{}%
  }%
  \providecommand\includegraphics[2][]{%
    \GenericError{(gnuplot) \space\space\space\@spaces}{%
      Package graphicx or graphics not loaded%
    }{See the gnuplot documentation for explanation.%
    }{The gnuplot epslatex terminal needs graphicx.sty or graphics.sty.}%
    \renewcommand\includegraphics[2][]{}%
  }%
  \providecommand\rotatebox[2]{#2}%
  \@ifundefined{ifGPcolor}{%
    \newif\ifGPcolor
    \GPcolorfalse
  }{}%
  \@ifundefined{ifGPblacktext}{%
    \newif\ifGPblacktext
    \GPblacktexttrue
  }{}%
  % define a \g@addto@macro without @ in the name:
  \let\gplgaddtomacro\g@addto@macro
  % define empty templates for all commands taking text:
  \gdef\gplbacktext{}%
  \gdef\gplfronttext{}%
  \makeatother
  \ifGPblacktext
    % no textcolor at all
    \def\colorrgb#1{}%
    \def\colorgray#1{}%
  \else
    % gray or color?
    \ifGPcolor
      \def\colorrgb#1{\color[rgb]{#1}}%
      \def\colorgray#1{\color[gray]{#1}}%
      \expandafter\def\csname LTw\endcsname{\color{white}}%
      \expandafter\def\csname LTb\endcsname{\color{black}}%
      \expandafter\def\csname LTa\endcsname{\color{black}}%
      \expandafter\def\csname LT0\endcsname{\color[rgb]{1,0,0}}%
      \expandafter\def\csname LT1\endcsname{\color[rgb]{0,1,0}}%
      \expandafter\def\csname LT2\endcsname{\color[rgb]{0,0,1}}%
      \expandafter\def\csname LT3\endcsname{\color[rgb]{1,0,1}}%
      \expandafter\def\csname LT4\endcsname{\color[rgb]{0,1,1}}%
      \expandafter\def\csname LT5\endcsname{\color[rgb]{1,1,0}}%
      \expandafter\def\csname LT6\endcsname{\color[rgb]{0,0,0}}%
      \expandafter\def\csname LT7\endcsname{\color[rgb]{1,0.3,0}}%
      \expandafter\def\csname LT8\endcsname{\color[rgb]{0.5,0.5,0.5}}%
    \else
      % gray
      \def\colorrgb#1{\color{black}}%
      \def\colorgray#1{\color[gray]{#1}}%
      \expandafter\def\csname LTw\endcsname{\color{white}}%
      \expandafter\def\csname LTb\endcsname{\color{black}}%
      \expandafter\def\csname LTa\endcsname{\color{black}}%
      \expandafter\def\csname LT0\endcsname{\color{black}}%
      \expandafter\def\csname LT1\endcsname{\color{black}}%
      \expandafter\def\csname LT2\endcsname{\color{black}}%
      \expandafter\def\csname LT3\endcsname{\color{black}}%
      \expandafter\def\csname LT4\endcsname{\color{black}}%
      \expandafter\def\csname LT5\endcsname{\color{black}}%
      \expandafter\def\csname LT6\endcsname{\color{black}}%
      \expandafter\def\csname LT7\endcsname{\color{black}}%
      \expandafter\def\csname LT8\endcsname{\color{black}}%
    \fi
  \fi
    \setlength{\unitlength}{0.0500bp}%
    \ifx\gptboxheight\undefined%
      \newlength{\gptboxheight}%
      \newlength{\gptboxwidth}%
      \newsavebox{\gptboxtext}%
    \fi%
    \setlength{\fboxrule}{0.5pt}%
    \setlength{\fboxsep}{1pt}%
\begin{picture}(4320.00,2880.00)%
    \gplgaddtomacro\gplbacktext{%
      \csname LTb\endcsname%
      \put(592,512){\makebox(0,0)[r]{\strut{}$0$}}%
      \put(592,754){\makebox(0,0)[r]{\strut{}$100$}}%
      \put(592,995){\makebox(0,0)[r]{\strut{}$200$}}%
      \put(592,1237){\makebox(0,0)[r]{\strut{}$300$}}%
      \put(592,1479){\makebox(0,0)[r]{\strut{}$400$}}%
      \put(592,1720){\makebox(0,0)[r]{\strut{}$500$}}%
      \put(592,1962){\makebox(0,0)[r]{\strut{}$600$}}%
      \put(592,2204){\makebox(0,0)[r]{\strut{}$700$}}%
      \put(592,2445){\makebox(0,0)[r]{\strut{}$800$}}%
      \put(592,2687){\makebox(0,0)[r]{\strut{}$900$}}%
      \put(688,352){\makebox(0,0){\strut{}1}}%
      \put(1059,352){\makebox(0,0){\strut{}2}}%
      \put(1802,352){\makebox(0,0){\strut{}4}}%
      \put(3288,352){\makebox(0,0){\strut{}8}}%
      \put(4031,352){\makebox(0,0){\strut{}10}}%
    }%
    \gplgaddtomacro\gplfronttext{%
      \csname LTb\endcsname%
      \put(128,1599){\rotatebox{-270}{\makebox(0,0){\strut{}\# page accesses}}}%
      \put(2359,112){\makebox(0,0){\strut{}Number of Keywords}}%
      \csname LTb\endcsname%
      \put(3296,2544){\makebox(0,0)[r]{\strut{}GNNK-BB (SUM)}}%
      \csname LTb\endcsname%
      \put(3296,2384){\makebox(0,0)[r]{\strut{}GNNK-BF (SUM)}}%
      \csname LTb\endcsname%
      \put(3296,2224){\makebox(0,0)[r]{\strut{}GNNK-BB (MAX)}}%
      \csname LTb\endcsname%
      \put(3296,2064){\makebox(0,0)[r]{\strut{}GNNK-BF (MAX)}}%
    }%
    \gplbacktext
    \put(0,0){\includegraphics{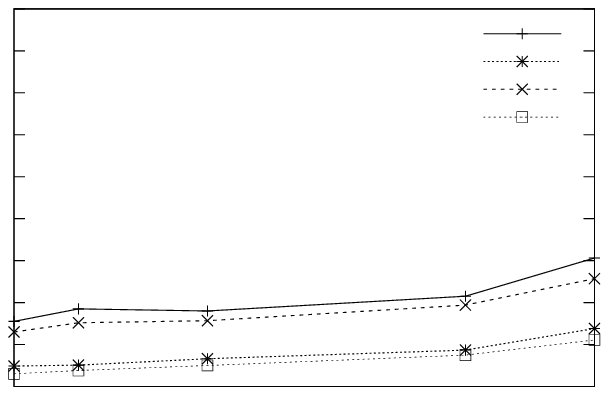}}%
    \gplfronttext
  \end{picture}%
\endgroup

		}
		%\caption{I/O}
		\caption{}
		\label{graph:keyword-io}
	\end{subfigure}%
%	\caption{Varying number of query keywords}
%	\label{graph:keyword}
%\end {figure}

% Varying Keyword Space Size
%\begin {figure}[!ht]
%\setlength{\belowcaptionskip}{-1pt}
%\setlength{\abovecaptionskip}{-1pt}
%	\centering
	\begin{subfigure}[b]{0.5\linewidth}
		\centering
		\setlength{\abovecaptionskip}{2pt}
		\setlength{\belowcaptionskip}{2pt}
		\resizebox{\textwidth}{!}{
			% GNUPLOT: LaTeX picture with Postscript
\begingroup
  \makeatletter
  \providecommand\color[2][]{%
    \GenericError{(gnuplot) \space\space\space\@spaces}{%
      Package color not loaded in conjunction with
      terminal option `colourtext'%
    }{See the gnuplot documentation for explanation.%
    }{Either use 'blacktext' in gnuplot or load the package
      color.sty in LaTeX.}%
    \renewcommand\color[2][]{}%
  }%
  \providecommand\includegraphics[2][]{%
    \GenericError{(gnuplot) \space\space\space\@spaces}{%
      Package graphicx or graphics not loaded%
    }{See the gnuplot documentation for explanation.%
    }{The gnuplot epslatex terminal needs graphicx.sty or graphics.sty.}%
    \renewcommand\includegraphics[2][]{}%
  }%
  \providecommand\rotatebox[2]{#2}%
  \@ifundefined{ifGPcolor}{%
    \newif\ifGPcolor
    \GPcolorfalse
  }{}%
  \@ifundefined{ifGPblacktext}{%
    \newif\ifGPblacktext
    \GPblacktexttrue
  }{}%
  % define a \g@addto@macro without @ in the name:
  \let\gplgaddtomacro\g@addto@macro
  % define empty templates for all commands taking text:
  \gdef\gplbacktext{}%
  \gdef\gplfronttext{}%
  \makeatother
  \ifGPblacktext
    % no textcolor at all
    \def\colorrgb#1{}%
    \def\colorgray#1{}%
  \else
    % gray or color?
    \ifGPcolor
      \def\colorrgb#1{\color[rgb]{#1}}%
      \def\colorgray#1{\color[gray]{#1}}%
      \expandafter\def\csname LTw\endcsname{\color{white}}%
      \expandafter\def\csname LTb\endcsname{\color{black}}%
      \expandafter\def\csname LTa\endcsname{\color{black}}%
      \expandafter\def\csname LT0\endcsname{\color[rgb]{1,0,0}}%
      \expandafter\def\csname LT1\endcsname{\color[rgb]{0,1,0}}%
      \expandafter\def\csname LT2\endcsname{\color[rgb]{0,0,1}}%
      \expandafter\def\csname LT3\endcsname{\color[rgb]{1,0,1}}%
      \expandafter\def\csname LT4\endcsname{\color[rgb]{0,1,1}}%
      \expandafter\def\csname LT5\endcsname{\color[rgb]{1,1,0}}%
      \expandafter\def\csname LT6\endcsname{\color[rgb]{0,0,0}}%
      \expandafter\def\csname LT7\endcsname{\color[rgb]{1,0.3,0}}%
      \expandafter\def\csname LT8\endcsname{\color[rgb]{0.5,0.5,0.5}}%
    \else
      % gray
      \def\colorrgb#1{\color{black}}%
      \def\colorgray#1{\color[gray]{#1}}%
      \expandafter\def\csname LTw\endcsname{\color{white}}%
      \expandafter\def\csname LTb\endcsname{\color{black}}%
      \expandafter\def\csname LTa\endcsname{\color{black}}%
      \expandafter\def\csname LT0\endcsname{\color{black}}%
      \expandafter\def\csname LT1\endcsname{\color{black}}%
      \expandafter\def\csname LT2\endcsname{\color{black}}%
      \expandafter\def\csname LT3\endcsname{\color{black}}%
      \expandafter\def\csname LT4\endcsname{\color{black}}%
      \expandafter\def\csname LT5\endcsname{\color{black}}%
      \expandafter\def\csname LT6\endcsname{\color{black}}%
      \expandafter\def\csname LT7\endcsname{\color{black}}%
      \expandafter\def\csname LT8\endcsname{\color{black}}%
    \fi
  \fi
    \setlength{\unitlength}{0.0500bp}%
    \ifx\gptboxheight\undefined%
      \newlength{\gptboxheight}%
      \newlength{\gptboxwidth}%
      \newsavebox{\gptboxtext}%
    \fi%
    \setlength{\fboxrule}{0.5pt}%
    \setlength{\fboxsep}{1pt}%
\begin{picture}(4320.00,2880.00)%
    \gplgaddtomacro\gplbacktext{%
      \csname LTb\endcsname%
      \put(688,512){\makebox(0,0)[r]{\strut{}$0$}}%
      \put(688,784){\makebox(0,0)[r]{\strut{}$500$}}%
      \put(688,1056){\makebox(0,0)[r]{\strut{}$1000$}}%
      \put(688,1328){\makebox(0,0)[r]{\strut{}$1500$}}%
      \put(688,1600){\makebox(0,0)[r]{\strut{}$2000$}}%
      \put(688,1871){\makebox(0,0)[r]{\strut{}$2500$}}%
      \put(688,2143){\makebox(0,0)[r]{\strut{}$3000$}}%
      \put(688,2415){\makebox(0,0)[r]{\strut{}$3500$}}%
      \put(688,2687){\makebox(0,0)[r]{\strut{}$4000$}}%
      \put(784,352){\makebox(0,0){\strut{}1}}%
      \put(1596,352){\makebox(0,0){\strut{}2}}%
      \put(2408,352){\makebox(0,0){\strut{}3}}%
      \put(3219,352){\makebox(0,0){\strut{}4}}%
      \put(4031,352){\makebox(0,0){\strut{}5}}%
    }%
    \gplgaddtomacro\gplfronttext{%
      \csname LTb\endcsname%
      \put(128,1599){\rotatebox{-270}{\makebox(0,0){\strut{}running time (ms)}}}%
      \put(2407,112){\makebox(0,0){\strut{}Keyword Space Size (\%)}}%
      \csname LTb\endcsname%
      \put(3296,2544){\makebox(0,0)[r]{\strut{}GNNK-BB (SUM)}}%
      \csname LTb\endcsname%
      \put(3296,2384){\makebox(0,0)[r]{\strut{}GNNK-BF (SUM)}}%
      \csname LTb\endcsname%
      \put(3296,2224){\makebox(0,0)[r]{\strut{}GNNK-BB (MAX)}}%
      \csname LTb\endcsname%
      \put(3296,2064){\makebox(0,0)[r]{\strut{}GNNK-BF (MAX)}}%
    }%
    \gplbacktext
    \put(0,0){\includegraphics{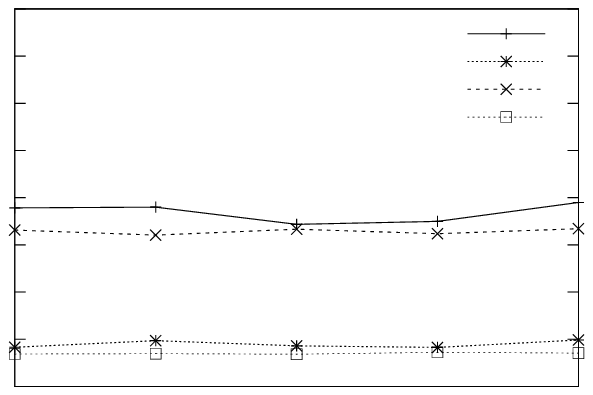}}%
    \gplfronttext
  \end{picture}%
\endgroup

		}
		%\caption{Running time}
		\caption{}
		\label{graph:keyword-space-run}
	\end{subfigure}%
	\begin{subfigure}[b]{0.5\linewidth}
		\centering
		\setlength{\abovecaptionskip}{2pt}
		\setlength{\belowcaptionskip}{2pt}
		\resizebox{\textwidth}{!}{
			% GNUPLOT: LaTeX picture with Postscript
\begingroup
  \makeatletter
  \providecommand\color[2][]{%
    \GenericError{(gnuplot) \space\space\space\@spaces}{%
      Package color not loaded in conjunction with
      terminal option `colourtext'%
    }{See the gnuplot documentation for explanation.%
    }{Either use 'blacktext' in gnuplot or load the package
      color.sty in LaTeX.}%
    \renewcommand\color[2][]{}%
  }%
  \providecommand\includegraphics[2][]{%
    \GenericError{(gnuplot) \space\space\space\@spaces}{%
      Package graphicx or graphics not loaded%
    }{See the gnuplot documentation for explanation.%
    }{The gnuplot epslatex terminal needs graphicx.sty or graphics.sty.}%
    \renewcommand\includegraphics[2][]{}%
  }%
  \providecommand\rotatebox[2]{#2}%
  \@ifundefined{ifGPcolor}{%
    \newif\ifGPcolor
    \GPcolorfalse
  }{}%
  \@ifundefined{ifGPblacktext}{%
    \newif\ifGPblacktext
    \GPblacktexttrue
  }{}%
  % define a \g@addto@macro without @ in the name:
  \let\gplgaddtomacro\g@addto@macro
  % define empty templates for all commands taking text:
  \gdef\gplbacktext{}%
  \gdef\gplfronttext{}%
  \makeatother
  \ifGPblacktext
    % no textcolor at all
    \def\colorrgb#1{}%
    \def\colorgray#1{}%
  \else
    % gray or color?
    \ifGPcolor
      \def\colorrgb#1{\color[rgb]{#1}}%
      \def\colorgray#1{\color[gray]{#1}}%
      \expandafter\def\csname LTw\endcsname{\color{white}}%
      \expandafter\def\csname LTb\endcsname{\color{black}}%
      \expandafter\def\csname LTa\endcsname{\color{black}}%
      \expandafter\def\csname LT0\endcsname{\color[rgb]{1,0,0}}%
      \expandafter\def\csname LT1\endcsname{\color[rgb]{0,1,0}}%
      \expandafter\def\csname LT2\endcsname{\color[rgb]{0,0,1}}%
      \expandafter\def\csname LT3\endcsname{\color[rgb]{1,0,1}}%
      \expandafter\def\csname LT4\endcsname{\color[rgb]{0,1,1}}%
      \expandafter\def\csname LT5\endcsname{\color[rgb]{1,1,0}}%
      \expandafter\def\csname LT6\endcsname{\color[rgb]{0,0,0}}%
      \expandafter\def\csname LT7\endcsname{\color[rgb]{1,0.3,0}}%
      \expandafter\def\csname LT8\endcsname{\color[rgb]{0.5,0.5,0.5}}%
    \else
      % gray
      \def\colorrgb#1{\color{black}}%
      \def\colorgray#1{\color[gray]{#1}}%
      \expandafter\def\csname LTw\endcsname{\color{white}}%
      \expandafter\def\csname LTb\endcsname{\color{black}}%
      \expandafter\def\csname LTa\endcsname{\color{black}}%
      \expandafter\def\csname LT0\endcsname{\color{black}}%
      \expandafter\def\csname LT1\endcsname{\color{black}}%
      \expandafter\def\csname LT2\endcsname{\color{black}}%
      \expandafter\def\csname LT3\endcsname{\color{black}}%
      \expandafter\def\csname LT4\endcsname{\color{black}}%
      \expandafter\def\csname LT5\endcsname{\color{black}}%
      \expandafter\def\csname LT6\endcsname{\color{black}}%
      \expandafter\def\csname LT7\endcsname{\color{black}}%
      \expandafter\def\csname LT8\endcsname{\color{black}}%
    \fi
  \fi
    \setlength{\unitlength}{0.0500bp}%
    \ifx\gptboxheight\undefined%
      \newlength{\gptboxheight}%
      \newlength{\gptboxwidth}%
      \newsavebox{\gptboxtext}%
    \fi%
    \setlength{\fboxrule}{0.5pt}%
    \setlength{\fboxsep}{1pt}%
\begin{picture}(4320.00,2880.00)%
    \gplgaddtomacro\gplbacktext{%
      \csname LTb\endcsname%
      \put(592,512){\makebox(0,0)[r]{\strut{}$0$}}%
      \put(592,784){\makebox(0,0)[r]{\strut{}$100$}}%
      \put(592,1056){\makebox(0,0)[r]{\strut{}$200$}}%
      \put(592,1328){\makebox(0,0)[r]{\strut{}$300$}}%
      \put(592,1600){\makebox(0,0)[r]{\strut{}$400$}}%
      \put(592,1871){\makebox(0,0)[r]{\strut{}$500$}}%
      \put(592,2143){\makebox(0,0)[r]{\strut{}$600$}}%
      \put(592,2415){\makebox(0,0)[r]{\strut{}$700$}}%
      \put(592,2687){\makebox(0,0)[r]{\strut{}$800$}}%
      \put(688,352){\makebox(0,0){\strut{}1}}%
      \put(1524,352){\makebox(0,0){\strut{}2}}%
      \put(2360,352){\makebox(0,0){\strut{}3}}%
      \put(3195,352){\makebox(0,0){\strut{}4}}%
      \put(4031,352){\makebox(0,0){\strut{}5}}%
    }%
    \gplgaddtomacro\gplfronttext{%
      \csname LTb\endcsname%
      \put(128,1599){\rotatebox{-270}{\makebox(0,0){\strut{}\# page accesses}}}%
      \put(2359,112){\makebox(0,0){\strut{}Keyword Space Size (\%)}}%
      \csname LTb\endcsname%
      \put(3296,2544){\makebox(0,0)[r]{\strut{}GNNK-BB (SUM)}}%
      \csname LTb\endcsname%
      \put(3296,2384){\makebox(0,0)[r]{\strut{}GNNK-BF (SUM)}}%
      \csname LTb\endcsname%
      \put(3296,2224){\makebox(0,0)[r]{\strut{}GNNK-BB (MAX)}}%
      \csname LTb\endcsname%
      \put(3296,2064){\makebox(0,0)[r]{\strut{}GNNK-BF (MAX)}}%
    }%
    \gplbacktext
    \put(0,0){\includegraphics{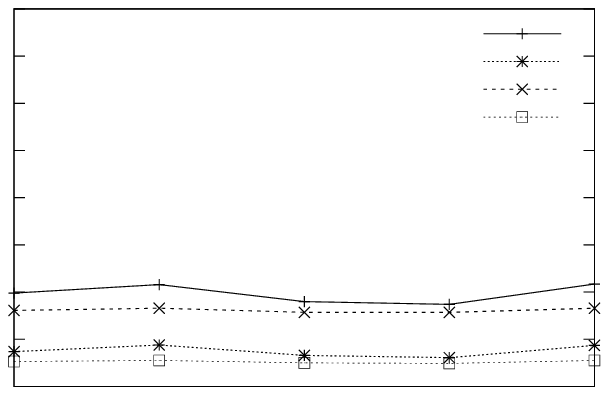}}%
    \gplfronttext
  \end{picture}%
\endgroup

		}
		%\caption{I/O}
		\caption{}
		\label{graph:keyword-space-io}
	\end{subfigure}%
	%\caption{Varying query keyword set size}
%	\label{graph:keyword-space}
%\end {figure}

%\begin {figure}[!ht]
%\setlength{\belowcaptionskip}{-1pt}
%\setlength{\abovecaptionskip}{-1pt}
%	\centering
	\begin{subfigure}[b]{0.5\linewidth}
		\centering
		\setlength{\abovecaptionskip}{2pt}
		\setlength{\belowcaptionskip}{2pt}
		\resizebox{\textwidth}{!}{
			% GNUPLOT: LaTeX picture with Postscript
\begingroup
  \makeatletter
  \providecommand\color[2][]{%
    \GenericError{(gnuplot) \space\space\space\@spaces}{%
      Package color not loaded in conjunction with
      terminal option `colourtext'%
    }{See the gnuplot documentation for explanation.%
    }{Either use 'blacktext' in gnuplot or load the package
      color.sty in LaTeX.}%
    \renewcommand\color[2][]{}%
  }%
  \providecommand\includegraphics[2][]{%
    \GenericError{(gnuplot) \space\space\space\@spaces}{%
      Package graphicx or graphics not loaded%
    }{See the gnuplot documentation for explanation.%
    }{The gnuplot epslatex terminal needs graphicx.sty or graphics.sty.}%
    \renewcommand\includegraphics[2][]{}%
  }%
  \providecommand\rotatebox[2]{#2}%
  \@ifundefined{ifGPcolor}{%
    \newif\ifGPcolor
    \GPcolorfalse
  }{}%
  \@ifundefined{ifGPblacktext}{%
    \newif\ifGPblacktext
    \GPblacktexttrue
  }{}%
  % define a \g@addto@macro without @ in the name:
  \let\gplgaddtomacro\g@addto@macro
  % define empty templates for all commands taking text:
  \gdef\gplbacktext{}%
  \gdef\gplfronttext{}%
  \makeatother
  \ifGPblacktext
    % no textcolor at all
    \def\colorrgb#1{}%
    \def\colorgray#1{}%
  \else
    % gray or color?
    \ifGPcolor
      \def\colorrgb#1{\color[rgb]{#1}}%
      \def\colorgray#1{\color[gray]{#1}}%
      \expandafter\def\csname LTw\endcsname{\color{white}}%
      \expandafter\def\csname LTb\endcsname{\color{black}}%
      \expandafter\def\csname LTa\endcsname{\color{black}}%
      \expandafter\def\csname LT0\endcsname{\color[rgb]{1,0,0}}%
      \expandafter\def\csname LT1\endcsname{\color[rgb]{0,1,0}}%
      \expandafter\def\csname LT2\endcsname{\color[rgb]{0,0,1}}%
      \expandafter\def\csname LT3\endcsname{\color[rgb]{1,0,1}}%
      \expandafter\def\csname LT4\endcsname{\color[rgb]{0,1,1}}%
      \expandafter\def\csname LT5\endcsname{\color[rgb]{1,1,0}}%
      \expandafter\def\csname LT6\endcsname{\color[rgb]{0,0,0}}%
      \expandafter\def\csname LT7\endcsname{\color[rgb]{1,0.3,0}}%
      \expandafter\def\csname LT8\endcsname{\color[rgb]{0.5,0.5,0.5}}%
    \else
      % gray
      \def\colorrgb#1{\color{black}}%
      \def\colorgray#1{\color[gray]{#1}}%
      \expandafter\def\csname LTw\endcsname{\color{white}}%
      \expandafter\def\csname LTb\endcsname{\color{black}}%
      \expandafter\def\csname LTa\endcsname{\color{black}}%
      \expandafter\def\csname LT0\endcsname{\color{black}}%
      \expandafter\def\csname LT1\endcsname{\color{black}}%
      \expandafter\def\csname LT2\endcsname{\color{black}}%
      \expandafter\def\csname LT3\endcsname{\color{black}}%
      \expandafter\def\csname LT4\endcsname{\color{black}}%
      \expandafter\def\csname LT5\endcsname{\color{black}}%
      \expandafter\def\csname LT6\endcsname{\color{black}}%
      \expandafter\def\csname LT7\endcsname{\color{black}}%
      \expandafter\def\csname LT8\endcsname{\color{black}}%
    \fi
  \fi
  \setlength{\unitlength}{0.0500bp}%
  \begin{picture}(4320.00,2880.00)%
    \gplgaddtomacro\gplbacktext{%
      \csname LTb\endcsname%
      \put(880,512){\makebox(0,0)[r]{\strut{} 0}}%
      \put(880,875){\makebox(0,0)[r]{\strut{} 2000}}%
      \put(880,1237){\makebox(0,0)[r]{\strut{} 4000}}%
      \put(880,1600){\makebox(0,0)[r]{\strut{} 6000}}%
      \put(880,1962){\makebox(0,0)[r]{\strut{} 8000}}%
      \put(880,2325){\makebox(0,0)[r]{\strut{} 10000}}%
      \put(880,2687){\makebox(0,0)[r]{\strut{} 12000}}%
      \put(976,352){\makebox(0,0){\strut{}1M}}%
      \put(1994,352){\makebox(0,0){\strut{}1.5M}}%
      \put(3013,352){\makebox(0,0){\strut{}2M}}%
      \put(4031,352){\makebox(0,0){\strut{}2.5M}}%
      \large
      \put(128,1599){\rotatebox{-270}{\makebox(0,0){\strut{}running time (ms)}}}%
      \put(2503,112){\makebox(0,0){\strut{}Dataset Size}}%
      \normalsize
    }%
    \gplgaddtomacro\gplfronttext{%
      \csname LTb\endcsname%
      \put(3296,2544){\makebox(0,0)[r]{\strut{}GNNK-BB (SUM)}}%
      \csname LTb\endcsname%
      \put(3296,2384){\makebox(0,0)[r]{\strut{}GNNK-BF (SUM)}}%
      \csname LTb\endcsname%
      \put(3296,2224){\makebox(0,0)[r]{\strut{}GNNK-BB (MAX)}}%
      \csname LTb\endcsname%
      \put(3296,2064){\makebox(0,0)[r]{\strut{}GNNK-BF (MAX)}}%
    }%
    \gplbacktext
    \put(0,0){\includegraphics{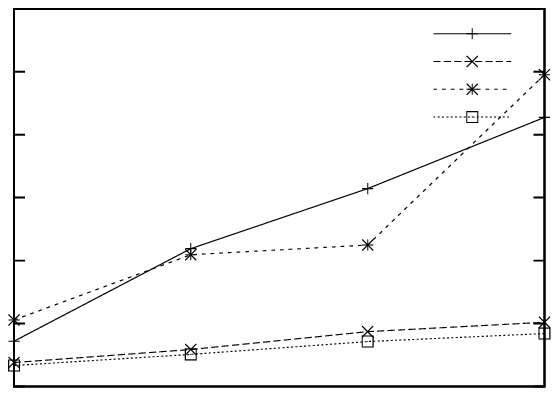}}%
    \gplfronttext
  \end{picture}%
\endgroup

		}
		%\caption{Running time}
		\caption{}
		\label{graph:dataset-size-run}
	\end{subfigure}%
	\begin{subfigure}[b]{0.5\linewidth}
		\centering
		\setlength{\abovecaptionskip}{2pt}
		\setlength{\belowcaptionskip}{2pt}
		\resizebox{\textwidth}{!}{
			% GNUPLOT: LaTeX picture with Postscript
\begingroup
  \makeatletter
  \providecommand\color[2][]{%
    \GenericError{(gnuplot) \space\space\space\@spaces}{%
      Package color not loaded in conjunction with
      terminal option `colourtext'%
    }{See the gnuplot documentation for explanation.%
    }{Either use 'blacktext' in gnuplot or load the package
      color.sty in LaTeX.}%
    \renewcommand\color[2][]{}%
  }%
  \providecommand\includegraphics[2][]{%
    \GenericError{(gnuplot) \space\space\space\@spaces}{%
      Package graphicx or graphics not loaded%
    }{See the gnuplot documentation for explanation.%
    }{The gnuplot epslatex terminal needs graphicx.sty or graphics.sty.}%
    \renewcommand\includegraphics[2][]{}%
  }%
  \providecommand\rotatebox[2]{#2}%
  \@ifundefined{ifGPcolor}{%
    \newif\ifGPcolor
    \GPcolorfalse
  }{}%
  \@ifundefined{ifGPblacktext}{%
    \newif\ifGPblacktext
    \GPblacktexttrue
  }{}%
  % define a \g@addto@macro without @ in the name:
  \let\gplgaddtomacro\g@addto@macro
  % define empty templates for all commands taking text:
  \gdef\gplbacktext{}%
  \gdef\gplfronttext{}%
  \makeatother
  \ifGPblacktext
    % no textcolor at all
    \def\colorrgb#1{}%
    \def\colorgray#1{}%
  \else
    % gray or color?
    \ifGPcolor
      \def\colorrgb#1{\color[rgb]{#1}}%
      \def\colorgray#1{\color[gray]{#1}}%
      \expandafter\def\csname LTw\endcsname{\color{white}}%
      \expandafter\def\csname LTb\endcsname{\color{black}}%
      \expandafter\def\csname LTa\endcsname{\color{black}}%
      \expandafter\def\csname LT0\endcsname{\color[rgb]{1,0,0}}%
      \expandafter\def\csname LT1\endcsname{\color[rgb]{0,1,0}}%
      \expandafter\def\csname LT2\endcsname{\color[rgb]{0,0,1}}%
      \expandafter\def\csname LT3\endcsname{\color[rgb]{1,0,1}}%
      \expandafter\def\csname LT4\endcsname{\color[rgb]{0,1,1}}%
      \expandafter\def\csname LT5\endcsname{\color[rgb]{1,1,0}}%
      \expandafter\def\csname LT6\endcsname{\color[rgb]{0,0,0}}%
      \expandafter\def\csname LT7\endcsname{\color[rgb]{1,0.3,0}}%
      \expandafter\def\csname LT8\endcsname{\color[rgb]{0.5,0.5,0.5}}%
    \else
      % gray
      \def\colorrgb#1{\color{black}}%
      \def\colorgray#1{\color[gray]{#1}}%
      \expandafter\def\csname LTw\endcsname{\color{white}}%
      \expandafter\def\csname LTb\endcsname{\color{black}}%
      \expandafter\def\csname LTa\endcsname{\color{black}}%
      \expandafter\def\csname LT0\endcsname{\color{black}}%
      \expandafter\def\csname LT1\endcsname{\color{black}}%
      \expandafter\def\csname LT2\endcsname{\color{black}}%
      \expandafter\def\csname LT3\endcsname{\color{black}}%
      \expandafter\def\csname LT4\endcsname{\color{black}}%
      \expandafter\def\csname LT5\endcsname{\color{black}}%
      \expandafter\def\csname LT6\endcsname{\color{black}}%
      \expandafter\def\csname LT7\endcsname{\color{black}}%
      \expandafter\def\csname LT8\endcsname{\color{black}}%
    \fi
  \fi
  \setlength{\unitlength}{0.0500bp}%
  \begin{picture}(4320.00,2880.00)%
    \gplgaddtomacro\gplbacktext{%
      \csname LTb\endcsname%
      \put(688,512){\makebox(0,0)[r]{\strut{} 0}}%
      \put(688,823){\makebox(0,0)[r]{\strut{} 100}}%
      \put(688,1133){\makebox(0,0)[r]{\strut{} 200}}%
      \put(688,1444){\makebox(0,0)[r]{\strut{} 300}}%
      \put(688,1755){\makebox(0,0)[r]{\strut{} 400}}%
      \put(688,2066){\makebox(0,0)[r]{\strut{} 500}}%
      \put(688,2376){\makebox(0,0)[r]{\strut{} 600}}%
      \put(688,2687){\makebox(0,0)[r]{\strut{} 700}}%
      \put(784,352){\makebox(0,0){\strut{}1M}}%
      \put(1866,352){\makebox(0,0){\strut{}1.5M}}%
      \put(2949,352){\makebox(0,0){\strut{}2M}}%
      \put(4031,352){\makebox(0,0){\strut{}2.5M}}%
      \large
      \put(128,1599){\rotatebox{-270}{\makebox(0,0){\strut{}\# page accesses}}}%
      \put(2407,112){\makebox(0,0){\strut{}Dataset Size}}%
      \normalsize
    }%
    \gplgaddtomacro\gplfronttext{%
      \csname LTb\endcsname%
      \put(3296,2544){\makebox(0,0)[r]{\strut{}GNNK-BB (SUM)}}%
      \csname LTb\endcsname%
      \put(3296,2384){\makebox(0,0)[r]{\strut{}GNNK-BF (SUM)}}%
      \csname LTb\endcsname%
      \put(3296,2224){\makebox(0,0)[r]{\strut{}GNNK-BB (MAX)}}%
      \csname LTb\endcsname%
      \put(3296,2064){\makebox(0,0)[r]{\strut{}GNNK-BF (MAX)}}%
    }%
    \gplbacktext
    \put(0,0){\includegraphics{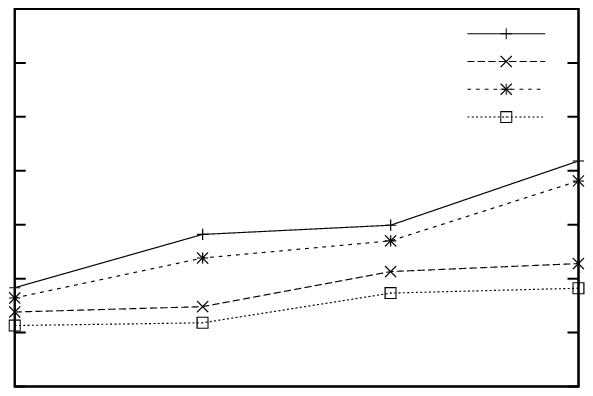}}%
    \gplfronttext
  \end{picture}%
\endgroup

		}
		%\caption{I/O}
		\caption{}
		\label{graph:dataset-size-io}
	\end{subfigure}%
	\caption{The effect of varying $k$ (a-b), query group size (c-d), number of query keywords (e-f), query keyword set size (g-h) and dataset size (i-j) in running time and I/O}
	\label{graph:flickr}
\end {figure}

\textbf{Dataset.}
We use two real datasets from Yahoo! Flickr\footnote{\url{https://webscope.sandbox.yahoo.com}} and Yelp\footnote{\url{https://www.yelp.com/academic_dataset}} in our experiments. The Flickr dataset is generated from the images from Yahoo! Flickr users that are geo-tagged and contain a set of keyword tags. 
The Yelp dataset contains the basic information about different local businesses. Each data object contains the location of the business along with its categories as the keywords. Only the data locations within US have been used in our experiment. The properties of these two datasets are detailed in Table~\ref{table:dataset-properties}.

\textbf{Query Generation.}
We generate 20 groups of query objects for each experiment and average the results. Each query object contains a location and a set of keywords. To generate the locations in each group of query objects, we first randomly choose a point in the data space. Then we define a square query space centered at the chosen point. 
All the query object locations of the group will then be uniformly generated inside this square query space. The default query space area has been selected to be 0.01\% (~250 sq. miles) of the total query area, which is approximately the size of a medium sized US city. Similarly, for generating the query keywords, a subset of keywords (1\%-5\% of the data objects' keywords) from all keywords inside the query space is first chosen, and then the required of number of keywords are selected from this subset. This ensures the overlapping of query keywords among users.

We also vary the group size ($n$), the minimum subgroup size ($m$), the number of query keywords, the number of queried data points ($k$), dataset size, and $\alpha$. Table~\ref{table:query-parameters} shows ranges and default values of these parameters.

\textbf{Setup.}
We use the IR-tree to index the datasets, which is disk resident. The fanout of the IR-tree is chosen to be 50, and the page size is 4KB. All the algorithms are implemented in Java and the experiments are conducted on a Core i7-4790 CPU @ 3.60 GHz with 4 GB of RAM. The hard drive used is Seagate ST500DM002-1BD142 with 7200 RPM. SUM and MAX are used as the aggregate functions in all the experiments.

We measure the running time and the I/O cost (number of disk page accesses) in the experiments. Note that the running time includes the computation and I/O time.
We use Flickr as our default dataset, unless stated otherwise.

\subsection{The GNNK Query Algorithms}
We conduct seven sets of experiments to evaluate the performance of GNNK-BB and GNNK-BF. In each set of experiments, one parameter (e.g., group size $n$ or $\alpha$) is varied while all other parameters are set to their default values. GNNK-BF outperforms GNNK-BB 
in all experiments both in terms of running time and I/O cost.

\textbf{Varying $k$.}
%Figure~\ref{graph:k-run},~\ref{graph:k-io} 
Figure~\ref{graph:flickr} (a-b) shows that for both GNNNK-BB and GNNK-BF, the processing time and the I/O cost increase with the increase of $k$. For both SUM and MAX, on average GNNK-BF runs 3.5 times faster than GNNK-BB. We also observe that for a larger value of $k$, GNNK-BF algorithm outperforms GNN-BB in a greater margin, which shows the scalability of GNNK-BF. The I/O cost of GNNK-BF is much less than that of GNNK-BB as GNNK-BF only accesses the necessary nodes.

%effect of the number of data points to be returned ($k$). 
%GNNK-BF outperforms GNNK-BB constantly in this set of experiments. 
%As $k$ increases, the running time and the I/O cost of both algorithms grow, which is expected. However, we can see that the costs of GNNK-BF grow much slower. 
%This demonstrates the much better scalability of GNNK-BF.
%
%For SUM, the running time of the GNNK-BB
%algorithm increases 3 times when the value of k increases
%to 50 from 1, but the running time of the GNNK-BF
%algorithm increases only 1.5 times. For MAX, running time
%increases 4 times for GNNK-BB and only 3 times for
%GNNK-BF as we vary k. In all cases our proposed
%algorithms runs at least 2.5 times and at most 5 times
%faster than the baseline algorithms. Our algorithms also
%outperforms the baseline algorithms in term of I/O cost.
%When k is 1, GNNK-BF requires 22% less I/O than
%GNNK-BB for SUM and 43% less I/O for MAX. When k
%becomes 80, GNNK-BF requires 38% less I/O than
%GNNK-BB for SUM and 65% less I/O for MAX.

\textbf{Varying Query Group Size.}
%Figure~\ref{graph:group-run},~\ref{graph:group-io} 
Figure~\ref{graph:flickr} (c-d) shows the effect of the query group size ($n$). 
The query processing costs of both algorithms increase as the value of $n$ increases. On average, GNNK-BF runs approximately 4 times
faster than GNNK-BB.  

%also requires less
%I/O than the baseline algorithms. For MAX, GNNK-BF
%requires 54% less I/O than GNNK-BB when group size is
%10 and 30% less I/O when group size is 80. For SUM
%
%GNNK-BF requires 25% less I/O than GNNK-BB when
%group size is 10 and 13% less I/O when group size is 80.

\textbf{Varying Number of Query Keywords.}
%Figure~\ref{graph:keyword-run},~\ref{graph:keyword-io}
Figure~\ref{graph:flickr} (e-f) shows the effect of the number of keywords in each query object. 
GNNK-BF again outruns GNNK-BB in all the experiments. 
Also, the query processing costs of both algorithms increase as the number of keywords in each query object increases. This can be explained by that a larger 
set of query keywords takes more time to compute the aggregate cost function. Meanwhile, more data objects' keyword sets would overlap with 
the query keywords, which would reduce the aggregate cost function values and make it more difficult to prune the data objects. 

\textbf{Varying Query Space Size.}
We observe that the running time of our algorithms remains almost constant with the change of the query space area (not shown in graphs). Since varied query space areas are insignificant in compared to the data space, we do not observe any significant change in this experiment.
%As this size increases, the query objects becomes more spatially disperse and the query costs should increase, which is the case on the Flickr dataset. 
%However, the query costs on the Yelp dataset have stayed stable. This is probably because the dataset is much smaller and the data objects have fewer keywords, which help neutralize the effect of more spatially disperse query objects. 

\textbf{Varying Query Keyword Set Size.}
%Figure~\ref{graph:keyword-space-run},~\ref{graph:keyword-space-io}
Figure~\ref{graph:flickr} (g-h) shows the effect of the query keyword set size (the subset of keywords from where the query keywords are generated). We see that the running time of our algorithms do not follow any regular pattern with the change of the query keyword set size and remains relatively stable.
%As this size increases, the query processing costs also increase. This is expected as with a smaller keyword set size, the number of common keywords between different query objects in a query group is larger. Thus, the search can quickly converge to the nodes that contain data objects with those common keywords. 

\textbf{Varying $\alpha$.}
%Figure~\ref{graph:alpha} shows the effect of $\alpha$. 
We observe that, as $\alpha$ increases, the query costs decrease. A larger $\alpha$ means that spatial proximity is deemed more important than textual similarity. When $\alpha$ increases, the impact of the keyword similarity becomes smaller and algorithms converge faster (not shown in graphs).
% on both Yelp and Flickr dataset, but the change is less prominent in Flickr dataset. This can be explained by that Yelp is a much smaller dataset with a smaller number of keywords on each object. When $\alpha$ increases, the 
%impact of the keyword similarity becomes smaller and the algorithm can converge faster. For the Flickr dataset, the data objects have more keywords, and the impact 
%of changing the value of $\alpha$ is less significant.
%Figure of varying $\alpha$ is omitted due to space limitation.

\textbf{Varying Dataset Size.}
%Figure~\ref{graph:dataset-size-run},~\ref{graph:dataset-size-io} 
Figure~\ref{graph:flickr} (i-j) shows the effect of varying number of objects. Both running time and I/O cost of our proposed algorithms increase at a lower rate than the baseline algorithms. When the number of data objects increases from 1M to 2.5M, the running time of GNNK-BB increases 6 times for SUM and 4.7 times for MAX. But the increase in running time of GNNK-BF is only 2.7 times for SUM and 2.5 times for MAX.

%Varying Alpha
%\begin {figure}[!ht]
%\setlength{\belowcaptionskip}{2pt}
%\setlength{\abovecaptionskip}{-1pt}
%	\centering
%	\begin{subfigure}[b]{0.5\linewidth}
%		\centering
%		\setlength{\abovecaptionskip}{-1pt}
%		\setlength{\belowcaptionskip}{-4pt}
%		\resizebox{\textwidth}{!}{
%			\input{}
%		}
%		\caption{Running time}
%	\end{subfigure}%
%	\begin{subfigure}[b]{0.5\linewidth}
%		\centering
%		\setlength{\abovecaptionskip}{-1pt}
%		\setlength{\belowcaptionskip}{-4pt}
%		\resizebox{\textwidth}{!}{
%			\input{}
%		}
%		\caption{I/O}
%	\end{subfigure}%
%	\caption{Varying $\alpha$}
%	\label{graph:alpha}
%\end {figure}

\subsection{The FSNNK Query Algorithms}
We performed experiments on FSNNK-BB and FSNNK-BF, by varying query group size, subgroup size, number of query keywords, query space size, query keyword set size, $k$, dataset size, and $\alpha$. FSNNK-BF outperforms FSNNK-BB in all the experiments. For space constraints, we only show the effect of varying the subgroup size (in \% $n$) in Figure~\ref{graph:flickr-fsnnk-mfsnnk} (a-b). On average, FSNNK-BF runs 3.5 times faster and takes 40\% less I/O than FSNNK-BB.

% Varying Subgroup Size
\begin {figure}[!ht]
\setlength{\belowcaptionskip}{4pt}
\setlength{\abovecaptionskip}{4pt}
	\centering
	\begin{subfigure}[b]{0.5\linewidth}
		\centering
		\setlength{\abovecaptionskip}{2pt}
		\setlength{\belowcaptionskip}{2pt}
		\resizebox{\textwidth}{!}{
			% GNUPLOT: LaTeX picture with Postscript
\begingroup
  \makeatletter
  \providecommand\color[2][]{%
    \GenericError{(gnuplot) \space\space\space\@spaces}{%
      Package color not loaded in conjunction with
      terminal option `colourtext'%
    }{See the gnuplot documentation for explanation.%
    }{Either use 'blacktext' in gnuplot or load the package
      color.sty in LaTeX.}%
    \renewcommand\color[2][]{}%
  }%
  \providecommand\includegraphics[2][]{%
    \GenericError{(gnuplot) \space\space\space\@spaces}{%
      Package graphicx or graphics not loaded%
    }{See the gnuplot documentation for explanation.%
    }{The gnuplot epslatex terminal needs graphicx.sty or graphics.sty.}%
    \renewcommand\includegraphics[2][]{}%
  }%
  \providecommand\rotatebox[2]{#2}%
  \@ifundefined{ifGPcolor}{%
    \newif\ifGPcolor
    \GPcolorfalse
  }{}%
  \@ifundefined{ifGPblacktext}{%
    \newif\ifGPblacktext
    \GPblacktexttrue
  }{}%
  % define a \g@addto@macro without @ in the name:
  \let\gplgaddtomacro\g@addto@macro
  % define empty templates for all commands taking text:
  \gdef\gplbacktext{}%
  \gdef\gplfronttext{}%
  \makeatother
  \ifGPblacktext
    % no textcolor at all
    \def\colorrgb#1{}%
    \def\colorgray#1{}%
  \else
    % gray or color?
    \ifGPcolor
      \def\colorrgb#1{\color[rgb]{#1}}%
      \def\colorgray#1{\color[gray]{#1}}%
      \expandafter\def\csname LTw\endcsname{\color{white}}%
      \expandafter\def\csname LTb\endcsname{\color{black}}%
      \expandafter\def\csname LTa\endcsname{\color{black}}%
      \expandafter\def\csname LT0\endcsname{\color[rgb]{1,0,0}}%
      \expandafter\def\csname LT1\endcsname{\color[rgb]{0,1,0}}%
      \expandafter\def\csname LT2\endcsname{\color[rgb]{0,0,1}}%
      \expandafter\def\csname LT3\endcsname{\color[rgb]{1,0,1}}%
      \expandafter\def\csname LT4\endcsname{\color[rgb]{0,1,1}}%
      \expandafter\def\csname LT5\endcsname{\color[rgb]{1,1,0}}%
      \expandafter\def\csname LT6\endcsname{\color[rgb]{0,0,0}}%
      \expandafter\def\csname LT7\endcsname{\color[rgb]{1,0.3,0}}%
      \expandafter\def\csname LT8\endcsname{\color[rgb]{0.5,0.5,0.5}}%
    \else
      % gray
      \def\colorrgb#1{\color{black}}%
      \def\colorgray#1{\color[gray]{#1}}%
      \expandafter\def\csname LTw\endcsname{\color{white}}%
      \expandafter\def\csname LTb\endcsname{\color{black}}%
      \expandafter\def\csname LTa\endcsname{\color{black}}%
      \expandafter\def\csname LT0\endcsname{\color{black}}%
      \expandafter\def\csname LT1\endcsname{\color{black}}%
      \expandafter\def\csname LT2\endcsname{\color{black}}%
      \expandafter\def\csname LT3\endcsname{\color{black}}%
      \expandafter\def\csname LT4\endcsname{\color{black}}%
      \expandafter\def\csname LT5\endcsname{\color{black}}%
      \expandafter\def\csname LT6\endcsname{\color{black}}%
      \expandafter\def\csname LT7\endcsname{\color{black}}%
      \expandafter\def\csname LT8\endcsname{\color{black}}%
    \fi
  \fi
    \setlength{\unitlength}{0.0500bp}%
    \ifx\gptboxheight\undefined%
      \newlength{\gptboxheight}%
      \newlength{\gptboxwidth}%
      \newsavebox{\gptboxtext}%
    \fi%
    \setlength{\fboxrule}{0.5pt}%
    \setlength{\fboxsep}{1pt}%
\begin{picture}(4320.00,2880.00)%
    \gplgaddtomacro\gplbacktext{%
      \csname LTb\endcsname%
      \put(688,512){\makebox(0,0)[r]{\strut{}$0$}}%
      \put(688,875){\makebox(0,0)[r]{\strut{}$1000$}}%
      \put(688,1237){\makebox(0,0)[r]{\strut{}$2000$}}%
      \put(688,1600){\makebox(0,0)[r]{\strut{}$3000$}}%
      \put(688,1962){\makebox(0,0)[r]{\strut{}$4000$}}%
      \put(688,2325){\makebox(0,0)[r]{\strut{}$5000$}}%
      \put(688,2687){\makebox(0,0)[r]{\strut{}$6000$}}%
      \put(784,352){\makebox(0,0){\strut{}40}}%
      \put(1596,352){\makebox(0,0){\strut{}50}}%
      \put(2408,352){\makebox(0,0){\strut{}60}}%
      \put(3219,352){\makebox(0,0){\strut{}70}}%
      \put(4031,352){\makebox(0,0){\strut{}80}}%
    }%
    \gplgaddtomacro\gplfronttext{%
      \csname LTb\endcsname%
      \put(128,1599){\rotatebox{-270}{\makebox(0,0){\strut{}running time (ms)}}}%
      \put(2407,112){\makebox(0,0){\strut{}Subgroup Size (\%)}}%
      \csname LTb\endcsname%
      \put(3296,2544){\makebox(0,0)[r]{\strut{}FSNNK-BB (SUM)}}%
      \csname LTb\endcsname%
      \put(3296,2384){\makebox(0,0)[r]{\strut{}FSNNK-BF (SUM)}}%
      \csname LTb\endcsname%
      \put(3296,2224){\makebox(0,0)[r]{\strut{}FSNNK-BB (MAX)}}%
      \csname LTb\endcsname%
      \put(3296,2064){\makebox(0,0)[r]{\strut{}FSNNK-BF (MAX)}}%
    }%
    \gplbacktext
    \put(0,0){\includegraphics{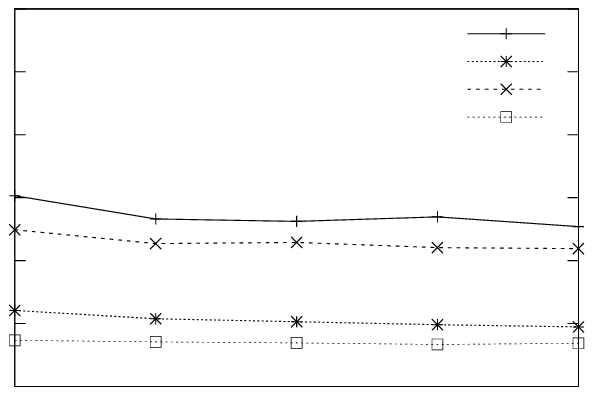}}%
    \gplfronttext
  \end{picture}%
\endgroup

		}
		%\caption{Running time}
		\caption{}
		\label{graph:subgroup-fsnnk-run}
	\end{subfigure}%
	\begin{subfigure}[b]{0.5\linewidth}
		\centering
		\setlength{\abovecaptionskip}{2pt}
		\setlength{\belowcaptionskip}{2pt}
		\resizebox{\textwidth}{!}{
			% GNUPLOT: LaTeX picture with Postscript
\begingroup
  \makeatletter
  \providecommand\color[2][]{%
    \GenericError{(gnuplot) \space\space\space\@spaces}{%
      Package color not loaded in conjunction with
      terminal option `colourtext'%
    }{See the gnuplot documentation for explanation.%
    }{Either use 'blacktext' in gnuplot or load the package
      color.sty in LaTeX.}%
    \renewcommand\color[2][]{}%
  }%
  \providecommand\includegraphics[2][]{%
    \GenericError{(gnuplot) \space\space\space\@spaces}{%
      Package graphicx or graphics not loaded%
    }{See the gnuplot documentation for explanation.%
    }{The gnuplot epslatex terminal needs graphicx.sty or graphics.sty.}%
    \renewcommand\includegraphics[2][]{}%
  }%
  \providecommand\rotatebox[2]{#2}%
  \@ifundefined{ifGPcolor}{%
    \newif\ifGPcolor
    \GPcolorfalse
  }{}%
  \@ifundefined{ifGPblacktext}{%
    \newif\ifGPblacktext
    \GPblacktexttrue
  }{}%
  % define a \g@addto@macro without @ in the name:
  \let\gplgaddtomacro\g@addto@macro
  % define empty templates for all commands taking text:
  \gdef\gplbacktext{}%
  \gdef\gplfronttext{}%
  \makeatother
  \ifGPblacktext
    % no textcolor at all
    \def\colorrgb#1{}%
    \def\colorgray#1{}%
  \else
    % gray or color?
    \ifGPcolor
      \def\colorrgb#1{\color[rgb]{#1}}%
      \def\colorgray#1{\color[gray]{#1}}%
      \expandafter\def\csname LTw\endcsname{\color{white}}%
      \expandafter\def\csname LTb\endcsname{\color{black}}%
      \expandafter\def\csname LTa\endcsname{\color{black}}%
      \expandafter\def\csname LT0\endcsname{\color[rgb]{1,0,0}}%
      \expandafter\def\csname LT1\endcsname{\color[rgb]{0,1,0}}%
      \expandafter\def\csname LT2\endcsname{\color[rgb]{0,0,1}}%
      \expandafter\def\csname LT3\endcsname{\color[rgb]{1,0,1}}%
      \expandafter\def\csname LT4\endcsname{\color[rgb]{0,1,1}}%
      \expandafter\def\csname LT5\endcsname{\color[rgb]{1,1,0}}%
      \expandafter\def\csname LT6\endcsname{\color[rgb]{0,0,0}}%
      \expandafter\def\csname LT7\endcsname{\color[rgb]{1,0.3,0}}%
      \expandafter\def\csname LT8\endcsname{\color[rgb]{0.5,0.5,0.5}}%
    \else
      % gray
      \def\colorrgb#1{\color{black}}%
      \def\colorgray#1{\color[gray]{#1}}%
      \expandafter\def\csname LTw\endcsname{\color{white}}%
      \expandafter\def\csname LTb\endcsname{\color{black}}%
      \expandafter\def\csname LTa\endcsname{\color{black}}%
      \expandafter\def\csname LT0\endcsname{\color{black}}%
      \expandafter\def\csname LT1\endcsname{\color{black}}%
      \expandafter\def\csname LT2\endcsname{\color{black}}%
      \expandafter\def\csname LT3\endcsname{\color{black}}%
      \expandafter\def\csname LT4\endcsname{\color{black}}%
      \expandafter\def\csname LT5\endcsname{\color{black}}%
      \expandafter\def\csname LT6\endcsname{\color{black}}%
      \expandafter\def\csname LT7\endcsname{\color{black}}%
      \expandafter\def\csname LT8\endcsname{\color{black}}%
    \fi
  \fi
    \setlength{\unitlength}{0.0500bp}%
    \ifx\gptboxheight\undefined%
      \newlength{\gptboxheight}%
      \newlength{\gptboxwidth}%
      \newsavebox{\gptboxtext}%
    \fi%
    \setlength{\fboxrule}{0.5pt}%
    \setlength{\fboxsep}{1pt}%
\begin{picture}(4320.00,2880.00)%
    \gplgaddtomacro\gplbacktext{%
      \csname LTb\endcsname%
      \put(688,512){\makebox(0,0)[r]{\strut{}$0$}}%
      \put(688,730){\makebox(0,0)[r]{\strut{}$100$}}%
      \put(688,947){\makebox(0,0)[r]{\strut{}$200$}}%
      \put(688,1165){\makebox(0,0)[r]{\strut{}$300$}}%
      \put(688,1382){\makebox(0,0)[r]{\strut{}$400$}}%
      \put(688,1600){\makebox(0,0)[r]{\strut{}$500$}}%
      \put(688,1817){\makebox(0,0)[r]{\strut{}$600$}}%
      \put(688,2034){\makebox(0,0)[r]{\strut{}$700$}}%
      \put(688,2252){\makebox(0,0)[r]{\strut{}$800$}}%
      \put(688,2470){\makebox(0,0)[r]{\strut{}$900$}}%
      \put(688,2687){\makebox(0,0)[r]{\strut{}$1000$}}%
      \put(784,352){\makebox(0,0){\strut{}40}}%
      \put(1596,352){\makebox(0,0){\strut{}50}}%
      \put(2408,352){\makebox(0,0){\strut{}60}}%
      \put(3219,352){\makebox(0,0){\strut{}70}}%
      \put(4031,352){\makebox(0,0){\strut{}80}}%
    }%
    \gplgaddtomacro\gplfronttext{%
      \csname LTb\endcsname%
      \put(128,1599){\rotatebox{-270}{\makebox(0,0){\strut{}\# page accesses}}}%
      \put(2407,112){\makebox(0,0){\strut{}Subgroup Size (\%)}}%
      \csname LTb\endcsname%
      \put(3296,2544){\makebox(0,0)[r]{\strut{}FSNNK-BB (SUM)}}%
      \csname LTb\endcsname%
      \put(3296,2384){\makebox(0,0)[r]{\strut{}FSNNK-BF (SUM)}}%
      \csname LTb\endcsname%
      \put(3296,2224){\makebox(0,0)[r]{\strut{}FSNNK-BB (MAX)}}%
      \csname LTb\endcsname%
      \put(3296,2064){\makebox(0,0)[r]{\strut{}FSNNK-BF (MAX)}}%
    }%
    \gplbacktext
    \put(0,0){\includegraphics{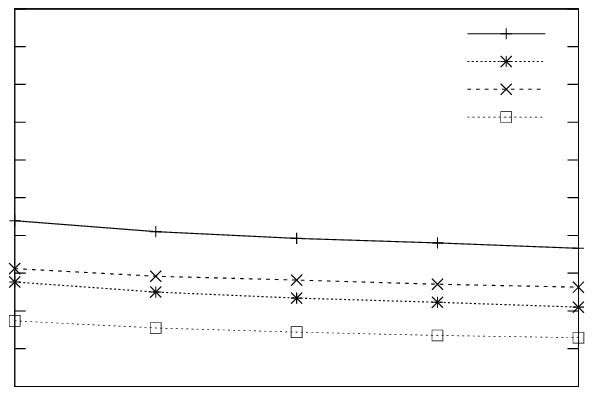}}%
    \gplfronttext
  \end{picture}%
\endgroup

		}
		%\caption{I/O}
		\caption{}
		\label{graph:subgroup-fsnnk-io}
	\end{subfigure}%
%	\caption{Varying subgroup size $m$}
%	\label{graph:subgroup-fsnnk}
%\end {figure}

% Varying Minimum Subgroup Size
%\begin {figure}[!ht]
%\setlength{\belowcaptionskip}{-1pt}
%\setlength{\abovecaptionskip}{-1pt}
%	\centering
	\begin{subfigure}[b]{0.5\linewidth}
		\centering
		\setlength{\abovecaptionskip}{2pt}
		\setlength{\belowcaptionskip}{2pt}
		\resizebox{\textwidth}{!}{
			% GNUPLOT: LaTeX picture with Postscript
\begingroup
  \makeatletter
  \providecommand\color[2][]{%
    \GenericError{(gnuplot) \space\space\space\@spaces}{%
      Package color not loaded in conjunction with
      terminal option `colourtext'%
    }{See the gnuplot documentation for explanation.%
    }{Either use 'blacktext' in gnuplot or load the package
      color.sty in LaTeX.}%
    \renewcommand\color[2][]{}%
  }%
  \providecommand\includegraphics[2][]{%
    \GenericError{(gnuplot) \space\space\space\@spaces}{%
      Package graphicx or graphics not loaded%
    }{See the gnuplot documentation for explanation.%
    }{The gnuplot epslatex terminal needs graphicx.sty or graphics.sty.}%
    \renewcommand\includegraphics[2][]{}%
  }%
  \providecommand\rotatebox[2]{#2}%
  \@ifundefined{ifGPcolor}{%
    \newif\ifGPcolor
    \GPcolorfalse
  }{}%
  \@ifundefined{ifGPblacktext}{%
    \newif\ifGPblacktext
    \GPblacktexttrue
  }{}%
  % define a \g@addto@macro without @ in the name:
  \let\gplgaddtomacro\g@addto@macro
  % define empty templates for all commands taking text:
  \gdef\gplbacktext{}%
  \gdef\gplfronttext{}%
  \makeatother
  \ifGPblacktext
    % no textcolor at all
    \def\colorrgb#1{}%
    \def\colorgray#1{}%
  \else
    % gray or color?
    \ifGPcolor
      \def\colorrgb#1{\color[rgb]{#1}}%
      \def\colorgray#1{\color[gray]{#1}}%
      \expandafter\def\csname LTw\endcsname{\color{white}}%
      \expandafter\def\csname LTb\endcsname{\color{black}}%
      \expandafter\def\csname LTa\endcsname{\color{black}}%
      \expandafter\def\csname LT0\endcsname{\color[rgb]{1,0,0}}%
      \expandafter\def\csname LT1\endcsname{\color[rgb]{0,1,0}}%
      \expandafter\def\csname LT2\endcsname{\color[rgb]{0,0,1}}%
      \expandafter\def\csname LT3\endcsname{\color[rgb]{1,0,1}}%
      \expandafter\def\csname LT4\endcsname{\color[rgb]{0,1,1}}%
      \expandafter\def\csname LT5\endcsname{\color[rgb]{1,1,0}}%
      \expandafter\def\csname LT6\endcsname{\color[rgb]{0,0,0}}%
      \expandafter\def\csname LT7\endcsname{\color[rgb]{1,0.3,0}}%
      \expandafter\def\csname LT8\endcsname{\color[rgb]{0.5,0.5,0.5}}%
    \else
      % gray
      \def\colorrgb#1{\color{black}}%
      \def\colorgray#1{\color[gray]{#1}}%
      \expandafter\def\csname LTw\endcsname{\color{white}}%
      \expandafter\def\csname LTb\endcsname{\color{black}}%
      \expandafter\def\csname LTa\endcsname{\color{black}}%
      \expandafter\def\csname LT0\endcsname{\color{black}}%
      \expandafter\def\csname LT1\endcsname{\color{black}}%
      \expandafter\def\csname LT2\endcsname{\color{black}}%
      \expandafter\def\csname LT3\endcsname{\color{black}}%
      \expandafter\def\csname LT4\endcsname{\color{black}}%
      \expandafter\def\csname LT5\endcsname{\color{black}}%
      \expandafter\def\csname LT6\endcsname{\color{black}}%
      \expandafter\def\csname LT7\endcsname{\color{black}}%
      \expandafter\def\csname LT8\endcsname{\color{black}}%
    \fi
  \fi
    \setlength{\unitlength}{0.0500bp}%
    \ifx\gptboxheight\undefined%
      \newlength{\gptboxheight}%
      \newlength{\gptboxwidth}%
      \newsavebox{\gptboxtext}%
    \fi%
    \setlength{\fboxrule}{0.5pt}%
    \setlength{\fboxsep}{1pt}%
\begin{picture}(4320.00,2880.00)%
    \gplgaddtomacro\gplbacktext{%
      \csname LTb\endcsname%
      \put(688,512){\makebox(0,0)[r]{\strut{}$0$}}%
      \put(688,875){\makebox(0,0)[r]{\strut{}$1000$}}%
      \put(688,1237){\makebox(0,0)[r]{\strut{}$2000$}}%
      \put(688,1600){\makebox(0,0)[r]{\strut{}$3000$}}%
      \put(688,1962){\makebox(0,0)[r]{\strut{}$4000$}}%
      \put(688,2325){\makebox(0,0)[r]{\strut{}$5000$}}%
      \put(688,2687){\makebox(0,0)[r]{\strut{}$6000$}}%
      \put(784,352){\makebox(0,0){\strut{}40}}%
      \put(1596,352){\makebox(0,0){\strut{}50}}%
      \put(2408,352){\makebox(0,0){\strut{}60}}%
      \put(3219,352){\makebox(0,0){\strut{}70}}%
      \put(4031,352){\makebox(0,0){\strut{}80}}%
    }%
    \gplgaddtomacro\gplfronttext{%
      \csname LTb\endcsname%
      \put(128,1599){\rotatebox{-270}{\makebox(0,0){\strut{}running time (ms)}}}%
      \put(2407,112){\makebox(0,0){\strut{}Subgroup Size (\%)}}%
      \csname LTb\endcsname%
      \put(3296,2544){\makebox(0,0)[r]{\strut{}MFSNNK-N (SUM)}}%
      \csname LTb\endcsname%
      \put(3296,2384){\makebox(0,0)[r]{\strut{}MFSNNK-BF (SUM)}}%
      \csname LTb\endcsname%
      \put(3296,2224){\makebox(0,0)[r]{\strut{}MFSNNK-N (MAX)}}%
      \csname LTb\endcsname%
      \put(3296,2064){\makebox(0,0)[r]{\strut{}MFSNNK-BF (MAX)}}%
    }%
    \gplbacktext
    \put(0,0){\includegraphics{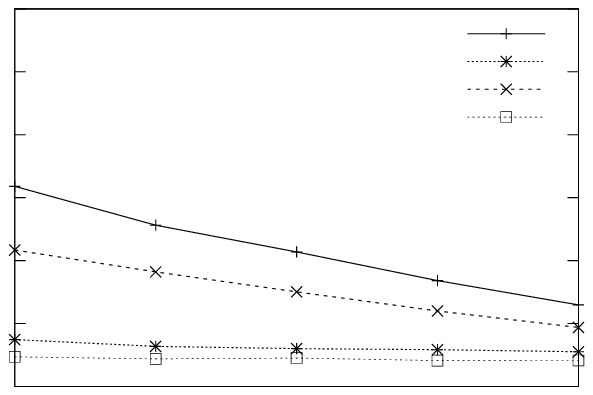}}%
    \gplfronttext
  \end{picture}%
\endgroup

		}
		%\caption{Running time}
		\caption{}
		\label{graph:subgroup-mfsnnk-run}
	\end{subfigure}%
	\begin{subfigure}[b]{0.5\linewidth}
		\centering
		\setlength{\abovecaptionskip}{2pt}
		\setlength{\belowcaptionskip}{2pt}
		\resizebox{\textwidth}{!}{
			% GNUPLOT: LaTeX picture with Postscript
\begingroup
  \makeatletter
  \providecommand\color[2][]{%
    \GenericError{(gnuplot) \space\space\space\@spaces}{%
      Package color not loaded in conjunction with
      terminal option `colourtext'%
    }{See the gnuplot documentation for explanation.%
    }{Either use 'blacktext' in gnuplot or load the package
      color.sty in LaTeX.}%
    \renewcommand\color[2][]{}%
  }%
  \providecommand\includegraphics[2][]{%
    \GenericError{(gnuplot) \space\space\space\@spaces}{%
      Package graphicx or graphics not loaded%
    }{See the gnuplot documentation for explanation.%
    }{The gnuplot epslatex terminal needs graphicx.sty or graphics.sty.}%
    \renewcommand\includegraphics[2][]{}%
  }%
  \providecommand\rotatebox[2]{#2}%
  \@ifundefined{ifGPcolor}{%
    \newif\ifGPcolor
    \GPcolorfalse
  }{}%
  \@ifundefined{ifGPblacktext}{%
    \newif\ifGPblacktext
    \GPblacktexttrue
  }{}%
  % define a \g@addto@macro without @ in the name:
  \let\gplgaddtomacro\g@addto@macro
  % define empty templates for all commands taking text:
  \gdef\gplbacktext{}%
  \gdef\gplfronttext{}%
  \makeatother
  \ifGPblacktext
    % no textcolor at all
    \def\colorrgb#1{}%
    \def\colorgray#1{}%
  \else
    % gray or color?
    \ifGPcolor
      \def\colorrgb#1{\color[rgb]{#1}}%
      \def\colorgray#1{\color[gray]{#1}}%
      \expandafter\def\csname LTw\endcsname{\color{white}}%
      \expandafter\def\csname LTb\endcsname{\color{black}}%
      \expandafter\def\csname LTa\endcsname{\color{black}}%
      \expandafter\def\csname LT0\endcsname{\color[rgb]{1,0,0}}%
      \expandafter\def\csname LT1\endcsname{\color[rgb]{0,1,0}}%
      \expandafter\def\csname LT2\endcsname{\color[rgb]{0,0,1}}%
      \expandafter\def\csname LT3\endcsname{\color[rgb]{1,0,1}}%
      \expandafter\def\csname LT4\endcsname{\color[rgb]{0,1,1}}%
      \expandafter\def\csname LT5\endcsname{\color[rgb]{1,1,0}}%
      \expandafter\def\csname LT6\endcsname{\color[rgb]{0,0,0}}%
      \expandafter\def\csname LT7\endcsname{\color[rgb]{1,0.3,0}}%
      \expandafter\def\csname LT8\endcsname{\color[rgb]{0.5,0.5,0.5}}%
    \else
      % gray
      \def\colorrgb#1{\color{black}}%
      \def\colorgray#1{\color[gray]{#1}}%
      \expandafter\def\csname LTw\endcsname{\color{white}}%
      \expandafter\def\csname LTb\endcsname{\color{black}}%
      \expandafter\def\csname LTa\endcsname{\color{black}}%
      \expandafter\def\csname LT0\endcsname{\color{black}}%
      \expandafter\def\csname LT1\endcsname{\color{black}}%
      \expandafter\def\csname LT2\endcsname{\color{black}}%
      \expandafter\def\csname LT3\endcsname{\color{black}}%
      \expandafter\def\csname LT4\endcsname{\color{black}}%
      \expandafter\def\csname LT5\endcsname{\color{black}}%
      \expandafter\def\csname LT6\endcsname{\color{black}}%
      \expandafter\def\csname LT7\endcsname{\color{black}}%
      \expandafter\def\csname LT8\endcsname{\color{black}}%
    \fi
  \fi
    \setlength{\unitlength}{0.0500bp}%
    \ifx\gptboxheight\undefined%
      \newlength{\gptboxheight}%
      \newlength{\gptboxwidth}%
      \newsavebox{\gptboxtext}%
    \fi%
    \setlength{\fboxrule}{0.5pt}%
    \setlength{\fboxsep}{1pt}%
\begin{picture}(4320.00,2880.00)%
    \gplgaddtomacro\gplbacktext{%
      \csname LTb\endcsname%
      \put(688,512){\makebox(0,0)[r]{\strut{}$0$}}%
      \put(688,823){\makebox(0,0)[r]{\strut{}$200$}}%
      \put(688,1133){\makebox(0,0)[r]{\strut{}$400$}}%
      \put(688,1444){\makebox(0,0)[r]{\strut{}$600$}}%
      \put(688,1755){\makebox(0,0)[r]{\strut{}$800$}}%
      \put(688,2066){\makebox(0,0)[r]{\strut{}$1000$}}%
      \put(688,2376){\makebox(0,0)[r]{\strut{}$1200$}}%
      \put(688,2687){\makebox(0,0)[r]{\strut{}$1400$}}%
      \put(784,352){\makebox(0,0){\strut{}40}}%
      \put(1596,352){\makebox(0,0){\strut{}50}}%
      \put(2408,352){\makebox(0,0){\strut{}60}}%
      \put(3219,352){\makebox(0,0){\strut{}70}}%
      \put(4031,352){\makebox(0,0){\strut{}80}}%
    }%
    \gplgaddtomacro\gplfronttext{%
      \csname LTb\endcsname%
      \put(128,1599){\rotatebox{-270}{\makebox(0,0){\strut{}\# page accesses}}}%
      \put(2407,112){\makebox(0,0){\strut{}Subgroup Size (\%)}}%
      \csname LTb\endcsname%
      \put(3296,2544){\makebox(0,0)[r]{\strut{}MFSNNK-N (SUM)}}%
      \csname LTb\endcsname%
      \put(3296,2384){\makebox(0,0)[r]{\strut{}MFSNNK-BF (SUM)}}%
      \csname LTb\endcsname%
      \put(3296,2224){\makebox(0,0)[r]{\strut{}MFSNNK-N (MAX)}}%
      \csname LTb\endcsname%
      \put(3296,2064){\makebox(0,0)[r]{\strut{}MFSNNK-BF (MAX)}}%
    }%
    \gplbacktext
    \put(0,0){\includegraphics{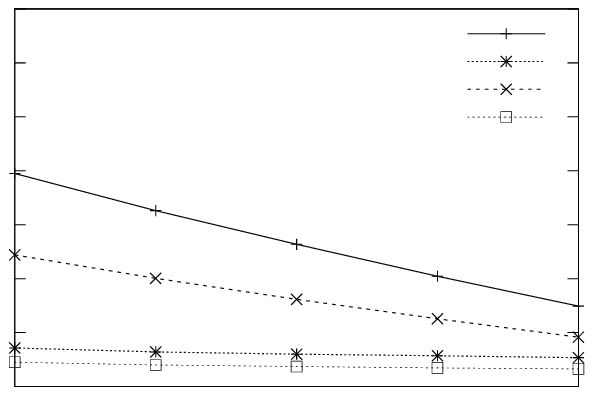}}%
    \gplfronttext
  \end{picture}%
\endgroup

		}
		%\caption{I/O}
		\caption{}
		\label{graph:subgroup-mfsnnk-io}
	\end{subfigure}%
	\caption{The effect of varying subgroup size $m$ (a-b) and minimum subgroup size (c-d) in running time and I/O}
	\label{graph:flickr-fsnnk-mfsnnk}
	%\vspace{-2mm}
\end {figure}

\subsection{The MFSNNK Query Algorithms}
We performed similar experiments on MFSNNK-N and MFSNNK-BF. 
In all the experiments MFSNNK-BF significantly outperforms MFSNNK-N. Due to space constraints, we only show the effect of varying the minimum subgroup size (in percentage of $n$) in Figure~\ref{graph:flickr-fsnnk-mfsnnk} (c-d).
%Figure~\ref{graph:subgroup-mfsnnk-run},~\ref{graph:subgroup-mfsnnk-io}. 
When the minimum subgroup size increases, the running time of both algorithms decrease as expected. Meanwhile, the costs of MFSNNK-BF change in a much smaller scale, which demonstrates the better scalability of MFSNNK-BF. On average, MFSNNK-BF runs about 4 times faster than MFSNNK-N.

%hen the subgroup size is 80% of the original group, MFSNNK-N runs 2.5 times faster than MFSNNK-BF for both SUM and MAX. The I/O cost is also 66% lower. For a subgroup size of 40%, MFSNNK-N runs about 4 times faster than MFSNNK-BF. It's I/O cost is also 87% lower when the subgroup size is 40% of the group size.

%\subsection{Effect of Keyword Dropping}
%We perform another set of experiment on $GNNK$ algorithm by dropping less weighted keywords. Each keyword supplied by a user is given a weight and we perform experimentation while dropping 0-50\% keywords supplied by each user. When drop percentage is increased, I/O cost reduces as expected, because small set of keywords means small number of I/O access to the inverted files. Although the spatial cost does not reduce much with increasing drop percentage. The graph of this experiment is omitted due to space constraint. 
%

%Varying subgroup-size

\begin {figure}[!ht]
\setlength{\belowcaptionskip}{4pt}
\setlength{\abovecaptionskip}{4pt}
	\centering
	\begin{subfigure}[b]{0.5\linewidth}
		\centering
		\setlength{\abovecaptionskip}{2pt}
		\setlength{\belowcaptionskip}{2pt}
		\resizebox{\textwidth}{!}{
			% GNUPLOT: LaTeX picture with Postscript
\begingroup
  \makeatletter
  \providecommand\color[2][]{%
    \GenericError{(gnuplot) \space\space\space\@spaces}{%
      Package color not loaded in conjunction with
      terminal option `colourtext'%
    }{See the gnuplot documentation for explanation.%
    }{Either use 'blacktext' in gnuplot or load the package
      color.sty in LaTeX.}%
    \renewcommand\color[2][]{}%
  }%
  \providecommand\includegraphics[2][]{%
    \GenericError{(gnuplot) \space\space\space\@spaces}{%
      Package graphicx or graphics not loaded%
    }{See the gnuplot documentation for explanation.%
    }{The gnuplot epslatex terminal needs graphicx.sty or graphics.sty.}%
    \renewcommand\includegraphics[2][]{}%
  }%
  \providecommand\rotatebox[2]{#2}%
  \@ifundefined{ifGPcolor}{%
    \newif\ifGPcolor
    \GPcolorfalse
  }{}%
  \@ifundefined{ifGPblacktext}{%
    \newif\ifGPblacktext
    \GPblacktexttrue
  }{}%
  % define a \g@addto@macro without @ in the name:
  \let\gplgaddtomacro\g@addto@macro
  % define empty templates for all commands taking text:
  \gdef\gplbacktext{}%
  \gdef\gplfronttext{}%
  \makeatother
  \ifGPblacktext
    % no textcolor at all
    \def\colorrgb#1{}%
    \def\colorgray#1{}%
  \else
    % gray or color?
    \ifGPcolor
      \def\colorrgb#1{\color[rgb]{#1}}%
      \def\colorgray#1{\color[gray]{#1}}%
      \expandafter\def\csname LTw\endcsname{\color{white}}%
      \expandafter\def\csname LTb\endcsname{\color{black}}%
      \expandafter\def\csname LTa\endcsname{\color{black}}%
      \expandafter\def\csname LT0\endcsname{\color[rgb]{1,0,0}}%
      \expandafter\def\csname LT1\endcsname{\color[rgb]{0,1,0}}%
      \expandafter\def\csname LT2\endcsname{\color[rgb]{0,0,1}}%
      \expandafter\def\csname LT3\endcsname{\color[rgb]{1,0,1}}%
      \expandafter\def\csname LT4\endcsname{\color[rgb]{0,1,1}}%
      \expandafter\def\csname LT5\endcsname{\color[rgb]{1,1,0}}%
      \expandafter\def\csname LT6\endcsname{\color[rgb]{0,0,0}}%
      \expandafter\def\csname LT7\endcsname{\color[rgb]{1,0.3,0}}%
      \expandafter\def\csname LT8\endcsname{\color[rgb]{0.5,0.5,0.5}}%
    \else
      % gray
      \def\colorrgb#1{\color{black}}%
      \def\colorgray#1{\color[gray]{#1}}%
      \expandafter\def\csname LTw\endcsname{\color{white}}%
      \expandafter\def\csname LTb\endcsname{\color{black}}%
      \expandafter\def\csname LTa\endcsname{\color{black}}%
      \expandafter\def\csname LT0\endcsname{\color{black}}%
      \expandafter\def\csname LT1\endcsname{\color{black}}%
      \expandafter\def\csname LT2\endcsname{\color{black}}%
      \expandafter\def\csname LT3\endcsname{\color{black}}%
      \expandafter\def\csname LT4\endcsname{\color{black}}%
      \expandafter\def\csname LT5\endcsname{\color{black}}%
      \expandafter\def\csname LT6\endcsname{\color{black}}%
      \expandafter\def\csname LT7\endcsname{\color{black}}%
      \expandafter\def\csname LT8\endcsname{\color{black}}%
    \fi
  \fi
    \setlength{\unitlength}{0.0500bp}%
    \ifx\gptboxheight\undefined%
      \newlength{\gptboxheight}%
      \newlength{\gptboxwidth}%
      \newsavebox{\gptboxtext}%
    \fi%
    \setlength{\fboxrule}{0.5pt}%
    \setlength{\fboxsep}{1pt}%
\begin{picture}(4320.00,2880.00)%
    \gplgaddtomacro\gplbacktext{%
      \csname LTb\endcsname%
      \put(688,512){\makebox(0,0)[r]{\strut{}$0$}}%
      \put(688,875){\makebox(0,0)[r]{\strut{}$500$}}%
      \put(688,1237){\makebox(0,0)[r]{\strut{}$1000$}}%
      \put(688,1600){\makebox(0,0)[r]{\strut{}$1500$}}%
      \put(688,1962){\makebox(0,0)[r]{\strut{}$2000$}}%
      \put(688,2325){\makebox(0,0)[r]{\strut{}$2500$}}%
      \put(688,2687){\makebox(0,0)[r]{\strut{}$3000$}}%
      \put(784,352){\makebox(0,0){\strut{}40}}%
      \put(1596,352){\makebox(0,0){\strut{}50}}%
      \put(2408,352){\makebox(0,0){\strut{}60}}%
      \put(3219,352){\makebox(0,0){\strut{}70}}%
      \put(4031,352){\makebox(0,0){\strut{}80}}%
    }%
    \gplgaddtomacro\gplfronttext{%
      \csname LTb\endcsname%
      \put(128,1599){\rotatebox{-270}{\makebox(0,0){\strut{}running time (ms)}}}%
      \put(2407,112){\makebox(0,0){\strut{}Subgroup Size (\%)}}%
      \csname LTb\endcsname%
      \put(3296,2544){\makebox(0,0)[r]{\strut{}MFSNNK-BF (SUM)}}%
      \csname LTb\endcsname%
      \put(3296,2384){\makebox(0,0)[r]{\strut{}MFSNNK-R (SUM)}}%
      \csname LTb\endcsname%
      \put(3296,2224){\makebox(0,0)[r]{\strut{}MFSNNK-BF (MAX)}}%
      \csname LTb\endcsname%
      \put(3296,2064){\makebox(0,0)[r]{\strut{}MFSNNK-R (MAX)}}%
    }%
    \gplbacktext
    \put(0,0){\includegraphics{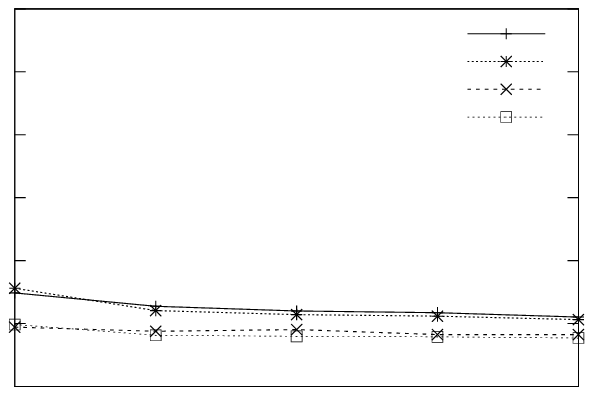}}%
    \gplfronttext
  \end{picture}%
\endgroup

		}
		%\caption{Running time}
		\caption{}
		\label{graph:subgroup-mfsnnkr-F-run}
	\end{subfigure}%
	\begin{subfigure}[b]{0.5\linewidth}
		\centering
		\setlength{\abovecaptionskip}{2pt}
		\setlength{\belowcaptionskip}{2pt}
		\resizebox{\textwidth}{!}{
			% GNUPLOT: LaTeX picture with Postscript
\begingroup
  \makeatletter
  \providecommand\color[2][]{%
    \GenericError{(gnuplot) \space\space\space\@spaces}{%
      Package color not loaded in conjunction with
      terminal option `colourtext'%
    }{See the gnuplot documentation for explanation.%
    }{Either use 'blacktext' in gnuplot or load the package
      color.sty in LaTeX.}%
    \renewcommand\color[2][]{}%
  }%
  \providecommand\includegraphics[2][]{%
    \GenericError{(gnuplot) \space\space\space\@spaces}{%
      Package graphicx or graphics not loaded%
    }{See the gnuplot documentation for explanation.%
    }{The gnuplot epslatex terminal needs graphicx.sty or graphics.sty.}%
    \renewcommand\includegraphics[2][]{}%
  }%
  \providecommand\rotatebox[2]{#2}%
  \@ifundefined{ifGPcolor}{%
    \newif\ifGPcolor
    \GPcolorfalse
  }{}%
  \@ifundefined{ifGPblacktext}{%
    \newif\ifGPblacktext
    \GPblacktexttrue
  }{}%
  % define a \g@addto@macro without @ in the name:
  \let\gplgaddtomacro\g@addto@macro
  % define empty templates for all commands taking text:
  \gdef\gplbacktext{}%
  \gdef\gplfronttext{}%
  \makeatother
  \ifGPblacktext
    % no textcolor at all
    \def\colorrgb#1{}%
    \def\colorgray#1{}%
  \else
    % gray or color?
    \ifGPcolor
      \def\colorrgb#1{\color[rgb]{#1}}%
      \def\colorgray#1{\color[gray]{#1}}%
      \expandafter\def\csname LTw\endcsname{\color{white}}%
      \expandafter\def\csname LTb\endcsname{\color{black}}%
      \expandafter\def\csname LTa\endcsname{\color{black}}%
      \expandafter\def\csname LT0\endcsname{\color[rgb]{1,0,0}}%
      \expandafter\def\csname LT1\endcsname{\color[rgb]{0,1,0}}%
      \expandafter\def\csname LT2\endcsname{\color[rgb]{0,0,1}}%
      \expandafter\def\csname LT3\endcsname{\color[rgb]{1,0,1}}%
      \expandafter\def\csname LT4\endcsname{\color[rgb]{0,1,1}}%
      \expandafter\def\csname LT5\endcsname{\color[rgb]{1,1,0}}%
      \expandafter\def\csname LT6\endcsname{\color[rgb]{0,0,0}}%
      \expandafter\def\csname LT7\endcsname{\color[rgb]{1,0.3,0}}%
      \expandafter\def\csname LT8\endcsname{\color[rgb]{0.5,0.5,0.5}}%
    \else
      % gray
      \def\colorrgb#1{\color{black}}%
      \def\colorgray#1{\color[gray]{#1}}%
      \expandafter\def\csname LTw\endcsname{\color{white}}%
      \expandafter\def\csname LTb\endcsname{\color{black}}%
      \expandafter\def\csname LTa\endcsname{\color{black}}%
      \expandafter\def\csname LT0\endcsname{\color{black}}%
      \expandafter\def\csname LT1\endcsname{\color{black}}%
      \expandafter\def\csname LT2\endcsname{\color{black}}%
      \expandafter\def\csname LT3\endcsname{\color{black}}%
      \expandafter\def\csname LT4\endcsname{\color{black}}%
      \expandafter\def\csname LT5\endcsname{\color{black}}%
      \expandafter\def\csname LT6\endcsname{\color{black}}%
      \expandafter\def\csname LT7\endcsname{\color{black}}%
      \expandafter\def\csname LT8\endcsname{\color{black}}%
    \fi
  \fi
    \setlength{\unitlength}{0.0500bp}%
    \ifx\gptboxheight\undefined%
      \newlength{\gptboxheight}%
      \newlength{\gptboxwidth}%
      \newsavebox{\gptboxtext}%
    \fi%
    \setlength{\fboxrule}{0.5pt}%
    \setlength{\fboxsep}{1pt}%
\begin{picture}(4320.00,2880.00)%
    \gplgaddtomacro\gplbacktext{%
      \csname LTb\endcsname%
      \put(592,512){\makebox(0,0)[r]{\strut{}$0$}}%
      \put(592,823){\makebox(0,0)[r]{\strut{}$100$}}%
      \put(592,1133){\makebox(0,0)[r]{\strut{}$200$}}%
      \put(592,1444){\makebox(0,0)[r]{\strut{}$300$}}%
      \put(592,1755){\makebox(0,0)[r]{\strut{}$400$}}%
      \put(592,2066){\makebox(0,0)[r]{\strut{}$500$}}%
      \put(592,2376){\makebox(0,0)[r]{\strut{}$600$}}%
      \put(592,2687){\makebox(0,0)[r]{\strut{}$700$}}%
      \put(688,352){\makebox(0,0){\strut{}40}}%
      \put(1524,352){\makebox(0,0){\strut{}50}}%
      \put(2360,352){\makebox(0,0){\strut{}60}}%
      \put(3195,352){\makebox(0,0){\strut{}70}}%
      \put(4031,352){\makebox(0,0){\strut{}80}}%
    }%
    \gplgaddtomacro\gplfronttext{%
      \csname LTb\endcsname%
      \put(128,1599){\rotatebox{-270}{\makebox(0,0){\strut{}\# page accesses}}}%
      \put(2359,112){\makebox(0,0){\strut{}Subgroup Size (\%)}}%
      \csname LTb\endcsname%
      \put(3296,2544){\makebox(0,0)[r]{\strut{}MFSNNK-BF (SUM)}}%
      \csname LTb\endcsname%
      \put(3296,2384){\makebox(0,0)[r]{\strut{}MFSNNK-R (SUM)}}%
      \csname LTb\endcsname%
      \put(3296,2224){\makebox(0,0)[r]{\strut{}MFSNNK-BF (MAX)}}%
      \csname LTb\endcsname%
      \put(3296,2064){\makebox(0,0)[r]{\strut{}MFSNNK-R (MAX)}}%
    }%
    \gplbacktext
    \put(0,0){\includegraphics{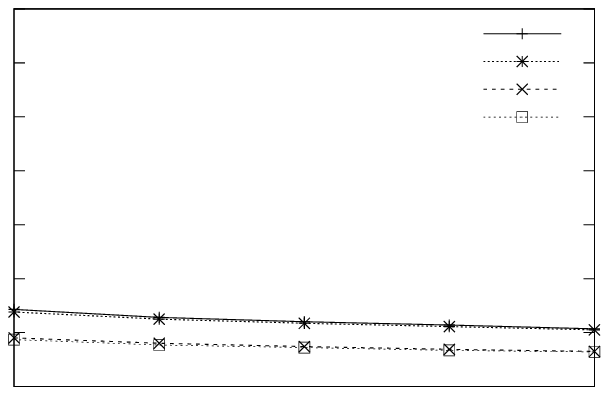}}%
    \gplfronttext
  \end{picture}%
\endgroup

		}
		%\caption{I/O}
		\caption{}
		\label{graph:subgroup-mfsnnkr-F-io}
	\end{subfigure}%
%	\caption{Varying query group size (Yelp)}
%	\label{graph:group-Y}
%\end {figure}

% Varying Minimum Subgroup Size
%\begin {figure}[!ht]
%\setlength{\belowcaptionskip}{-1pt}
%\setlength{\abovecaptionskip}{-1pt}
%	\centering
	\begin{subfigure}[b]{0.5\linewidth}
		\centering
		\setlength{\abovecaptionskip}{2pt}
		\setlength{\belowcaptionskip}{2pt}
		\resizebox{\textwidth}{!}{
		% GNUPLOT: LaTeX picture with Postscript
\begingroup
  \makeatletter
  \providecommand\color[2][]{%
    \GenericError{(gnuplot) \space\space\space\@spaces}{%
      Package color not loaded in conjunction with
      terminal option `colourtext'%
    }{See the gnuplot documentation for explanation.%
    }{Either use 'blacktext' in gnuplot or load the package
      color.sty in LaTeX.}%
    \renewcommand\color[2][]{}%
  }%
  \providecommand\includegraphics[2][]{%
    \GenericError{(gnuplot) \space\space\space\@spaces}{%
      Package graphicx or graphics not loaded%
    }{See the gnuplot documentation for explanation.%
    }{The gnuplot epslatex terminal needs graphicx.sty or graphics.sty.}%
    \renewcommand\includegraphics[2][]{}%
  }%
  \providecommand\rotatebox[2]{#2}%
  \@ifundefined{ifGPcolor}{%
    \newif\ifGPcolor
    \GPcolorfalse
  }{}%
  \@ifundefined{ifGPblacktext}{%
    \newif\ifGPblacktext
    \GPblacktexttrue
  }{}%
  % define a \g@addto@macro without @ in the name:
  \let\gplgaddtomacro\g@addto@macro
  % define empty templates for all commands taking text:
  \gdef\gplbacktext{}%
  \gdef\gplfronttext{}%
  \makeatother
  \ifGPblacktext
    % no textcolor at all
    \def\colorrgb#1{}%
    \def\colorgray#1{}%
  \else
    % gray or color?
    \ifGPcolor
      \def\colorrgb#1{\color[rgb]{#1}}%
      \def\colorgray#1{\color[gray]{#1}}%
      \expandafter\def\csname LTw\endcsname{\color{white}}%
      \expandafter\def\csname LTb\endcsname{\color{black}}%
      \expandafter\def\csname LTa\endcsname{\color{black}}%
      \expandafter\def\csname LT0\endcsname{\color[rgb]{1,0,0}}%
      \expandafter\def\csname LT1\endcsname{\color[rgb]{0,1,0}}%
      \expandafter\def\csname LT2\endcsname{\color[rgb]{0,0,1}}%
      \expandafter\def\csname LT3\endcsname{\color[rgb]{1,0,1}}%
      \expandafter\def\csname LT4\endcsname{\color[rgb]{0,1,1}}%
      \expandafter\def\csname LT5\endcsname{\color[rgb]{1,1,0}}%
      \expandafter\def\csname LT6\endcsname{\color[rgb]{0,0,0}}%
      \expandafter\def\csname LT7\endcsname{\color[rgb]{1,0.3,0}}%
      \expandafter\def\csname LT8\endcsname{\color[rgb]{0.5,0.5,0.5}}%
    \else
      % gray
      \def\colorrgb#1{\color{black}}%
      \def\colorgray#1{\color[gray]{#1}}%
      \expandafter\def\csname LTw\endcsname{\color{white}}%
      \expandafter\def\csname LTb\endcsname{\color{black}}%
      \expandafter\def\csname LTa\endcsname{\color{black}}%
      \expandafter\def\csname LT0\endcsname{\color{black}}%
      \expandafter\def\csname LT1\endcsname{\color{black}}%
      \expandafter\def\csname LT2\endcsname{\color{black}}%
      \expandafter\def\csname LT3\endcsname{\color{black}}%
      \expandafter\def\csname LT4\endcsname{\color{black}}%
      \expandafter\def\csname LT5\endcsname{\color{black}}%
      \expandafter\def\csname LT6\endcsname{\color{black}}%
      \expandafter\def\csname LT7\endcsname{\color{black}}%
      \expandafter\def\csname LT8\endcsname{\color{black}}%
    \fi
  \fi
    \setlength{\unitlength}{0.0500bp}%
    \ifx\gptboxheight\undefined%
      \newlength{\gptboxheight}%
      \newlength{\gptboxwidth}%
      \newsavebox{\gptboxtext}%
    \fi%
    \setlength{\fboxrule}{0.5pt}%
    \setlength{\fboxsep}{1pt}%
\begin{picture}(4320.00,2880.00)%
    \gplgaddtomacro\gplbacktext{%
      \csname LTb\endcsname%
      \put(592,512){\makebox(0,0)[r]{\strut{}$0$}}%
      \put(592,823){\makebox(0,0)[r]{\strut{}$20$}}%
      \put(592,1133){\makebox(0,0)[r]{\strut{}$40$}}%
      \put(592,1444){\makebox(0,0)[r]{\strut{}$60$}}%
      \put(592,1755){\makebox(0,0)[r]{\strut{}$80$}}%
      \put(592,2066){\makebox(0,0)[r]{\strut{}$100$}}%
      \put(592,2376){\makebox(0,0)[r]{\strut{}$120$}}%
      \put(592,2687){\makebox(0,0)[r]{\strut{}$140$}}%
      \put(688,352){\makebox(0,0){\strut{}40}}%
      \put(1524,352){\makebox(0,0){\strut{}50}}%
      \put(2360,352){\makebox(0,0){\strut{}60}}%
      \put(3195,352){\makebox(0,0){\strut{}70}}%
      \put(4031,352){\makebox(0,0){\strut{}80}}%
    }%
    \gplgaddtomacro\gplfronttext{%
      \csname LTb\endcsname%
      \put(128,1599){\rotatebox{-270}{\makebox(0,0){\strut{}running time (ms)}}}%
      \put(2359,112){\makebox(0,0){\strut{}Subgroup Size (\%)}}%
      \csname LTb\endcsname%
      \put(3296,2544){\makebox(0,0)[r]{\strut{}MFSNNK-BF (SUM)}}%
      \csname LTb\endcsname%
      \put(3296,2384){\makebox(0,0)[r]{\strut{}MFSNNK-R (SUM)}}%
      \csname LTb\endcsname%
      \put(3296,2224){\makebox(0,0)[r]{\strut{}MFSNNK-BF (MAX)}}%
      \csname LTb\endcsname%
      \put(3296,2064){\makebox(0,0)[r]{\strut{}MFSNNK-R (MAX)}}%
    }%
    \gplbacktext
    \put(0,0){\includegraphics{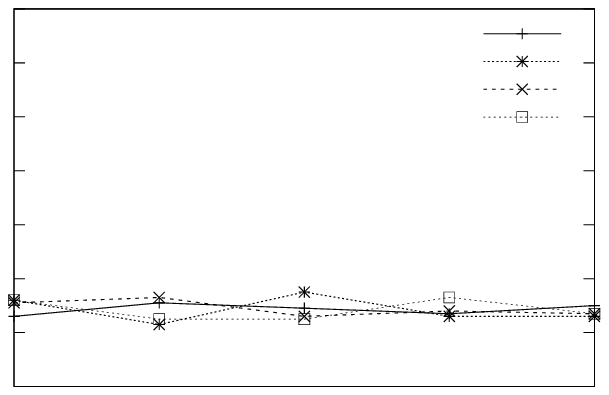}}%
    \gplfronttext
  \end{picture}%
\endgroup

		}
		%\caption{Running time}
		\caption{}
		\label{graph:subgroup-mfsnnkr-Y-run}
	\end{subfigure}%
	\begin{subfigure}[b]{0.5\linewidth}
		\centering
		\setlength{\abovecaptionskip}{2pt}
		\setlength{\belowcaptionskip}{2pt}
		\resizebox{\textwidth}{!}{
			% GNUPLOT: LaTeX picture with Postscript
\begingroup
  \makeatletter
  \providecommand\color[2][]{%
    \GenericError{(gnuplot) \space\space\space\@spaces}{%
      Package color not loaded in conjunction with
      terminal option `colourtext'%
    }{See the gnuplot documentation for explanation.%
    }{Either use 'blacktext' in gnuplot or load the package
      color.sty in LaTeX.}%
    \renewcommand\color[2][]{}%
  }%
  \providecommand\includegraphics[2][]{%
    \GenericError{(gnuplot) \space\space\space\@spaces}{%
      Package graphicx or graphics not loaded%
    }{See the gnuplot documentation for explanation.%
    }{The gnuplot epslatex terminal needs graphicx.sty or graphics.sty.}%
    \renewcommand\includegraphics[2][]{}%
  }%
  \providecommand\rotatebox[2]{#2}%
  \@ifundefined{ifGPcolor}{%
    \newif\ifGPcolor
    \GPcolorfalse
  }{}%
  \@ifundefined{ifGPblacktext}{%
    \newif\ifGPblacktext
    \GPblacktexttrue
  }{}%
  % define a \g@addto@macro without @ in the name:
  \let\gplgaddtomacro\g@addto@macro
  % define empty templates for all commands taking text:
  \gdef\gplbacktext{}%
  \gdef\gplfronttext{}%
  \makeatother
  \ifGPblacktext
    % no textcolor at all
    \def\colorrgb#1{}%
    \def\colorgray#1{}%
  \else
    % gray or color?
    \ifGPcolor
      \def\colorrgb#1{\color[rgb]{#1}}%
      \def\colorgray#1{\color[gray]{#1}}%
      \expandafter\def\csname LTw\endcsname{\color{white}}%
      \expandafter\def\csname LTb\endcsname{\color{black}}%
      \expandafter\def\csname LTa\endcsname{\color{black}}%
      \expandafter\def\csname LT0\endcsname{\color[rgb]{1,0,0}}%
      \expandafter\def\csname LT1\endcsname{\color[rgb]{0,1,0}}%
      \expandafter\def\csname LT2\endcsname{\color[rgb]{0,0,1}}%
      \expandafter\def\csname LT3\endcsname{\color[rgb]{1,0,1}}%
      \expandafter\def\csname LT4\endcsname{\color[rgb]{0,1,1}}%
      \expandafter\def\csname LT5\endcsname{\color[rgb]{1,1,0}}%
      \expandafter\def\csname LT6\endcsname{\color[rgb]{0,0,0}}%
      \expandafter\def\csname LT7\endcsname{\color[rgb]{1,0.3,0}}%
      \expandafter\def\csname LT8\endcsname{\color[rgb]{0.5,0.5,0.5}}%
    \else
      % gray
      \def\colorrgb#1{\color{black}}%
      \def\colorgray#1{\color[gray]{#1}}%
      \expandafter\def\csname LTw\endcsname{\color{white}}%
      \expandafter\def\csname LTb\endcsname{\color{black}}%
      \expandafter\def\csname LTa\endcsname{\color{black}}%
      \expandafter\def\csname LT0\endcsname{\color{black}}%
      \expandafter\def\csname LT1\endcsname{\color{black}}%
      \expandafter\def\csname LT2\endcsname{\color{black}}%
      \expandafter\def\csname LT3\endcsname{\color{black}}%
      \expandafter\def\csname LT4\endcsname{\color{black}}%
      \expandafter\def\csname LT5\endcsname{\color{black}}%
      \expandafter\def\csname LT6\endcsname{\color{black}}%
      \expandafter\def\csname LT7\endcsname{\color{black}}%
      \expandafter\def\csname LT8\endcsname{\color{black}}%
    \fi
  \fi
    \setlength{\unitlength}{0.0500bp}%
    \ifx\gptboxheight\undefined%
      \newlength{\gptboxheight}%
      \newlength{\gptboxwidth}%
      \newsavebox{\gptboxtext}%
    \fi%
    \setlength{\fboxrule}{0.5pt}%
    \setlength{\fboxsep}{1pt}%
\begin{picture}(4320.00,2880.00)%
    \gplgaddtomacro\gplbacktext{%
      \csname LTb\endcsname%
      \put(592,512){\makebox(0,0)[r]{\strut{}$0$}}%
      \put(592,875){\makebox(0,0)[r]{\strut{}$20$}}%
      \put(592,1237){\makebox(0,0)[r]{\strut{}$40$}}%
      \put(592,1600){\makebox(0,0)[r]{\strut{}$60$}}%
      \put(592,1962){\makebox(0,0)[r]{\strut{}$80$}}%
      \put(592,2325){\makebox(0,0)[r]{\strut{}$100$}}%
      \put(592,2687){\makebox(0,0)[r]{\strut{}$120$}}%
      \put(688,352){\makebox(0,0){\strut{}40}}%
      \put(1524,352){\makebox(0,0){\strut{}50}}%
      \put(2360,352){\makebox(0,0){\strut{}60}}%
      \put(3195,352){\makebox(0,0){\strut{}70}}%
      \put(4031,352){\makebox(0,0){\strut{}80}}%
    }%
    \gplgaddtomacro\gplfronttext{%
      \csname LTb\endcsname%
      \put(128,1599){\rotatebox{-270}{\makebox(0,0){\strut{}\# page accesses}}}%
      \put(2359,112){\makebox(0,0){\strut{}Subgroup Size (\%)}}%
      \csname LTb\endcsname%
      \put(3296,2544){\makebox(0,0)[r]{\strut{}MFSNNK-BF (SUM)}}%
      \csname LTb\endcsname%
      \put(3296,2384){\makebox(0,0)[r]{\strut{}MFSNNK-R (SUM)}}%
      \csname LTb\endcsname%
      \put(3296,2224){\makebox(0,0)[r]{\strut{}MFSNNK-BF (MAX)}}%
      \csname LTb\endcsname%
      \put(3296,2064){\makebox(0,0)[r]{\strut{}MFSNNK-R (MAX)}}%
    }%
    \gplbacktext
    \put(0,0){\includegraphics{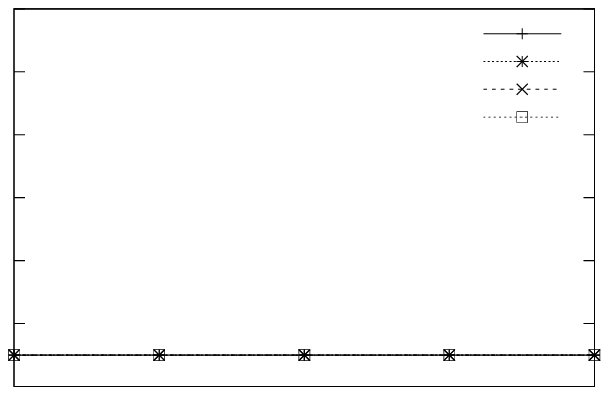}}%
    \gplfronttext
  \end{picture}%
\endgroup

		}
		%\caption{I/O}
		\caption{}
		\label{graph:subgroup-mfsnnkr-Y-io}
	\end{subfigure}%
		\caption{The effect of varying subgroup size for Flickr (a-b) and Yelp (c-d) with and without using Heuristic~\ref{heuristic:mfsnnk2} in running time and I/O}
	\label{graph:subgroup-mfsnnker}
%\vspace{-1mm}
\end {figure}

\textbf{Effect of Heuristic~\ref{heuristic:mfsnnk2}.}
We have also implemented Heuristic~\ref{heuristic:mfsnnk2} for the MFSNNK query. In Figure~\ref{graph:subgroup-mfsnnker}, we show the performance of MFSNNK with and without using Heuristic~\ref{heuristic:mfsnnk2}, denoted by MFSNNK-R and MFSNNK-BF, respectively. We can see that, when different values of $m$ or different data sets are used, the algorithm may perform better or worse with Heuristic~\ref{heuristic:mfsnnk2}. Particularly, on the Flikr data set, the algorithm with the pruning heuristic works better when $40\%n < m < 80\%n$, and worse when $m \le 40\%n$ or $m \ge 80\%n$ (shown in Figure~\ref{graph:subgroup-mfsnnker} (a-b)). On the Yelp data set, the algorithm performance fluctuates more but the algorithm still performs better with the pruning heuristic in about half of the cases tested (shown in Figure~\ref{graph:subgroup-mfsnnker} (c-d)). This is expected since the heuristic sacrifices the tightness of the pruning bound for a more efficient computation of the pruning bound, as discussed in Section~\ref{subsection:fsnnk-extended}. Depending on different data sets and/or different values of $m$, this sacrifice may or may not be worthy. We would like to argue that, however, since the data sets are usually pre-known, we may empirically pre-test the heuristic performance under a set of different values of $m$ and $n$, and only activate the heuristic at query time if the queried group size falls in a pre-test range where the heuristic shows better performance.

% Varying Group Size
\begin {figure}[!ht]
\setlength{\belowcaptionskip}{4pt}
\setlength{\abovecaptionskip}{4pt}
	\centering
	\begin{subfigure}[b]{0.5\linewidth}
		\centering
		\setlength{\abovecaptionskip}{2pt}
		\setlength{\belowcaptionskip}{2pt}
		\resizebox{\textwidth}{!}{
			% GNUPLOT: LaTeX picture with Postscript
\begingroup
  \makeatletter
  \providecommand\color[2][]{%
    \GenericError{(gnuplot) \space\space\space\@spaces}{%
      Package color not loaded in conjunction with
      terminal option `colourtext'%
    }{See the gnuplot documentation for explanation.%
    }{Either use 'blacktext' in gnuplot or load the package
      color.sty in LaTeX.}%
    \renewcommand\color[2][]{}%
  }%
  \providecommand\includegraphics[2][]{%
    \GenericError{(gnuplot) \space\space\space\@spaces}{%
      Package graphicx or graphics not loaded%
    }{See the gnuplot documentation for explanation.%
    }{The gnuplot epslatex terminal needs graphicx.sty or graphics.sty.}%
    \renewcommand\includegraphics[2][]{}%
  }%
  \providecommand\rotatebox[2]{#2}%
  \@ifundefined{ifGPcolor}{%
    \newif\ifGPcolor
    \GPcolorfalse
  }{}%
  \@ifundefined{ifGPblacktext}{%
    \newif\ifGPblacktext
    \GPblacktexttrue
  }{}%
  % define a \g@addto@macro without @ in the name:
  \let\gplgaddtomacro\g@addto@macro
  % define empty templates for all commands taking text:
  \gdef\gplbacktext{}%
  \gdef\gplfronttext{}%
  \makeatother
  \ifGPblacktext
    % no textcolor at all
    \def\colorrgb#1{}%
    \def\colorgray#1{}%
  \else
    % gray or color?
    \ifGPcolor
      \def\colorrgb#1{\color[rgb]{#1}}%
      \def\colorgray#1{\color[gray]{#1}}%
      \expandafter\def\csname LTw\endcsname{\color{white}}%
      \expandafter\def\csname LTb\endcsname{\color{black}}%
      \expandafter\def\csname LTa\endcsname{\color{black}}%
      \expandafter\def\csname LT0\endcsname{\color[rgb]{1,0,0}}%
      \expandafter\def\csname LT1\endcsname{\color[rgb]{0,1,0}}%
      \expandafter\def\csname LT2\endcsname{\color[rgb]{0,0,1}}%
      \expandafter\def\csname LT3\endcsname{\color[rgb]{1,0,1}}%
      \expandafter\def\csname LT4\endcsname{\color[rgb]{0,1,1}}%
      \expandafter\def\csname LT5\endcsname{\color[rgb]{1,1,0}}%
      \expandafter\def\csname LT6\endcsname{\color[rgb]{0,0,0}}%
      \expandafter\def\csname LT7\endcsname{\color[rgb]{1,0.3,0}}%
      \expandafter\def\csname LT8\endcsname{\color[rgb]{0.5,0.5,0.5}}%
    \else
      % gray
      \def\colorrgb#1{\color{black}}%
      \def\colorgray#1{\color[gray]{#1}}%
      \expandafter\def\csname LTw\endcsname{\color{white}}%
      \expandafter\def\csname LTb\endcsname{\color{black}}%
      \expandafter\def\csname LTa\endcsname{\color{black}}%
      \expandafter\def\csname LT0\endcsname{\color{black}}%
      \expandafter\def\csname LT1\endcsname{\color{black}}%
      \expandafter\def\csname LT2\endcsname{\color{black}}%
      \expandafter\def\csname LT3\endcsname{\color{black}}%
      \expandafter\def\csname LT4\endcsname{\color{black}}%
      \expandafter\def\csname LT5\endcsname{\color{black}}%
      \expandafter\def\csname LT6\endcsname{\color{black}}%
      \expandafter\def\csname LT7\endcsname{\color{black}}%
      \expandafter\def\csname LT8\endcsname{\color{black}}%
    \fi
  \fi
    \setlength{\unitlength}{0.0500bp}%
    \ifx\gptboxheight\undefined%
      \newlength{\gptboxheight}%
      \newlength{\gptboxwidth}%
      \newsavebox{\gptboxtext}%
    \fi%
    \setlength{\fboxrule}{0.5pt}%
    \setlength{\fboxsep}{1pt}%
\begin{picture}(4320.00,2880.00)%
    \gplgaddtomacro\gplbacktext{%
      \csname LTb\endcsname%
      \put(592,512){\makebox(0,0)[r]{\strut{}$0$}}%
      \put(592,754){\makebox(0,0)[r]{\strut{}$20$}}%
      \put(592,995){\makebox(0,0)[r]{\strut{}$40$}}%
      \put(592,1237){\makebox(0,0)[r]{\strut{}$60$}}%
      \put(592,1479){\makebox(0,0)[r]{\strut{}$80$}}%
      \put(592,1720){\makebox(0,0)[r]{\strut{}$100$}}%
      \put(592,1962){\makebox(0,0)[r]{\strut{}$120$}}%
      \put(592,2204){\makebox(0,0)[r]{\strut{}$140$}}%
      \put(592,2445){\makebox(0,0)[r]{\strut{}$160$}}%
      \put(592,2687){\makebox(0,0)[r]{\strut{}$180$}}%
      \put(688,352){\makebox(0,0){\strut{}10}}%
      \put(1166,352){\makebox(0,0){\strut{}20}}%
      \put(2121,352){\makebox(0,0){\strut{}40}}%
      \put(3076,352){\makebox(0,0){\strut{}60}}%
      \put(4031,352){\makebox(0,0){\strut{}80}}%
    }%
    \gplgaddtomacro\gplfronttext{%
      \csname LTb\endcsname%
      \put(128,1599){\rotatebox{-270}{\makebox(0,0){\strut{}running time (ms)}}}%
      \put(2359,112){\makebox(0,0){\strut{}Group Size}}%
      \csname LTb\endcsname%
      \put(3296,2544){\makebox(0,0)[r]{\strut{}GNNK-BB (SUM)}}%
      \csname LTb\endcsname%
      \put(3296,2384){\makebox(0,0)[r]{\strut{}GNNK-BF (SUM)}}%
      \csname LTb\endcsname%
      \put(3296,2224){\makebox(0,0)[r]{\strut{}GNNK-BB (MAX)}}%
      \csname LTb\endcsname%
      \put(3296,2064){\makebox(0,0)[r]{\strut{}GNNK-BF (MAX)}}%
    }%
    \gplbacktext
    \put(0,0){\includegraphics{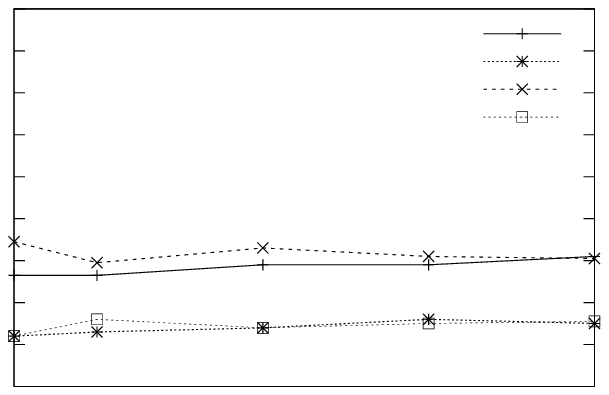}}%
    \gplfronttext
  \end{picture}%
\endgroup

		}
		%\caption{Running time}
		\caption{}
		\label{graph:group-Y-run}
	\end{subfigure}%
	\begin{subfigure}[b]{0.5\linewidth}
		\centering
		\setlength{\abovecaptionskip}{2pt}
		\setlength{\belowcaptionskip}{2pt}
		\resizebox{\textwidth}{!}{
			% GNUPLOT: LaTeX picture with Postscript
\begingroup
  \makeatletter
  \providecommand\color[2][]{%
    \GenericError{(gnuplot) \space\space\space\@spaces}{%
      Package color not loaded in conjunction with
      terminal option `colourtext'%
    }{See the gnuplot documentation for explanation.%
    }{Either use 'blacktext' in gnuplot or load the package
      color.sty in LaTeX.}%
    \renewcommand\color[2][]{}%
  }%
  \providecommand\includegraphics[2][]{%
    \GenericError{(gnuplot) \space\space\space\@spaces}{%
      Package graphicx or graphics not loaded%
    }{See the gnuplot documentation for explanation.%
    }{The gnuplot epslatex terminal needs graphicx.sty or graphics.sty.}%
    \renewcommand\includegraphics[2][]{}%
  }%
  \providecommand\rotatebox[2]{#2}%
  \@ifundefined{ifGPcolor}{%
    \newif\ifGPcolor
    \GPcolorfalse
  }{}%
  \@ifundefined{ifGPblacktext}{%
    \newif\ifGPblacktext
    \GPblacktexttrue
  }{}%
  % define a \g@addto@macro without @ in the name:
  \let\gplgaddtomacro\g@addto@macro
  % define empty templates for all commands taking text:
  \gdef\gplbacktext{}%
  \gdef\gplfronttext{}%
  \makeatother
  \ifGPblacktext
    % no textcolor at all
    \def\colorrgb#1{}%
    \def\colorgray#1{}%
  \else
    % gray or color?
    \ifGPcolor
      \def\colorrgb#1{\color[rgb]{#1}}%
      \def\colorgray#1{\color[gray]{#1}}%
      \expandafter\def\csname LTw\endcsname{\color{white}}%
      \expandafter\def\csname LTb\endcsname{\color{black}}%
      \expandafter\def\csname LTa\endcsname{\color{black}}%
      \expandafter\def\csname LT0\endcsname{\color[rgb]{1,0,0}}%
      \expandafter\def\csname LT1\endcsname{\color[rgb]{0,1,0}}%
      \expandafter\def\csname LT2\endcsname{\color[rgb]{0,0,1}}%
      \expandafter\def\csname LT3\endcsname{\color[rgb]{1,0,1}}%
      \expandafter\def\csname LT4\endcsname{\color[rgb]{0,1,1}}%
      \expandafter\def\csname LT5\endcsname{\color[rgb]{1,1,0}}%
      \expandafter\def\csname LT6\endcsname{\color[rgb]{0,0,0}}%
      \expandafter\def\csname LT7\endcsname{\color[rgb]{1,0.3,0}}%
      \expandafter\def\csname LT8\endcsname{\color[rgb]{0.5,0.5,0.5}}%
    \else
      % gray
      \def\colorrgb#1{\color{black}}%
      \def\colorgray#1{\color[gray]{#1}}%
      \expandafter\def\csname LTw\endcsname{\color{white}}%
      \expandafter\def\csname LTb\endcsname{\color{black}}%
      \expandafter\def\csname LTa\endcsname{\color{black}}%
      \expandafter\def\csname LT0\endcsname{\color{black}}%
      \expandafter\def\csname LT1\endcsname{\color{black}}%
      \expandafter\def\csname LT2\endcsname{\color{black}}%
      \expandafter\def\csname LT3\endcsname{\color{black}}%
      \expandafter\def\csname LT4\endcsname{\color{black}}%
      \expandafter\def\csname LT5\endcsname{\color{black}}%
      \expandafter\def\csname LT6\endcsname{\color{black}}%
      \expandafter\def\csname LT7\endcsname{\color{black}}%
      \expandafter\def\csname LT8\endcsname{\color{black}}%
    \fi
  \fi
    \setlength{\unitlength}{0.0500bp}%
    \ifx\gptboxheight\undefined%
      \newlength{\gptboxheight}%
      \newlength{\gptboxwidth}%
      \newsavebox{\gptboxtext}%
    \fi%
    \setlength{\fboxrule}{0.5pt}%
    \setlength{\fboxsep}{1pt}%
\begin{picture}(4320.00,2880.00)%
    \gplgaddtomacro\gplbacktext{%
      \csname LTb\endcsname%
      \put(592,512){\makebox(0,0)[r]{\strut{}$0$}}%
      \put(592,784){\makebox(0,0)[r]{\strut{}$20$}}%
      \put(592,1056){\makebox(0,0)[r]{\strut{}$40$}}%
      \put(592,1328){\makebox(0,0)[r]{\strut{}$60$}}%
      \put(592,1600){\makebox(0,0)[r]{\strut{}$80$}}%
      \put(592,1871){\makebox(0,0)[r]{\strut{}$100$}}%
      \put(592,2143){\makebox(0,0)[r]{\strut{}$120$}}%
      \put(592,2415){\makebox(0,0)[r]{\strut{}$140$}}%
      \put(592,2687){\makebox(0,0)[r]{\strut{}$160$}}%
      \put(688,352){\makebox(0,0){\strut{}10}}%
      \put(1166,352){\makebox(0,0){\strut{}20}}%
      \put(2121,352){\makebox(0,0){\strut{}40}}%
      \put(3076,352){\makebox(0,0){\strut{}60}}%
      \put(4031,352){\makebox(0,0){\strut{}80}}%
    }%
    \gplgaddtomacro\gplfronttext{%
      \csname LTb\endcsname%
      \put(128,1599){\rotatebox{-270}{\makebox(0,0){\strut{}\# page accesses}}}%
      \put(2359,112){\makebox(0,0){\strut{}Group Size}}%
      \csname LTb\endcsname%
      \put(3296,2544){\makebox(0,0)[r]{\strut{}GNNK-BB (SUM)}}%
      \csname LTb\endcsname%
      \put(3296,2384){\makebox(0,0)[r]{\strut{}GNNK-BF (SUM)}}%
      \csname LTb\endcsname%
      \put(3296,2224){\makebox(0,0)[r]{\strut{}GNNK-BB (MAX)}}%
      \csname LTb\endcsname%
      \put(3296,2064){\makebox(0,0)[r]{\strut{}GNNK-BF (MAX)}}%
    }%
    \gplbacktext
    \put(0,0){\includegraphics{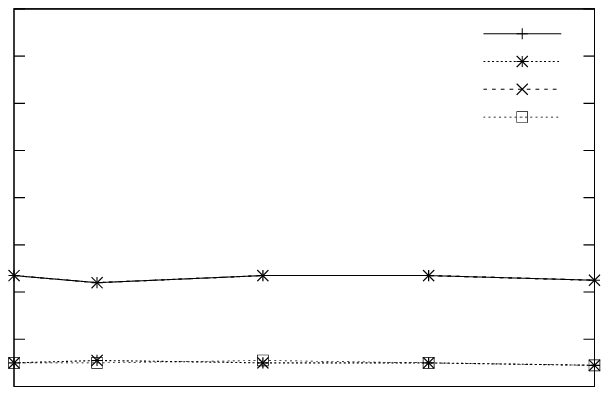}}%
    \gplfronttext
  \end{picture}%
\endgroup

		}
		%\caption{I/O}
		\caption{}
		\label{graph:group-Y-io}
	\end{subfigure}%
%	\caption{Varying query group size (Yelp)}
%	\label{graph:group-Y}
%\end {figure}

% Varying Minimum Subgroup Size
%\begin {figure}[!ht]
%\setlength{\belowcaptionskip}{-1pt}
%\setlength{\abovecaptionskip}{-1pt}
%	\centering
	\begin{subfigure}[b]{0.5\linewidth}
		\centering
		\setlength{\abovecaptionskip}{2pt}
		\setlength{\belowcaptionskip}{2pt}
		\resizebox{\textwidth}{!}{
		% GNUPLOT: LaTeX picture with Postscript
\begingroup
  \makeatletter
  \providecommand\color[2][]{%
    \GenericError{(gnuplot) \space\space\space\@spaces}{%
      Package color not loaded in conjunction with
      terminal option `colourtext'%
    }{See the gnuplot documentation for explanation.%
    }{Either use 'blacktext' in gnuplot or load the package
      color.sty in LaTeX.}%
    \renewcommand\color[2][]{}%
  }%
  \providecommand\includegraphics[2][]{%
    \GenericError{(gnuplot) \space\space\space\@spaces}{%
      Package graphicx or graphics not loaded%
    }{See the gnuplot documentation for explanation.%
    }{The gnuplot epslatex terminal needs graphicx.sty or graphics.sty.}%
    \renewcommand\includegraphics[2][]{}%
  }%
  \providecommand\rotatebox[2]{#2}%
  \@ifundefined{ifGPcolor}{%
    \newif\ifGPcolor
    \GPcolorfalse
  }{}%
  \@ifundefined{ifGPblacktext}{%
    \newif\ifGPblacktext
    \GPblacktexttrue
  }{}%
  % define a \g@addto@macro without @ in the name:
  \let\gplgaddtomacro\g@addto@macro
  % define empty templates for all commands taking text:
  \gdef\gplbacktext{}%
  \gdef\gplfronttext{}%
  \makeatother
  \ifGPblacktext
    % no textcolor at all
    \def\colorrgb#1{}%
    \def\colorgray#1{}%
  \else
    % gray or color?
    \ifGPcolor
      \def\colorrgb#1{\color[rgb]{#1}}%
      \def\colorgray#1{\color[gray]{#1}}%
      \expandafter\def\csname LTw\endcsname{\color{white}}%
      \expandafter\def\csname LTb\endcsname{\color{black}}%
      \expandafter\def\csname LTa\endcsname{\color{black}}%
      \expandafter\def\csname LT0\endcsname{\color[rgb]{1,0,0}}%
      \expandafter\def\csname LT1\endcsname{\color[rgb]{0,1,0}}%
      \expandafter\def\csname LT2\endcsname{\color[rgb]{0,0,1}}%
      \expandafter\def\csname LT3\endcsname{\color[rgb]{1,0,1}}%
      \expandafter\def\csname LT4\endcsname{\color[rgb]{0,1,1}}%
      \expandafter\def\csname LT5\endcsname{\color[rgb]{1,1,0}}%
      \expandafter\def\csname LT6\endcsname{\color[rgb]{0,0,0}}%
      \expandafter\def\csname LT7\endcsname{\color[rgb]{1,0.3,0}}%
      \expandafter\def\csname LT8\endcsname{\color[rgb]{0.5,0.5,0.5}}%
    \else
      % gray
      \def\colorrgb#1{\color{black}}%
      \def\colorgray#1{\color[gray]{#1}}%
      \expandafter\def\csname LTw\endcsname{\color{white}}%
      \expandafter\def\csname LTb\endcsname{\color{black}}%
      \expandafter\def\csname LTa\endcsname{\color{black}}%
      \expandafter\def\csname LT0\endcsname{\color{black}}%
      \expandafter\def\csname LT1\endcsname{\color{black}}%
      \expandafter\def\csname LT2\endcsname{\color{black}}%
      \expandafter\def\csname LT3\endcsname{\color{black}}%
      \expandafter\def\csname LT4\endcsname{\color{black}}%
      \expandafter\def\csname LT5\endcsname{\color{black}}%
      \expandafter\def\csname LT6\endcsname{\color{black}}%
      \expandafter\def\csname LT7\endcsname{\color{black}}%
      \expandafter\def\csname LT8\endcsname{\color{black}}%
    \fi
  \fi
    \setlength{\unitlength}{0.0500bp}%
    \ifx\gptboxheight\undefined%
      \newlength{\gptboxheight}%
      \newlength{\gptboxwidth}%
      \newsavebox{\gptboxtext}%
    \fi%
    \setlength{\fboxrule}{0.5pt}%
    \setlength{\fboxsep}{1pt}%
\begin{picture}(4320.00,2880.00)%
    \gplgaddtomacro\gplbacktext{%
      \csname LTb\endcsname%
      \put(592,512){\makebox(0,0)[r]{\strut{}$0$}}%
      \put(592,754){\makebox(0,0)[r]{\strut{}$20$}}%
      \put(592,995){\makebox(0,0)[r]{\strut{}$40$}}%
      \put(592,1237){\makebox(0,0)[r]{\strut{}$60$}}%
      \put(592,1479){\makebox(0,0)[r]{\strut{}$80$}}%
      \put(592,1720){\makebox(0,0)[r]{\strut{}$100$}}%
      \put(592,1962){\makebox(0,0)[r]{\strut{}$120$}}%
      \put(592,2204){\makebox(0,0)[r]{\strut{}$140$}}%
      \put(592,2445){\makebox(0,0)[r]{\strut{}$160$}}%
      \put(592,2687){\makebox(0,0)[r]{\strut{}$180$}}%
      \put(688,352){\makebox(0,0){\strut{}40}}%
      \put(1524,352){\makebox(0,0){\strut{}50}}%
      \put(2360,352){\makebox(0,0){\strut{}60}}%
      \put(3195,352){\makebox(0,0){\strut{}70}}%
      \put(4031,352){\makebox(0,0){\strut{}80}}%
    }%
    \gplgaddtomacro\gplfronttext{%
      \csname LTb\endcsname%
      \put(128,1599){\rotatebox{-270}{\makebox(0,0){\strut{}running time (ms)}}}%
      \put(2359,112){\makebox(0,0){\strut{}Subgroup Size (\%)}}%
      \csname LTb\endcsname%
      \put(3296,2544){\makebox(0,0)[r]{\strut{}MFSNNK-N (SUM)}}%
      \csname LTb\endcsname%
      \put(3296,2384){\makebox(0,0)[r]{\strut{}MFSNNK-BF (SUM)}}%
      \csname LTb\endcsname%
      \put(3296,2224){\makebox(0,0)[r]{\strut{}MFSNNK-N (MAX)}}%
      \csname LTb\endcsname%
      \put(3296,2064){\makebox(0,0)[r]{\strut{}MFSNNK-BF (MAX)}}%
    }%
    \gplbacktext
    \put(0,0){\includegraphics{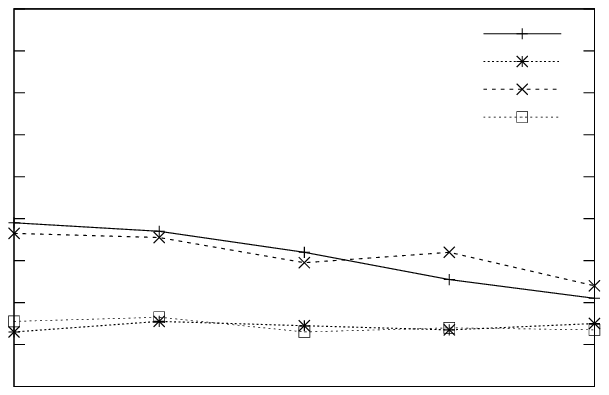}}%
    \gplfronttext
  \end{picture}%
\endgroup

		}
		%\caption{Running time}
		\caption{}
		\label{graph:subgroup-mfsnnk-Y-run}
	\end{subfigure}%
	\begin{subfigure}[b]{0.5\linewidth}
		\centering
		\setlength{\abovecaptionskip}{2pt}
		\setlength{\belowcaptionskip}{2pt}
		\resizebox{\textwidth}{!}{
			% GNUPLOT: LaTeX picture with Postscript
\begingroup
  \makeatletter
  \providecommand\color[2][]{%
    \GenericError{(gnuplot) \space\space\space\@spaces}{%
      Package color not loaded in conjunction with
      terminal option `colourtext'%
    }{See the gnuplot documentation for explanation.%
    }{Either use 'blacktext' in gnuplot or load the package
      color.sty in LaTeX.}%
    \renewcommand\color[2][]{}%
  }%
  \providecommand\includegraphics[2][]{%
    \GenericError{(gnuplot) \space\space\space\@spaces}{%
      Package graphicx or graphics not loaded%
    }{See the gnuplot documentation for explanation.%
    }{The gnuplot epslatex terminal needs graphicx.sty or graphics.sty.}%
    \renewcommand\includegraphics[2][]{}%
  }%
  \providecommand\rotatebox[2]{#2}%
  \@ifundefined{ifGPcolor}{%
    \newif\ifGPcolor
    \GPcolorfalse
  }{}%
  \@ifundefined{ifGPblacktext}{%
    \newif\ifGPblacktext
    \GPblacktexttrue
  }{}%
  % define a \g@addto@macro without @ in the name:
  \let\gplgaddtomacro\g@addto@macro
  % define empty templates for all commands taking text:
  \gdef\gplbacktext{}%
  \gdef\gplfronttext{}%
  \makeatother
  \ifGPblacktext
    % no textcolor at all
    \def\colorrgb#1{}%
    \def\colorgray#1{}%
  \else
    % gray or color?
    \ifGPcolor
      \def\colorrgb#1{\color[rgb]{#1}}%
      \def\colorgray#1{\color[gray]{#1}}%
      \expandafter\def\csname LTw\endcsname{\color{white}}%
      \expandafter\def\csname LTb\endcsname{\color{black}}%
      \expandafter\def\csname LTa\endcsname{\color{black}}%
      \expandafter\def\csname LT0\endcsname{\color[rgb]{1,0,0}}%
      \expandafter\def\csname LT1\endcsname{\color[rgb]{0,1,0}}%
      \expandafter\def\csname LT2\endcsname{\color[rgb]{0,0,1}}%
      \expandafter\def\csname LT3\endcsname{\color[rgb]{1,0,1}}%
      \expandafter\def\csname LT4\endcsname{\color[rgb]{0,1,1}}%
      \expandafter\def\csname LT5\endcsname{\color[rgb]{1,1,0}}%
      \expandafter\def\csname LT6\endcsname{\color[rgb]{0,0,0}}%
      \expandafter\def\csname LT7\endcsname{\color[rgb]{1,0.3,0}}%
      \expandafter\def\csname LT8\endcsname{\color[rgb]{0.5,0.5,0.5}}%
    \else
      % gray
      \def\colorrgb#1{\color{black}}%
      \def\colorgray#1{\color[gray]{#1}}%
      \expandafter\def\csname LTw\endcsname{\color{white}}%
      \expandafter\def\csname LTb\endcsname{\color{black}}%
      \expandafter\def\csname LTa\endcsname{\color{black}}%
      \expandafter\def\csname LT0\endcsname{\color{black}}%
      \expandafter\def\csname LT1\endcsname{\color{black}}%
      \expandafter\def\csname LT2\endcsname{\color{black}}%
      \expandafter\def\csname LT3\endcsname{\color{black}}%
      \expandafter\def\csname LT4\endcsname{\color{black}}%
      \expandafter\def\csname LT5\endcsname{\color{black}}%
      \expandafter\def\csname LT6\endcsname{\color{black}}%
      \expandafter\def\csname LT7\endcsname{\color{black}}%
      \expandafter\def\csname LT8\endcsname{\color{black}}%
    \fi
  \fi
    \setlength{\unitlength}{0.0500bp}%
    \ifx\gptboxheight\undefined%
      \newlength{\gptboxheight}%
      \newlength{\gptboxwidth}%
      \newsavebox{\gptboxtext}%
    \fi%
    \setlength{\fboxrule}{0.5pt}%
    \setlength{\fboxsep}{1pt}%
\begin{picture}(4320.00,2880.00)%
    \gplgaddtomacro\gplbacktext{%
      \csname LTb\endcsname%
      \put(592,512){\makebox(0,0)[r]{\strut{}$0$}}%
      \put(592,754){\makebox(0,0)[r]{\strut{}$20$}}%
      \put(592,995){\makebox(0,0)[r]{\strut{}$40$}}%
      \put(592,1237){\makebox(0,0)[r]{\strut{}$60$}}%
      \put(592,1479){\makebox(0,0)[r]{\strut{}$80$}}%
      \put(592,1720){\makebox(0,0)[r]{\strut{}$100$}}%
      \put(592,1962){\makebox(0,0)[r]{\strut{}$120$}}%
      \put(592,2204){\makebox(0,0)[r]{\strut{}$140$}}%
      \put(592,2445){\makebox(0,0)[r]{\strut{}$160$}}%
      \put(592,2687){\makebox(0,0)[r]{\strut{}$180$}}%
      \put(688,352){\makebox(0,0){\strut{}40}}%
      \put(1524,352){\makebox(0,0){\strut{}50}}%
      \put(2360,352){\makebox(0,0){\strut{}60}}%
      \put(3195,352){\makebox(0,0){\strut{}70}}%
      \put(4031,352){\makebox(0,0){\strut{}80}}%
    }%
    \gplgaddtomacro\gplfronttext{%
      \csname LTb\endcsname%
      \put(128,1599){\rotatebox{-270}{\makebox(0,0){\strut{}\# page accesses}}}%
      \put(2359,112){\makebox(0,0){\strut{}Subgroup Size (\%)}}%
      \csname LTb\endcsname%
      \put(3296,2544){\makebox(0,0)[r]{\strut{}MFSNNK-N (SUM)}}%
      \csname LTb\endcsname%
      \put(3296,2384){\makebox(0,0)[r]{\strut{}MFSNNK-BF (SUM)}}%
      \csname LTb\endcsname%
      \put(3296,2224){\makebox(0,0)[r]{\strut{}MFSNNK-N (MAX)}}%
      \csname LTb\endcsname%
      \put(3296,2064){\makebox(0,0)[r]{\strut{}MFSNNK-BF (MAX)}}%
    }%
    \gplbacktext
    \put(0,0){\includegraphics{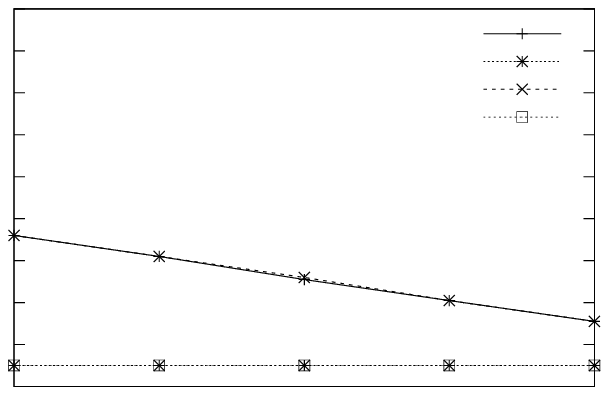}}%
    \gplfronttext
  \end{picture}%
\endgroup

		}
		%\caption{I/O}
		\caption{}
		\label{graph:subgroup-mfsnnk-Y-io}
	\end{subfigure}%
		\caption{The effect of varying query group size (a-b) and minimum subgroup size (c-d) in running time and I/O}
	\label{graph:yelp}
%\vspace{-1mm}
\end {figure}

\subsection{Experiments on Yelp dataset}
We have run the same set of experiments as mentioned above on the Yelp dataset. All of our experimental results show similar trends in both datasets. Due to page limitations, we only present the experimental results for varying group size for GNNK queries and minimum subgroup size for MFSNNK queries with Yelp dataset in Figure~\ref{graph:yelp} (a-b) and Figure~\ref{graph:yelp} (c-d), respectively.

\section{Conclusion} \label{sec:conclusion}

We presented a new type of group spatial keyword query suitable for a collaborative environment. This query aims to find the best POI that minimizes the aggregate distance and maximizes the text relevancy for a group of users. We have studied three instances of this query, which return (i) the best POI for the whole group, (ii) 
the optimal subgroup with the best POI given a subgroup size $m$, and (iii) the optimal subgroups and the corresponding best POIs of different subgroup sizes in $m,m+1,...,n$. In all these queries, our proposed best-first approach runs approximately 4 times faster (on average) than the branch and bound approach for both real datasets. 

This study brings a number of future studies. 
For example, a study that allows users to set the value of $\alpha$ to reflect their preference of spatial proximity over textual relevance would make the query  more user friendly. 
Also, extending the algorithms to road networks would further improve their practicality.

\bibliographystyle{ACM-Reference-Format}
\bibliography{bibliography} 

\end{document}